\documentclass[journal,twoside]{IEEEtran}
\usepackage{amsmath,amsfonts}
\usepackage{algorithmic}
\usepackage{algorithm}
\usepackage{array}
\usepackage[caption=false,font=normalsize,labelfont=sf,textfont=sf]{subfig}
\usepackage{textcomp}
\usepackage{stfloats}
\usepackage{array}
\usepackage{makecell}
\usepackage{url}
\usepackage{verbatim}
\usepackage{graphicx}
\usepackage{cite}
\usepackage{orcidlink}
\usepackage{amsthm}
\usepackage{amssymb}
\usepackage{xcolor}
\usepackage{soul}
\theoremstyle{plain}
\newtheorem{theorem}{Theorem}
\newtheorem{lemma}{Lemma}
\newtheorem{corollary}{Corollary}
\newtheorem{proposition}{Proposition}

\theoremstyle{definition}
\newtheorem{assumption}{Assumption} 
\newtheorem{definition}{Definition}

\newtheorem{problem}{Problem}

\theoremstyle{remark}
\newtheorem{remark}{Remark}

\begin{document}

\title{Optimal Calibration of Quantum Network Links}

\author{Vinay Kumar~\orcidlink{0000-0002-4635-3237}, 
Claudio Cicconetti~\orcidlink{0000-0003-4503-4223}, 
Marco Conti~\orcidlink{0000-0003-4097-4064}, 
and Andrea Passarella~\orcidlink{0000-0002-1694-612X}%
\thanks{Vinay Kumar is with the Department of Information Engineering, University of Pisa, and also with the Institute for Informatics and Telematics (IIT), National Research Council (CNR), Pisa, Italy (e-mail: vinay.kumar@phd.unipi.it).}%
\thanks{Claudio Cicconetti, Marco Conti, and Andrea Passarella are with the Institute for Informatics and Telematics (IIT), National Research Council (CNR), Pisa, Italy (e-mails: \{c.cicconetti, marco.conti, a.passarella\}@iit.cnr.it).}%
}

\maketitle

\begin{abstract}
The reliable distribution of entanglement is essential for the effective operation of quantum networks. Due to fundamental differences between quantum and classical communication systems, it is necessary to develop specialised algorithms and protocols that also account for quantum-specific constraints.
In this work, we focus on the issue of recalibration.
As suggested by recent experimental studies, the process of local entanglement generation in a quantum link degrades over time due to environmental changes that have to be estimated and compensated via a calibration operation, during which the link is not available.
Therefore, in such a quantum network, every link alternates between an activation period, during which it operates normally, and a calibration period, during which it cannot participate in the end-to-end entanglement distribution, thereby creating a trade-off between link quality (the fidelity of generated pairs, which decays during activation) and availability (the fraction of time the link is usable, which calibration reduces). We develop analytically a protocol for optimally assigning activation periods to each link in linear quantum repeater chains, subject to any general end-to-end fidelity requirements and local initial fidelity thresholds. Building on this foundation, we
extend to general quantum networks, where multiple paths may cross at common links, proposing a heuristic approach evaluated in simulations and compared with a benchmark, numerical approach, and theoretical bounds. 
\end{abstract}

\begin{IEEEkeywords}
quantum network, quantum communication, quantum internet, optimal calibration, quantum links.
\end{IEEEkeywords}

\section{Introduction}\label{sec:intro}
\IEEEPARstart{Q}{uantum} networks, envisioned to operate alongside the classical internet, promise transformative capabilities in enhancing security, computation, and enabling specialised applications like quantum sensing \cite{vkthesis, vk2025quantuminternet, rohde2025quantuminternet, meddeb2025quantuminternet}. Realising this vision requires protocols and systems that explicitly also account for quantum-specific constraints, most notably no-cloning, photon indistinguishability (including between independently generated photons, as recently demonstrated \cite{lalindis25}), which enables Bell-state measurement (BSM), and the fragility of entanglement under noise and drift.

At the core of wide-area quantum networking lies \emph{entanglement distribution}: entanglement is first generated between neighbouring nodes and then extended across the network via \emph{entanglement swapping} at intermediate nodes \cite{zukowski1993swapping}. Recent experimental milestones demonstrate the feasibility of multi-hop and multinode operation across different platforms, including remote solid-state qubits linked into a multinode network \cite{pompili2021multinode}, trapped-ion fiber links over hundreds of meters \cite{viktor2023trappedionentanglement}, telecom-compatible nanophotonic quantum memories \cite{knaut2024nanophotonicentanglement}, and city-scale automated entanglement distribution over deployed fiber \cite{craddock2024newyork}. Entanglement-assisted functionality at the network level has also been demonstrated, e.g., teleportation between non-neighbouring nodes \cite{hermans2022teleportation}.
However, these advances overlook practical aspects that could later reveal system bottlenecks in large-scale quantum network deployments, possibly limiting their impact and commercial exploitation. 
In this paper, we focus on one of these potential issues: \emph{calibration}.

In fiber-connected links, slow polarisation and spectral drift degrade the initial fidelity of the generated entanglement along the link over time, requiring periodic recalibration during which links are temporarily unavailable. A representative dataset is shown in Fig.~\ref{fig:fid_vs_time}, derived from the trapped-ion fiber experiment of \cite{viktor2023trappedionentanglement}. The average transmitted fidelity falls from $\approx 1$ to $\approx 0.5$ over $\sim 10{,}000$~s ($\sim$166~min), after which a $\sim$2~min calibration restores the link. Such duty-cycling directly reduces end-to-end throughput and crucially interacts with multi-path orchestration when links are shared among paths.
While experimental platforms can handle this via manual calibration routines, a principled network-layer optimisation of activation and calibration scheduling under fidelity constraints remains underexplored.

\begin{figure}[!t]
\centering
\includegraphics[width=\columnwidth]{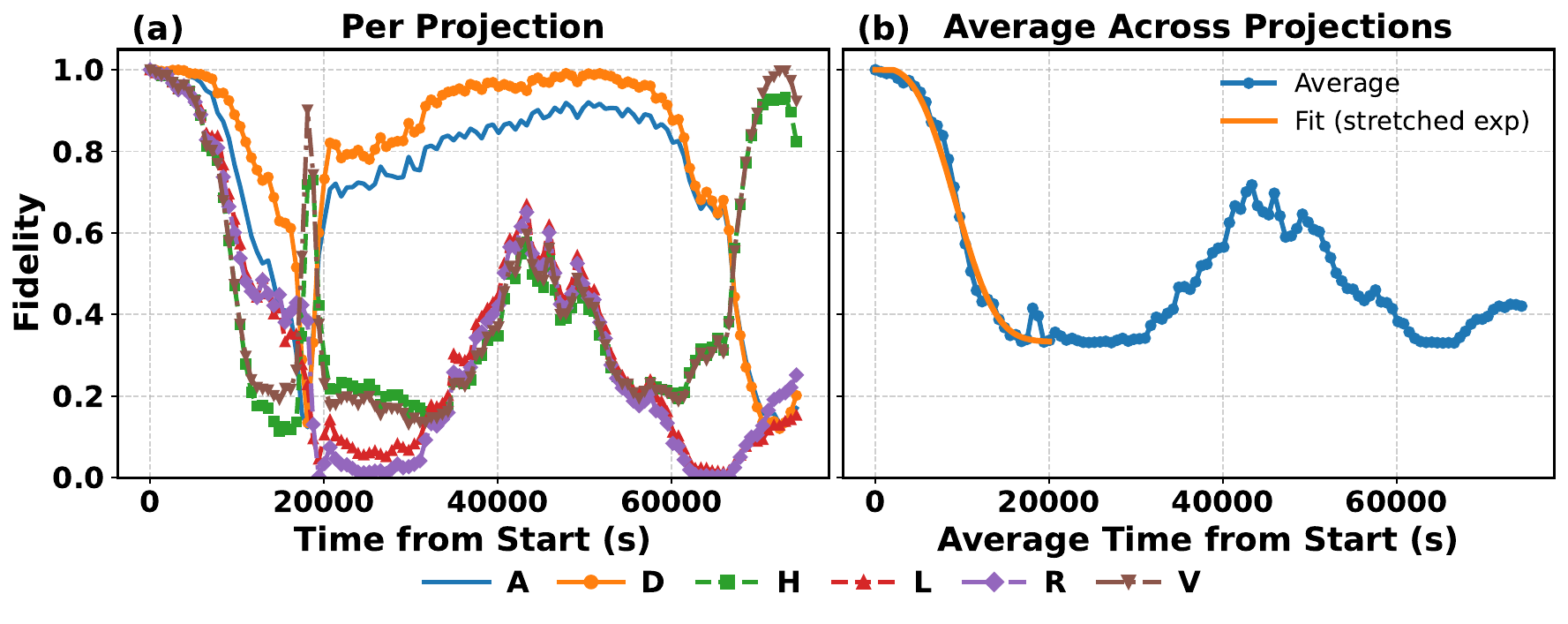}
\caption{Experimental polarisation/fidelity drift on the fiber link of \cite{viktor2023trappedionentanglement}. 
(a) Time evolution of the six polarization fidelity $\{A, D, H, L, R, V\}$. 
(b) The averaged fidelity decay over the activation period, fitted with an exponential-family model selected by Akaike Information Criterion (AIC). Refer to Appendix~\ref{appendix:experimental_data} for a detailed processing of experimental data and derivation of this figure.}
\label{fig:fid_vs_time}
\end{figure}

In this work, we formulate calibration-aware entanglement distribution as a network optimisation problem and make the following contributions:
\begin{enumerate}
    \item We analytically derive the optimal calibration schedule for linear quantum network chains, termed \emph{Quantum Link Orchestration (QLO)}, based on a fidelity decay model validated in recent experiments. QLO balances activation time against calibration downtime to maximise throughput under fidelity constraints.
    \item We then extend QLO to general topologies by first establishing a benchmark as the minimum selection rule (MIN), followed by \emph{Greedy quantum link Orchestration} (GRO) heuristic, which shows improvement over the benchmark.
    \item We simulate the problem and compare the performances of five approaches, which include relevant reference, upper bound, numerical approach, and ultimately showcase the results.
\end{enumerate}
Our results provide a calibration-aware foundation for scheduling and resource allocation in near-term quantum networks, aligned with current architectures and experimental constraints \cite{vkthesis, vk2025quantuminternet, rohde2025quantuminternet, meddeb2025quantuminternet}.

The organisation of the paper is as follows. In Section~\ref{sec:relatedwork}, we discuss some related works. In Section~\ref{sec:sys_model} we establish the system model. In Section~\ref{sec:calibration_prelim}, we discuss optimal calibration in a linear quantum chain. In Section~\ref{sec:calibration_general}, we discuss the quantum links orchestration in general network topologies. In Section~\ref{sec:simulation}, we show the simulation results and performance evaluation of the evaluated approaches and Section~\ref{sec:conclusion} concludes this work.

\begin{table*}[!t]
\centering
\caption{Summary of notation used in this paper.}
\label{tab:notation}
\renewcommand{\arraystretch}{1.15}
\setlength{\tabcolsep}{6pt}
\begin{tabular}{p{0.22\textwidth} p{0.73\textwidth}}
\hline
\textbf{Symbol} & \textbf{Meaning} \\
\hline
$G=(V,E)$ & Quantum network graph with node set $V$ and link set $E$ \\
$e \in E$ & A physical quantum link (edge) \\
$\pi$ & A generic source to destination path \\
$N$ & Number of links (external) in a linear chain (path length in links) \\
$i,j,k$ & Link indices; also used for partition sizes in QLO (see Theorem~\ref{theorem.chain}) \\
$\gamma \in \mathbb{N}$ & Optimization level (recursion depth) in QLO \\
$a_e$ & Activation period duration of link $e$ \\
$a_e^{th}$ & Threshold on activation period duration of link (due to $F_e^{th}$)\\
$a_{e0}, a_{e1}, a_{ete}$ & Activation period with respect to link $0$, link $1$, and end-to-end fidelity threshold (utilised in Appendix~\ref{sec:two_link_example})\\
$c$ & Calibration period duration of link $e$ (same fixed for all links) \\
$\mathcal{T}$ & End-to-end throughput (entanglement delivery rate) \\
$\mathcal{P}$ & Probability of having coinciding entanglement generation attempt over path links in the calibration-aware throughput model\\
$T$ & Sum of path throughputs over all paths between sources and destinations of a combination (utilised in Section~\ref{sec:simulation})\\
$C$ & Path dependent constant factor in throughput (success terms, no of attempts), e.g., $\mathcal{A}p^N q^{N-1}$ \\
$\mathcal{A}$ & Number of attempts (model constant) \\
$T_s$ & Duration of one entanglement generation attempt \\
$T_a$ & A generic notation for the duration of the activation period called fidelity window: duration during which fidelity of generated entanglement is non-zero.\\
$x$ & Number of attempts in a activation period $(T_a)$, typically $x = T_a/T_s$ \\
$\tau_c$ & Calibration time (alternate notation for $c$ in Section~\ref{ssec.calibration_modeling}) \\
$p,q$ & Success probabilities for external links and internal operations, respectively \\
$F_e$ & Initial fidelity of an entangled pair generated on link $e$ \\
$F_i$ & Initial fidelity function of any link (generic notation)\\
$F_0^m, F_1^m$ & Initial fidelity of entanglement generation for link $0$ and $1$ respectively (used in numerical example in Appendix~\ref{sec:two_link_example})\\
$F_e^M$ & Best achievable initial fidelity on link $e$ (value as $a_e \to 0$), default numerical value $\approx 0.999$ \\
$F_e^{th}$ & Per link threshold on initial fidelity (minimum acceptable initial fidelity) \\
$\Gamma$ & Fidelity decay rate parameter for link $e$ (exponential drift model) \\
$F_{ete}$ & End-to-end fidelity after entanglement swapping along a path \\
$F_{ete}^{th}$ & End-to-end fidelity requirement (threshold) for a path \\
$p_1,p_2$ & Single qubit and two qubit operation efficiency parameters in swapping model \\
$\eta$ & Bell state measurement efficiency parameter \\
$\mathcal{U}$ & Noisy swapping operation factor, $\mathcal{U}=(p_1^2p_2)^{N-1}\left(\frac{4\eta^2-1}{3}\right)^{N-1}$ \\
$A$ & Drift model constant, $A = F_e^M - \tfrac{1}{4}$ \\
$L_e$ & Log fidelity variable, $L_e = \Gamma a_e = \ln\!\left(\frac{A}{F_e-1/4}\right)$ \\
$L_e^{th}$ & Per link log fidelity upper bound induced by $F_e^{th}$ \\
$L_0, L_1$ & Log-fidelity of link $0$ and $1$ respectively (utilised in Appendix~\ref{sec:two_link_example})\\
$\Omega$ & Equal activation optimum for a path (from $F_{ete}^{th}$), $\Omega_F$ (F-space) $\&$ $\Omega_L$ (L-space) \\
$\Omega_\pi$ & Optimal point for a specific path $\pi$ \\
$\mathcal{L}$ & End-to-end log fidelity budget for a chain, typically $\sum_{e\in\pi} L_e \le \mathcal{L}$ (emerges from $F_{ete}^{th}$) \\
$\mathcal{L}_\pi$ & Path specific budget for path $\pi$ in the network wide formulation \\
$K$ & Calibration term in $L$ space, typically $K=\Gamma c$ \\
$\mathcal{F}$ & Set of links which have been allocated (in Algorithm~\ref{alg:QLO})\\
$\mathcal{R}$ & Set of links pending allocation (in Algorithm~\ref{alg:QLO})\\
$\mathcal{S}$ & A subset of $\mathcal{R}$ (refer Algorithm~\ref{alg:QLO})\\
$S_F$ & Sum of log-fidelity variable of allocated links in set $\mathcal{F}$\\
$S_R^{th}$ & Sum of log-fidelity threshold variable of pending links in set $\mathcal{R}$\\
$\mathcal{J}^{\gamma}, \mathcal{K}^{\gamma}$ & Subsets of $\mathcal{R}^{\gamma}$ ($\gamma$ level) (refer Section~\ref{ssec:complexity})\\
$P$ & Set of paths among sources and destinations in a common network graph $G(V,E)$\\
$\Pi_e$ & Set of path transversing link $e$\\
$L_{\{\pi, e\}}$ & Log-fidelity allocation of link $e$ of path $\pi$ on individual treatment of each path with QLO between sources and destinations\\
$M$ & Sum of links (hop-count) between sources and destinations counted path by path\\
$S_{\pi}$ & Set of shared links of path $\pi$\\
$\Delta_\pi$ & The residual budget left for path $\pi$ after allocation using QLO\\
$U_\pi$ & Set of unshared links in path $\pi$ (utilised in Algorithm~\ref{alg:gro_hybrid})\\
$r_e$ & The headroom left for a link $e$ after the allocation using QLO (refer Algorithm~\ref{alg:gro_hybrid})\\
$\Delta\%$ & Percentage of an approach sum of path throughputs ($T$) relative to QLO (used in Section~\ref{ssec:relative_performance})\\
\hline
\end{tabular}
\vspace{2em} 
\end{table*}

\section{Related Work}\label{sec:relatedwork}

This work sits at the intersection of (a) experimental and systems work on operating fiber-based and repeater-connected quantum links under drift and calibration requirements, and (b) algorithmic work on routing, scheduling, and resource allocation under fidelity constraints.

In field deployments, entanglement distribution over optical fiber must contend with time-varying polarisation and spectral drift, motivating active alignment and periodic calibration procedures. A line of work studies polarisation alignment and compensation mechanisms for quantum links and networks, aiming to reduce overheads and improve robustness in deployed settings \cite{peranic2023polcomp, dowling2023nonlocalpol, zhou2025polcomp}.
At the system level, recent quantum networking testbeds integrate calibration as a first-class operational routine within network control loops. The IEQNET architecture explicitly incorporates basis alignment and channel calibration steps under a central controller before and during entanglement distribution sessions, and re-initiates such procedures periodically as part of service execution \cite{chung2022ieqnet}.
Similarly, QUANT-NET develops a multi-node testbed and a control stack that separates node-level real-time control from network-level orchestration, including calibration and tuning tasks required for stable operation of trapped ion networking components \cite{schon2024quantnet}.
Campus-scale orchestrators such as ArQNet further highlight the need for automation of operational workflows, including routines that maintain stable entanglement distribution over extended runs \cite{islam2025arqnet}. \textit{These efforts motivate our central modelling choice, that is, calibration is not merely a background engineering detail but a performance-limiting duty cycle that interacts with multi-hop and multi-path operation.} Existing works largely treat calibration as an operational routine, whereas we model activation duration between calibrations as an explicit optimisation variable.

A substantial literature studies how to deliver end-to-end entanglement under fidelity constraints via routing and protocol selection, often incorporating purification and repeater operation models \cite{pant2019routing, vk2025endtoend, vk2024mixed, azuma2023rmp}.
Recent work has emphasised network-level scheduling of purification resources for concurrent demands, showing that fidelity constraints and shared resources can substantially reduce throughput if not coordinated \cite{xiao2024psc}.
Other approaches enforce fidelity constraints by selecting link configurations or operating modes as part of entanglement routing, thereby coupling per-link decisions with end-to-end requirements \cite{zhang2025linkconfig}.
This work provides a complementary study in two ways. First, \textit{we focus on a distinct but experimentally grounded control knob: the activation duration between calibrations}, which directly determines the initial fidelity of the distributed entangled pair along links over time. Second, we provide a structural optimisation view that yields closed-form solutions on linear chains and enables network-wide optimisation under shared link contention.

Several works adapt classical resource allocation ideas to quantum networks, including NUM-style frameworks that jointly value rate and quality. Quantum Network Utility Maximisation (QNUM) formulations propose concave utilities based on entanglement measures and enable distributed primal-dual style control \cite{vardoyan2022qnum}.
Follow-up work develops decentralised control architectures and stability analysis for distributed resource allocation in quantum networks \cite{panigrahy2025distributed}.
Related optimisation perspectives also connect routing to fairness and multi-commodity resource allocation objectives \cite{wang2025fair}. Calibration-aware resource allocation has also been considered in hub or switch-style settings. In particular, \cite{gauthier2024resourceallocation} models an on-demand resource allocation algorithm for a quantum network hub and explicitly incorporates calibration periods into the service dynamics, providing performance analysis complementary to our network-wide optimisation view.
These formulations are complementary to ours: \cite{vardoyan2022qnum, panigrahy2025distributed} establish how concave, separable utilities of rate and quality can be optimised through distributed control, and the QLO allocation we derive is exactly such a separable concave objective in log-fidelity space. We instead fix routing and optimise the temporal calibration knob, which sits orthogonal to the routing-and-fairness layer of \cite{wang2025fair}.

Finally, recent work has argued that temporal aspects such as entanglement freshness should be treated explicitly in scheduling and control, proposing metrics such as fidelity age and corresponding schedulers that reduce extreme staleness events while preserving throughput \cite{ercetin2026fidelityage}.
\textit{This reinforces the broader message that time-structured effects matter in quantum network operation.} Our contribution addresses a complementary temporal mechanism, namely drift within activation periods and its mitigation via optimal calibration scheduling.

\section{System model}\label{sec:sys_model}
Throughout this work, we consider a general quantum network structure consisting of quantum repeaters, quantum links, quantum devices, and a network controller \cite{vk2025endtoend, vk2024mixed}. The components of this quantum network consist of the necessary equipment, such as heralded sources of entanglement, detectors, and quantum memories, enabling execution of any network-level protocol. The quantum links are assumed to be optical fibers connecting neighbouring nodes, akin to the experimental setting in \cite{viktor2023trappedionentanglement}, and are continuously required to attempt the generation of an entangled pair, either for advanced or on-demand entanglement generation required for quantum routing. As discussed in Section~\ref{sec:intro}, we model the degradation in the initial fidelity of a successfully generated entanglement pair along with a calibration period, as shown in Fig.~\ref{fig:fidelity_decay_ton}.
The notation in the paper is reported in Table~.\ref{tab:notation}.

\definition{In a quantum network, for a quantum link $e$, the link activation period $(a_e)$ is the period during which it is actively generating entanglement.}
\definition{In a quantum network, for a quantum link $e$, the link calibration period $(c)$ is the period dedicated to calibrating the quantum link to rejuvenate against drift and enable optimal entanglement generation.}

The generation of the entangled pair through each link is inherently probabilistic. For a successful delivery of throughput, each of the links in the path connecting end users has to be successful within the activation period window, which happens with probability $p$, as depicted in Fig.~\ref{fig:fidelity_decay_ton}. Furthermore, all the entanglement swapping operations at intermediate quantum repeaters, each with probability $q$, have to be successful. Without loss of generality, for simplicity of notation, we consider that these probabilities $p,q$ are the same for all the links/nodes in the network.
Calibration restores only the \emph{initial fidelity} of the entangled pairs and does not change the underlying attempt success probabilities $p$ and $q$. Thus, drift affects the quality of successful pairs but not the raw attempt rate.

\subsection{Calibration Modeling}\label{ssec.calibration_modeling}
We now define the calibration of the quantum network links and their types and establish its model by accounting for it in the throughput between endpoints.
\definition{In a quantum network, the throughput $(\mathcal{T})$ is the rate at which quantum states can be successfully transmitted across the two endpoints.}

In a functioning quantum network, the calibration can be thought of as performed in two ways.
\begin{enumerate}
    \item \textbf{Synchronised periodic calibration (SPC):} In SPC, all the quantum links are in sync to have overlapping activation and calibration periods. This approach is simple to implement as it only requires the nodes to adopt a periodic maintenance schedule, but the entire network becomes inactive for customers at regular intervals. Also, it may be inefficient since the system period must be chosen based on the link with the stringent calibration requirements.
    \item \textbf{Unsynchronised periodic calibration (USPC):} In USPC, the quantum links operate with non-overlapping activation and calibration periods. By adapting the calibration period link by link, this approach is more flexible than SPC, and it allows the network to remain operational at all times, but requires a careful assignment of the activation periods, which is the main scope of this work.

\end{enumerate}

\begin{figure}[!t]
\centering
\includegraphics[width=\columnwidth]{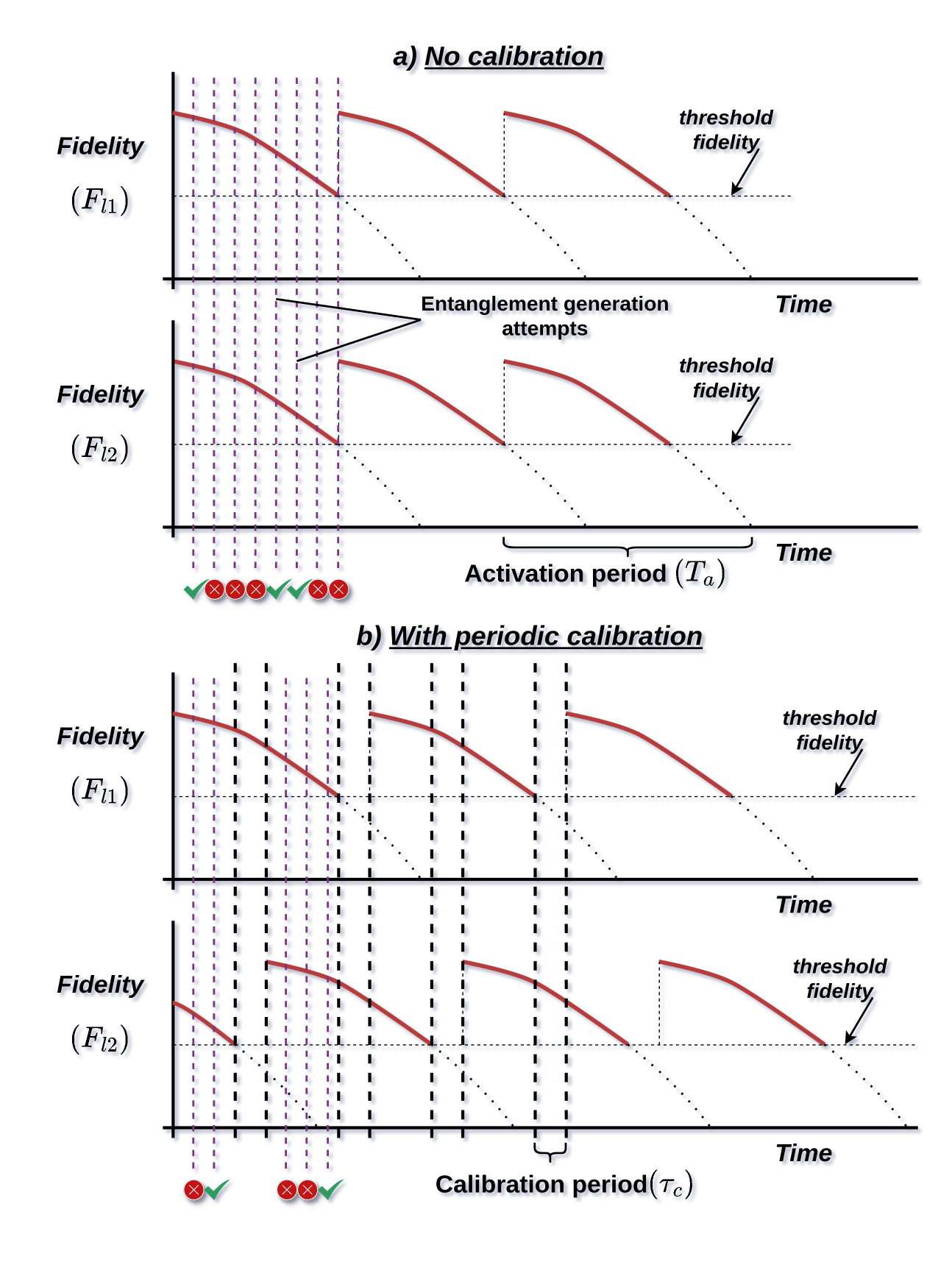}
\caption{An example of a two-link ($F_{l1}$ and $F_{l2}$) linear quantum network. Fidelity vs. Time for each link with \textbf{a) No calibration} and \textbf{b) With periodic calibration.} The initial fidelity of the generated entangled pair decays over the activation period $(T_a)$. A fixed calibration period $(\tau_c)$ is required to rejuvenate each link. The $\checkmark$ in the figure means the successful delivery of an entangled pair across the network, while $\times$ denotes an unsuccessful attempt due to probabilistic failure of entangled pair generation on either or both of the links involved.}
    \label{fig:fidelity_decay_ton}
\end{figure}

For a linear quantum network chain with $N$ links, the throughput with \textbf{no calibration} is~\cite{pant2019routing}:
\begin{equation}\label{eq:nc_throughput}
     \mathcal{T}_{nc} = \mathcal{A} p^N q^{N-1},
\end{equation}
where $\mathcal{A}$ is an arbitrary number of total attempts, because an attempt is only successful if local generation succeeds on each of the $N$ links and entanglement swapping succeeds on each of the $N-1$ intermediate nodes.

For SPC, the throughput decreases by a factor equal to the ratio of the activation period to the total cycle time, which accounts for the fact that the network is only operational during the activation period. Thus, the throughput of SPC is given by:
\begin{equation}\label{eq:spc_throughput}
    \mathcal{T}_{spc} = \mathcal{A} \ p^N q^{N-1} \left( \frac{x \ T_s}{x \ T_s + \tau_c} \right),
\end{equation}
where $x$ is the number of attempts possible in the activation period ($T_a$), $T_s$ is the time taken by an attempt, and $\tau_c$ is the calibration period.

On the other hand, for USPC, considering independent calibration of links, the only time when the end-to-end path is operational is when all links are simultaneously active, which happens with probability:
\begin{equation}
    \mathcal{P}_{uspc} = \left( \frac{x}{ x + \frac{\tau_c}{T_s}} \right)^{N},
\end{equation}
with a corresponding throughput of 
\begin{equation}\label{eq:uspc_throughput}
    \mathcal{T}_{uspc} = \mathcal{A} \ p^N q^{N-1}  \left( \frac{ x \ T_s}{x \ T_s + \tau_c} \right)^{N}.
\end{equation}

From Eq.~\eqref{eq:nc_throughput}, \eqref{eq:spc_throughput} \& \eqref{eq:uspc_throughput} we have the ratio of throughput as:
\begin{equation}
\mathcal{T}_{nc} : \mathcal{T}_{spc} : \mathcal{T}_{uspc} =
\begin{aligned}[t]\label{eq:ratioeq}  1 :
\left( \frac{x\,T_s}{x\,T_s + \tau_c}
\right) : \left(
    \frac{x\,T_s}{x\,T_s + \tau_c}
  \right)^{N}.
\end{aligned}
\end{equation}
From the above equation, it is clear that approaches not accounting for
calibration overstate throughput as compared to SPC and exponentially
overestimate compared to USPC. The no-calibration case corresponds to the
limit $\tau_c \to 0$, in which both ratios reduce to unity: every link is
implicitly assumed to rejuvenate instantaneously, incurring zero calibration
downtime. This is precisely the assumption made by any protocol that ignores
calibration, and it is infeasible in practice, since recalibration requires a
finite, non-zero time (Fig.~\ref{fig:fidelity_decay_ton}). Note from the ratio above that SPC attains a higher
nominal throughput than USPC for the same activation period, since USPC pays a
coincidence penalty, that is, the path delivers only when all $N$ links are
simultaneously active. The motivation for studying USPC is therefore
operational rather than throughput-driven. SPC imposes a single network-wide
calibration cycle whose period must be set by the most stringent link,
$a=\min_e(a_e^{th})$, forcing links with looser fidelity thresholds to
recalibrate far more often than necessary and taking the entire network offline
during every calibration. USPC removes both limitations by letting each link
adapt its own activation period while keeping the network continuously
operational, at the cost of a harder scheduling problem, that is, the per-link
assignment of activation periods, which is the main scope of this work.

\subsection{Throughput and Fidelity Modelling}
We now define the assumptions and model of throughput, initial fidelity of the EPR pair generated, and end-to-end fidelity.
\definition{In a quantum network, end-to-end fidelity $(F_{ete})$ is defined as the fidelity of the long-distance entangled state produced between two endpoints through a series of entanglement swapping operations along the network path.}

\begin{assumption}\label{assumption.activationcalibrationratio}
    For a quantum network link $e$, the effective throughput $\mathcal{T}$ is proportional to the ratio of activation to the total cycle time as in Eq.~\eqref{eq:spc_throughput} $$\mathcal{T} \propto \left( \frac{a_e}{a_e + c} \right).$$
    where $a_e = x \ T_s$ is the activation period of the link, and $c = \tau_c$ is the calibration period of the link.
\end{assumption}
In this work, we will formulate and discuss the general case of USPC with independent calibration. The simpler case of SPC follows by setting $N=1$ (see Eq.~\eqref{eq:ratioeq}). For the operational reasons noted in Section~\ref{ssec.calibration_modeling},
USPC is the regime of interest, and the per-link assignment of activation
periods is the central problem addressed in the remainder of this work.

\begin{lemma}\label{lemma.throughputN}
    In a linear quantum network with $N$ links, the effective throughput across the network where each link $e$ is operating independently is given by (Eq.~\eqref{eq:uspc_throughput}), i.e.: $$\mathcal{T} = C \left( \frac{a_e}{a_e + c} \right)^N.$$
    where $C= \mathcal{A} \  p^Nq^{N-1}$ is some constant encompassing the successful establishment of nodes (with probability $q$) and links (with probability $p$) of the path.
\end{lemma}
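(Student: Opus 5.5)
The approach combines Assumption~\ref{assumption.activationcalibrationratio} with the independence of link calibration schedules under USPC, as in the derivation of Eq.~\eqref{eq:uspc_throughput}. I would reduce the end-to-end throughput to two factors: a path-success term that ignores calibration, and an availability term counting the fraction of time all $N$ links are simultaneously active.

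\textbf{Step 1: the calibration-free factor.} Fix a link $e$ with activation period $a_e$ and calibration period $c$. Over a long horizon, link $e$ is active a fraction $a_e/(a_e+c)$ of the time. Equivalently, an attempt slot chosen uniformly falls in an activation window with probability $a_e/(a_e+c)$. This is exactly Assumption~\ref{assumption.activationcalibrationratio}, with $a_e = xT_s$ and $c=\tau_c$.

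Conditioned on all links being active, an end-to-end delivery requires success on each of the $N$ links (probability $p$ each) and on each of the $N-1$ swaps (probability $q$ each). By the independence already used in Eq.~\eqref{eq:nc_throughput}, this gives $\mathcal{A}p^Nq^{N-1}=C$ successes over $\mathcal{A}$ attempts.

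\textbf{Step 2: the availability factor.} Under USPC the links calibrate independently, so their phases are unsynchronised. I would model each link's phase as uniformly distributed over its cycle, with phases independent across links. The event that all links are active at an attempt is then the intersection of $N$ independent events. Its probability is $\prod_{e}\frac{a_e}{a_e+c}$, which equals $\mathcal{P}_{uspc}$ in the lemma's setting, where every link uses the same activation period $a_e$.

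Since calibration affects neither $p$ nor $q$ (as stated in the system model), the two factors multiply, giving
\[
\mathcal{T}=C\left(\frac{a_e}{a_e+c}\right)^N .
\]

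\textbf{Main obstacle.} The delicate step is justifying independence of the activity indicators across links. Deterministic periodic schedules are not literally independent: with commensurate periods, the overlap fraction depends on the relative phase offsets. I would handle this by taking phases uniformly random, or equivalently by averaging over the offsets. For each fixed link, the marginal fraction of active time is then exact, and independent phase offsets make the joint active fraction factorise.

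I would also remark that the lemma as stated assumes a common $a_e$ along the path. The argument gives the general product form $C\prod_{e\in\pi}\frac{a_e}{a_e+c}$, which will be needed later when activation periods differ per link.
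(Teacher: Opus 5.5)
Your proposal is correct and follows the paper's route. The paper's proof takes the per-link active fraction $a_e/(a_e+c)$ from Assumption~\ref{assumption.activationcalibrationratio}, multiplies it across the $N$ independently operating links, and carries over the constant $C=\mathcal{A}p^Nq^{N-1}$ from Eq.~\eqref{eq:nc_throughput}. Your independent uniform phase offsets, and your note that the argument gives the general product $C\prod_{e\in\pi}\frac{a_e}{a_e+c}$ used later in Problem~\ref{prob:problem1}, make explicit what the paper leaves implicit in the phrase ``operating independently.''
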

\begin{proof}
    From Assumption~\ref{assumption.activationcalibrationratio}, the throughput of a single link is proportional to the time it spends in the activation period relative to the total cycle time. Extending this to $N$ links, which are operating independently, gives the required relation.
\end{proof}
\begin{remark}\label{remark.c_e}
    The establishment of nodes and links in Lemma~\ref{lemma.throughputN} refers to the probabilistic success of entanglement swapping within nodes and the success of initial entanglement generation over links. While nodes and links are always physically present, it is this probabilistic establishment that makes them usable.
\end{remark}

\begin{assumption}\label{assumption.initialfidelityexponential}
    The initial fidelity $F_e$ of the EPR pair generated via link $e$ decreases approximately exponentially in time $a_e (t)$ \cite{inestabuffer25} as:
    $$F_e =  \left( F_e^M - \frac{1}{4} \right) e^{- \Gamma a_e} + \frac{1}{4},$$ where $\Gamma$ is the exponential decay parameter and $F^M_e$ is the initial fidelity when $a_e \to 0$.
\end{assumption}

\begin{assumption}\label{assumption.endtoendfidelity}
    The end-to-end fidelity $F_{ete}$ decreases exponentially with the number of links $N$ \cite{durpurification99}
\begin{align}
F_{ete}(a_e) 
&= \frac{1}{4} \Bigg[ 1 
+ 3 \left( p_1^2 p_2 \right)^{N-1} 
\left( \frac{4 \eta^2 - 1}{3} \right)^{N - 1} \nonumber\\
&\quad \times \prod_{i=0}^{N-1} \left( \frac{4 F_i(a_e) - 1}{3} \right) \Bigg],
\label{eq:end_to_end_fidelity}
\end{align}

where $N$ is the number of links, $F_i (a)$ is the initial fidelity function of the link, $p_1$ is the efficiency of the single-qubit operation (Hadamard gate), $p_2$ is the efficiency of the two-qubit operation (CNOT gate), and $\eta$ is the efficiency of the Bell measurement for entanglement swapping operations.
\end{assumption}

\subsection{Constraints: Initial Fidelity and End-to-End Fidelity}
We now define the constraints on the initial fidelity and end-to-end fidelity.
\begin{lemma}\label{lemma.initialfidelityvariation}
    For every link $e$, the activation period $a_e$ is bounded by the initial fidelity threshold $F^{th}_e$ set for the EPR pair generated in that link: 
    $$a_e \leq \frac{1}{\Gamma} \ln{\left( \frac{F_e^M - \frac{1}{4}}{F_e^{th} - \frac{1}{4}} \right) }.$$
\end{lemma}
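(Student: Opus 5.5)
The plan is to invert the drift model of Assumption~\ref{assumption.initialfidelityexponential} and translate the requirement $F_e \geq F_e^{th}$ into an upper bound on $a_e$. The lemma expresses that the initial fidelity of every pair generated on link $e$ during its activation period must stay above $F_e^{th}$. Since $F_e$ is evaluated at the elapsed activation time and decreases in time, the binding constraint is at the end of the activation window. It therefore suffices to impose $F_e(a_e) \geq F_e^{th}$.

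First, I will check monotonicity. We have $\frac{dF_e}{da_e} = -\Gamma\left(F_e^M - \tfrac14\right)e^{-\Gamma a_e} < 0$, since $\Gamma > 0$ and $F_e^M > \tfrac14$. So $F_e$ is strictly decreasing in $a_e$, and $F_e(t) \geq F_e^{th}$ for all $t \in [0,a_e]$ if and only if $F_e(a_e) \geq F_e^{th}$.

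Next, I will solve the inequality
$$\left(F_e^M - \tfrac14\right)e^{-\Gamma a_e} + \tfrac14 \geq F_e^{th}.$$
Subtracting $\tfrac14$ and dividing by the positive constant $F_e^M - \tfrac14$ gives
$$e^{-\Gamma a_e} \geq \frac{F_e^{th}-\tfrac14}{F_e^M-\tfrac14}.$$
Because $\ln$ is increasing, taking logarithms preserves the inequality, giving $-\Gamma a_e \geq \ln\frac{F_e^{th}-1/4}{F_e^M-1/4}$. Dividing by $-\Gamma<0$ flips the inequality and yields the stated bound.

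The only delicate step is the domain conditions needed for the logarithm to be defined and the bound to be meaningful. I will state explicitly the standing assumption $\tfrac14 < F_e^{th} \leq F_e^M$. This is natural, since the fidelity tends to the maximally mixed value $\tfrac14$ and never exceeds $F_e^M$. Under it the ratio inside the logarithm lies in $(0,1]$, so the logarithm is defined and the bound is nonnegative. If $F_e^{th} \le \tfrac14$ the constraint is vacuous, and if $F_e^{th} > F_e^M$ no activation period is feasible; both cases can be mentioned as degenerate. This bound defines $a_e^{th}$, equivalently $L_e \le L_e^{th}$ in the log-fidelity variable $L_e = \Gamma a_e$.
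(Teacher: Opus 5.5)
Your proof is correct and follows the paper's route. The paper simply imposes $F_e \ge F_e^{th}$ in the exponential drift model of Assumption~\ref{assumption.initialfidelityexponential} and solves for $a_e$, exactly as you do; your monotonicity check and the explicit domain condition $\tfrac14 < F_e^{th} \le F_e^M$ are details the paper leaves implicit.
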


\begin{proof}
Follows from Assumption~\ref{assumption.initialfidelityexponential} by setting  $F_e \geq F_e^{th}$.
\end{proof}
Lemma~\ref{lemma.initialfidelityvariation} gives the upper bound on the activation period of the link $e$. At the upper bound, we define
\begin{equation}
    a_e = a_e^{th} = \frac{1}{\Gamma} \ln{\left( \frac{F_e^M - \frac{1}{4}}{F_e^{th} - \frac{1}{4}} \right) }.
\end{equation}

The constraint on the fidelity of initially generated EPR pairs can be imposed to meet
the expectations of purification protocols, which require a certain fidelity of the initially
generated EPR pairs to perform purification on the links.

\begin{lemma}\label{lemma.endtoendfidelityactivation}
    The activation period $a_e$ is bounded by the end-to-end fidelity threshold $F_{ete}^{th}$ set for the EPR pair delivery along a path $$a_e \leq \frac{1}{\Gamma N} \ln{\left( 
 \frac{\left( \frac{4}{3} \right)^{N-1} \left( 
 F_e^M - \frac{1}{4}\right)^N \mathcal{U}}{F^{th}_{ete} - \frac{1}{4}}\right)},$$
 where $\mathcal{U} = \left( p_1^2 p_2 \right)^{N-1} \left( \frac{4 \ \eta^2 - 1}{3} \right)^{N - 1}$ defines the noisy operations during entanglement swapping.
\end{lemma}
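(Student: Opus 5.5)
The plan is to substitute the exponential drift model of Assumption~\ref{assumption.initialfidelityexponential} into the end-to-end fidelity expression of Assumption~\ref{assumption.endtoendfidelity}, impose $F_{ete} \geq F_{ete}^{th}$, and solve for the activation period. The bound as stated has a single $a_e$ and a single $F_e^M$. I will therefore prove it under the reading suggested by the notation $\Omega$ (the ``equal activation optimum'' in Table~\ref{tab:notation}): every link of the path has the same $F_e^M$ and runs the same activation period $a_e$. With unequal periods the same computation instead bounds $\Gamma\sum_{e} a_e$, which is the budget $\mathcal{L}$ used later.

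First, for each link I compute the Werner-type factor appearing in Eq.~\eqref{eq:end_to_end_fidelity}:
\[
\frac{4F_i(a_e)-1}{3} = \frac{4}{3}\Bigl(F_e^M-\tfrac14\Bigr)e^{-\Gamma a_e}.
\]
Taking the product over the $N$ links gives $\left(\tfrac43\right)^N \bigl(F_e^M-\tfrac14\bigr)^N e^{-\Gamma N a_e}$. Inserting this into Eq.~\eqref{eq:end_to_end_fidelity}, and absorbing the prefactor $\tfrac34\cdot\left(\tfrac43\right)^N=\left(\tfrac43\right)^{N-1}$, yields
\[
F_{ete} = \frac14 + \left(\frac43\right)^{N-1}\Bigl(F_e^M-\tfrac14\Bigr)^N\,\mathcal{U}\,e^{-\Gamma N a_e}.
\]

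Next I impose $F_{ete}\ge F_{ete}^{th}$ and rearrange to
\[
e^{-\Gamma N a_e}\le \ldots \quad\Longleftrightarrow\quad e^{\Gamma N a_e}\le \frac{\left(\tfrac43\right)^{N-1}\bigl(F_e^M-\tfrac14\bigr)^N\mathcal{U}}{F_{ete}^{th}-\tfrac14}.
\]
Taking logarithms, which is monotone, and dividing by $\Gamma N>0$ gives the stated bound.

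The main obstacle is not the algebra but justifying the direction and validity of the inequality manipulations. Dividing by the coefficient of $e^{-\Gamma N a_e}$ requires it to be positive, which needs $F_e^M>\tfrac14$ and $\mathcal{U}>0$, i.e.\ $\eta>\tfrac12$. Taking the logarithm requires $F_{ete}^{th}>\tfrac14$. I will state these physically natural conditions explicitly. I will also remark that if the argument of the logarithm is below $1$, no nonnegative $a_e$ is feasible.
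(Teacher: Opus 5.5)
Your proposal is correct and follows the paper's route: substitute Assumption~\ref{assumption.initialfidelityexponential} into Eq.~\eqref{eq:end_to_end_fidelity} with identical links ($F_i=F_e$), impose $F_{ete}\ge F_{ete}^{th}$, and solve for $a_e$, with your explicit positivity conditions ($F_e^M>\tfrac14$, $\mathcal{U}>0$, $F_{ete}^{th}>\tfrac14$) usefully making precise what the paper's one-line proof leaves implicit. One small slip: the intermediate inequality should read $e^{-\Gamma N a_e}\ge \ldots$ rather than $\le$, since that is what is equivalent to your correct final form $e^{\Gamma N a_e}\le \ldots$.
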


\begin{proof}
Follows from Assumption~\ref{assumption.initialfidelityexponential} and \ref{assumption.endtoendfidelity} by setting $F_{ete} \geq F^{th}_{ete}$ and assuming $F_i=F_e.$
\end{proof}
Note that here, the end-to-end fidelity expression assumes identical noisy-operation parameters $(p_1,p_2,\eta)$ and identical functional form of $F_e$ across all links. Heterogeneity considerations can be addressed by simply extending the notation. The end-to-end fidelity of EPR delivery is a general metric used in the evaluation of the quality of delivery. A constraint on it refers to a certain threshold end-to-end fidelity requested
by users to run a quantum application on their end, or a certain threshold required by
purification protocols on end-to-end, similar to the initial fidelity constraint.

\subsection{Fidelity and Log-fidelity variable spaces}
In the upcoming analysis, it is often easier to work in the Log-fidelity variable space (L-space) than the linear fidelity variable space (F-space). Below is the relation between these two working spaces.
\begin{remark}
    From Assumption~\ref{assumption.endtoendfidelity}, we express initial fidelity in terms of the end-to-end fidelity threshold. Assume perfect operations, that is, $\mathcal{U}=1$ and same initial fidelity $F_i=F_e$ for each link we have
\begin{equation}
    F_e = \left( \frac{4}{3} \right)^{\left( \frac{1-N}{N} \right)} \left( F_{ete}^{th} - \frac{1}{4} \right)^{ \left( 1/N \right)}.
\end{equation}
Let $L_e = \ln{\left( \frac{A}{F_e - 1/4} \right)}$ where $A= F_e^M - \frac{1}{4}$ then
\begin{equation}
    \boxed{F_e = \frac{1}{4} + \frac{A}{e^{L_e}}}.
\end{equation}
\end{remark}

\section{Optimal calibration in linear quantum chains}\label{sec:calibration_prelim}
In this section, we delve into some preliminaries of the calibration-aware throughput optimisation problem. We discuss optimal allocation of links under different levels of constraints in linear quantum chains, that is, under initial fidelity constraint in Section~\ref{ssec:initial_fidelity_const}, and under end-to-end fidelity constraint in Section~\ref{ssec:ete_fidelity_cont}. This will then enable the discussion of the general link orchestration under constraints problem in Section~\ref{ssec:qlotheorem}.

\subsection{Linear quantum chains under initial fidelity constraint}\label{ssec:initial_fidelity_const}
\begin{theorem}[Optimal Throughput via Fidelity Threshold Selection]\label{theorem.optimalthroughput}
    Consider a quantum network chain comprising \( N \) links (that is, \( N+1 \) nodes), where each link \( e \) operates independently and sequentially undergoes an activation period \( a_e \) and a calibration period \( c \). Let \( F_i^{th} \) denote the threshold fidelity for initial entanglement generation on the \( i^{th} \) link. Then, the throughput \( \mathcal{T} \) of the quantum network is maximised when each link is operated at its respective initial fidelity threshold
    $$F_e = F_e^{th} \ \text{or} \ L_e=L_e^{th}.$$ From Lemma~\ref{lemma.initialfidelityvariation}, in other words, the throughput $\mathcal{T}$ is maximized when the activation periods $a_e$ of links are operated at their respective thresholds. $$a_e = a_e^{th}.$$
\end{theorem}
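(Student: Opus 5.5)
The plan is to write the throughput of the chain as a product of independent per-link factors and show that each factor is strictly increasing in its own activation period. The constraints act separately on each link, so once this is done each link should simply be pushed to the largest admissible value.

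\textbf{Step 1: write the objective.} By Assumption~\ref{assumption.activationcalibrationratio}, link $e$ is usable a fraction $a_e/(a_e+c)$ of the time. Since the links operate independently (USPC), the same argument as in Lemma~\ref{lemma.throughputN} gives, with heterogeneous activation periods,
\begin{equation*}
\mathcal{T}(a_1,\dots,a_N) = C\prod_{e=1}^{N}\frac{a_e}{a_e+c},
\end{equation*}
where $C=\mathcal{A}p^Nq^{N-1}>0$ does not depend on the $a_e$. This uses the fact stated in the system model that calibration does not change $p$ and $q$. When all $a_e$ are equal, this reduces to Lemma~\ref{lemma.throughputN}.

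\textbf{Step 2: monotonicity of each factor.} For fixed $c>0$,
\begin{equation*}
\frac{d}{da}\,\frac{a}{a+c}=\frac{c}{(a+c)^2}>0 .
\end{equation*}
So each factor is positive and strictly increasing on $a>0$. Because the factors are positive and depend on disjoint variables, $\mathcal{T}$ is strictly increasing in each coordinate. Equivalently, $\ln\mathcal{T}=\ln C+\sum_e \ln\bigl(a_e/(a_e+c)\bigr)$ is a separable sum of increasing terms.

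\textbf{Step 3: the feasible set.} Under only the initial-fidelity constraints, Lemma~\ref{lemma.initialfidelityvariation} gives the feasible region as the box $0<a_e\le a_e^{th}$ for each $e$. The upper-right corner of this box dominates every feasible point in every coordinate. By the coordinatewise monotonicity of Step 2, it is therefore the unique maximiser:
\begin{equation*}
a_e^\star=a_e^{th}\quad\text{for every } e .
\end{equation*}
To restate this in fidelity terms, use Assumption~\ref{assumption.initialfidelityexponential}: $F_e$ is strictly decreasing in $a_e$, so $a_e=a_e^{th}$ is the same as $F_e=F_e^{th}$. In log-fidelity terms, $L_e=\Gamma a_e$ is a strictly increasing linear change of variable, so the optimum is $L_e=L_e^{th}$. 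In $L$-space the objective reads $\prod_e L_e/(L_e+K)$ with $K=\Gamma c$, and the same monotonicity argument applies.

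\textbf{Main obstacle.} The calculus here is trivial. The delicate part is Step 1, where I justify the product form with distinct per-link $a_e$ rather than the single common value in Lemma~\ref{lemma.throughputN}. I would handle it by noting that under USPC the links' on/off cycles are independent. The stationary probability that all $N$ links are simultaneously active is then the product of the individual duty cycles $a_e/(a_e+c)$, which is exactly how $\mathcal{P}_{uspc}$ was derived.

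I would also state explicitly that no end-to-end constraint is imposed in this theorem, so the feasible set is a box and the argument is complete. The coupled case, where the constraint $\sum_e L_e\le\mathcal{L}$ breaks this decoupling, is deferred to Section~\ref{ssec:ete_fidelity_cont}.
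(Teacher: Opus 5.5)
Your proposal is correct and follows essentially the same argument as the paper. The throughput $C\prod_e a_e/(a_e+c)$ is strictly increasing in each $a_e$ (equivalently each $L_e=\Gamma a_e$), and the only constraint is the per-link cap $a_e\le a_e^{th}$, so the maximum is at $a_e=a_e^{th}$, i.e.\ $F_e=F_e^{th}$ and $L_e=L_e^{th}$; your explicit derivative, the heterogeneous product form justified by independent USPC duty cycles, and the monotone changes of variable to $F_e$ and $L_e$ just spell out steps the paper leaves implicit.
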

\begin{proof}
    From Lemma~\ref{lemma.throughputN}, the throughput is strictly increasing in each activation duration $a_e$ (equivalently in each log-fidelity variable $L_e$) over the feasible region. Because the only restriction on $a_e$ is the upper bound imposed by the fidelity constraint, the maximum throughput is necessarily achieved at the largest feasible activation duration, namely $a_e^{th}$. No interior point can be optimal under a strictly increasing objective. The equation where the link fidelity allocation is selected by simply setting it to the value of the threshold is called the \textit{equation of optimality}.
\end{proof}

\subsection{Linear quantum chains under end-to-end fidelity constraint}\label{ssec:ete_fidelity_cont}
\begin{theorem}[Uniform Activation-Fidelity Theorem]\label{theorem.equalactivation}
Consider a quantum network chain comprising $N$ links, where each link $e$ operates independently and sequentially undergoes an activation period $a_e$ and a calibration period $c$. Let $F_{ete}^{th}$ denote the end-to-end fidelity threshold for the network chain. Then, the throughput $\mathcal{T}$ of the quantum network is maximised when all links are activated equally, and the fidelity of EPR generation in each link equals the optimal point $\Omega$ defined in Eq.~\eqref{eq:omega_f} and \eqref{eq:omega_l} as:
\end{theorem}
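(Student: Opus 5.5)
The plan is to move to L-space, where the end-to-end constraint becomes a linear budget and the objective becomes separable and concave.

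\textbf{Step 1: reduce the end-to-end constraint to a linear budget.} Since $L_e=\Gamma a_e$, substituting $F_i=\tfrac14+Ae^{-L_i}$ gives $\frac{4F_i-1}{3}=\frac{4A}{3}e^{-L_i}$. Then $F_{ete}\ge F_{ete}^{th}$ in Assumption~\ref{assumption.endtoendfidelity} becomes
\begin{equation*}
\prod_{i=0}^{N-1}\frac{4A}{3}e^{-L_i}\;\ge\;\frac{F_{ete}^{th}-\frac14}{\tfrac34\,\mathcal{U}} .
\end{equation*}
Taking logarithms, this is equivalent to the linear budget $\sum_{e}L_e\le\mathcal{L}$. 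Here $\mathcal{L}=\ln\!\big(\tfrac{(4/3)^{N-1}A^N\mathcal{U}}{F_{ete}^{th}-1/4}\big)$, which is exactly $N\Gamma$ times the bound in Lemma~\ref{lemma.endtoendfidelityactivation}.

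\textbf{Step 2: write the objective in L-space.} Under independent per-link activation, Lemma~\ref{lemma.throughputN} gives
\begin{equation*}
\mathcal{T}=C\prod_e \frac{L_e}{L_e+K},\qquad K=\Gamma c .
\end{equation*}
Maximising $\mathcal{T}$ is therefore equivalent to maximising
\begin{equation*}
\Phi(L)=\sum_e g(L_e),\qquad g(L)=\ln L-\ln(L+K).
\end{equation*}

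\textbf{Step 3: optimise.} First, $g'(L)=\frac{1}{L}-\frac{1}{L+K}=\frac{K}{L(L+K)}>0$. So $\Phi$ is strictly increasing in each variable, and as in Theorem~\ref{theorem.optimalthroughput} the budget must be tight at any optimum, i.e. $\sum_e L_e=\mathcal{L}$.

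Second, $g''(L)=-\frac{1}{L^2}+\frac{1}{(L+K)^2}<0$ for $L>0$. So $g$ is strictly concave, and $\Phi$ is a symmetric, strictly concave function on the simplex $\{L_e>0,\ \sum_e L_e=\mathcal{L}\}$.

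By Jensen's inequality, $\Phi(L)\le N\,g\big(\tfrac1N\sum_e L_e\big)=N\,g(\mathcal{L}/N)$, with equality if and only if all $L_e$ are equal. Equivalently, the KKT condition $g'(L_e)=\lambda$ holds for every $e$, and since $g'$ is strictly monotone this forces all $L_e$ to be equal. Hence the unique maximiser is
\begin{equation*}
L_e=\Omega_L=\mathcal{L}/N,\qquad a_e=\Omega_L/\Gamma,
\end{equation*}
which is the bound of Lemma~\ref{lemma.endtoendfidelityactivation}. Mapping back to F-space gives $\Omega_F=\tfrac14+Ae^{-\Omega_L}$. For $\mathcal{U}=1$ this coincides with the per-link fidelity expression in the preceding Remark.

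\textbf{Main obstacle.} The difficulty is not the calculus, which is routine. It is making the reduction airtight: I need to check that $F_i>\tfrac14$ so the logarithms are defined, that $\mathcal{L}>0$ so the feasible set is nonempty, and that the throughput product in Lemma~\ref{lemma.throughputN} genuinely uses per-link factors $\frac{a_e}{a_e+c}$ rather than a common $a_e$. I would state the last point explicitly as the natural heterogeneous extension of Lemma~\ref{lemma.throughputN} under independent calibration.

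Per-link thresholds $L_e^{th}$ are not part of this statement; they are deferred to the later QLO theorem.
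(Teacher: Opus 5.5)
Your proof is correct, and it takes a genuinely different and stronger route than the paper. The paper's proof in Appendix~\ref{appendix.proofs} carries out only your Step~1, with $\mathcal{U}=1$, to reach the budget of Eq.~\eqref{eq:sumLi}. It then \emph{imposes} symmetry (``considering equal contribution from each link'') and takes the largest symmetric value $L_e=\mathcal{L}/N$. That shows $\Omega_L$ is the largest feasible \emph{uniform} allocation. It never compares against non-uniform points on the budget hyperplane, so the optimality claim is not actually established there; the hyperbola argument of Appendix~\ref{appendix:hyperboloids} is likewise restricted to the symmetric line. Your Step~3 supplies exactly the missing idea. Strict monotonicity of $g(L)=\ln\frac{L}{L+K}$ forces the budget to bind. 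Strict concavity, plus Jensen or KKT, then makes the uniform point the unique maximiser. The paper records this concavity only afterwards, in Remark~\ref{remark.concave}. This step is not cosmetic: symmetry of the problem alone does not make the symmetric point a maximiser, since for a convex separable objective it would be the minimiser. You also keep $\mathcal{U}$ general and recover Lemma~\ref{lemma.endtoendfidelityactivation}. Note that your $\Omega_L=\mathcal{L}/N$ then exceeds Eq.~\eqref{eq:omega_l} by $\frac1N\ln\mathcal{U}$, so it matches the paper's formula only at $\mathcal{U}=1$. The price of your route is that you need the heterogeneous product $\prod_e\frac{L_e}{L_e+K}$ rather than the literal common-$a_e$ form of Lemma~\ref{lemma.throughputN}. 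The paper itself uses that product in Problem~\ref{prob:problem1}, so stating it explicitly as you propose is fine.

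One correction to your final step. Your value is
$\Omega_F=\tfrac14+Ae^{-\Omega_L}=\tfrac14+\left(\tfrac43\right)^{(1-N)/N}\left(F_{ete}^{th}-\tfrac14\right)^{1/N}$.
This does \emph{not} literally coincide with Eq.~\eqref{eq:omega_f} or the preceding Remark, both of which omit the additive $\tfrac14$. Your expression is the right one. It is what the paper's own boxed relation $F_e=\tfrac14+A/e^{L_e}$ gives. It also reproduces the paper's two-link value $0.8923$ at $F_{ete}^{th}=0.8$, whereas Eq.~\eqref{eq:omega_f} as printed gives $0.6423$. You should flag this discrepancy rather than assert agreement.
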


\vspace{-0.5em}
\begin{align}
\Omega_F(F_{ete}^{th}, N) = F_e 
&= \left( \frac{4}{3} \right)^{\left( \frac{1 - N}{N} \right)} 
\left( F_{ete}^{th} - \frac{1}{4} \right)^{1/N}, 
&& \text{(F-space)}
\label{eq:omega_f}
\end{align}

\vspace{-0.5em}
\begin{align}
\Omega_L(F_{ete}^{th}, N) = L_e 
&= \left( \frac{N - 1}{N} \right) \ln \left( \frac{4}{3} \right) 
+ \ln(A) \nonumber\\
&\quad - \frac{1}{N} \ln \left( F_{ete}^{th} - \frac{1}{4} \right),
\quad \text{(L-space)}
\label{eq:omega_l}
\end{align}

\noindent where $A = F_e^M - \frac{1}{4}$.

\begin{proof}
    Refer to Appendix~\ref{appendix.proofs}.
\end{proof}
Throughout this work, we will mainly be working in L-space; we will omit the subscript $L$ whenever required and simply denote as $\Omega$ the optimal point. $\Omega_{\pi}$ denotes the optimal point for a particular path $\pi$.

\subsection{Quantum Link Orchestration (QLO) Theorem}\label{ssec:qlotheorem}
We now formally define the throughput optimisation problem accounting for the periodic calibration in a linear quantum chain as Problem~\ref{prob:problem1}. We then treat the problem analytically and provide its solution as Theorem~\ref{theorem.chain}.
\begin{problem}[Per-path calibration allocation]\label{prob:problem1}
For a given quantum network chain $G(V, E)$ (also known as path $\pi$) characterised by end-to-end fidelity threshold $F_{ete}^{th}$
where each link alternates between activation period $a_e$ and calibration period $c$ with each link $e$ having the initial fidelity thresholds $F_e^{th}$
, the throughput optimisation problem is to find the activation periods $a_e$
for each link satisfying the individual initial fidelity threshold and the end-to-end fidelity threshold such that the throughput of end-to-end entanglement across the path $\pi$ is maximised. We work in $L$-space with $L_e=\Gamma a_e$ and $K=\Gamma c$, so that
$\frac{a_e}{a_e+c}=\frac{L_e}{L_e+K}$, end-to-end log-fidelity budget
$\mathcal{L}=(N-1)\ln\!\frac{4}{3}+N\ln A-\ln\!\bigl(F_{ete}^{th}-\tfrac14\bigr)$,
and per-link log-fidelity caps $L_e^{th}$, and find the log-fidelity allocation
$\{L_e\}_{e\in\pi}$ solving
\begin{align}
\max_{\{L_e\}_{e\in\pi}}\quad
& \mathcal{T}_\pi \;=\; C\prod_{e\in\pi}\frac{L_e}{L_e+K} \nonumber\\
\text{s.t.}\quad
& \sum_{e\in\pi} L_e \;\le\; \mathcal{L}, \label{eq:P1-budget}\\
& 0 \;\le\; L_e \;\le\; L_e^{th}, \qquad \forall\, e\in\pi, \label{eq:P1-cap}
\end{align}
where $C=\mathcal{A}\,p^{N}q^{N-1}$ and the activation periods are recovered as
$a_e = L_e/\Gamma$.
\end{problem}
\begin{remark}[Separable concave structure]\label{remark.concave}
Since $K=\Gamma c$ is fixed, $\ln\mathcal{T}_\pi = \ln C + \sum_{e\in\pi}\ln\frac{L_e}{L_e+K}$
is a sum of separable terms $\phi(L_e)=\ln\frac{L_e}{L_e+K}$, each strictly
increasing and strictly concave on $(0,\infty)$ ($\phi''(L_e)=\frac{1}{(L_e+K)^2}-\frac{1}{L_e^2}<0$).
Problem~\ref{prob:problem1} is therefore the maximisation of a separable concave
objective over a convex polytope, solved optimally by Theorem~\ref{theorem.chain}.

A geometric view of the throughput level sets as a family of hyperboloids,
with the optimum as the symmetric point closest to $\Omega_L$, is given in
Appendix~\ref{appendix:hyperboloids}.
\end{remark}
\begin{theorem}[Quantum Link Orchestration Theorem]\label{theorem.chain}
    Consider a quantum network chain (also known as path $\pi$) having optimal point $\Omega_{\pi}$ comprising $N$ links, where each link $e$ operates independently and sequentially undergoes an activation period $a_e$ and a calibration period $c$. Let $L_i^{th}$ denote the threshold fidelity for initial entanglement generation on the $i^{th}$ link and $F_{ete}^{th}$ denote the end-to-end fidelity threshold for the network chain. Then, the throughput $\mathcal{T}$ of the quantum network is maximised by allocating links $L_i$ as follows
    \begin{enumerate}
        \item For $ L_i^{th} \leq \Omega_{\pi}: L_i = L_i^{th}$
        \item For $\Omega_{\pi} < L_i^{th}: L_i = \Omega_{\pi}$
        \item For $L_j^{th} \leq \Omega_{\pi} \ \& \ \Omega_{\pi} < L_k^{th}$
        \begin{enumerate}
            \item if $\sum L_j^{th} + \sum L_k^{th} \leq \mathcal{L}: L_j = L_j^{th} \ \& \ L_k = L_k^{th}$
            \item if $\sum L_j^{th} + \sum L_k^{th} > \mathcal{L}: L_j = L_j^{th}$ \ \& \ repeat the procedure for $L_k $ links with their corresponding initial fidelity thresholds $L_k^{th} $ and optimal point for the path $\Omega_{\pi}^{\gamma} = \frac{1}{k} \left[ (j+k) \; \Omega_{\pi} - \left( \sum L_j^{th}\right) \right]$. Here $\gamma \in \mathbb{N}$ denotes the level of optimisation.
        \end{enumerate}
    \end{enumerate}
    where $\sum_{i=0}^{N-1} L_i$ $\leq$ $\mathcal{L} = (N-1) \ln{\left( \frac{4}{3} \right)} + N \ln{A} - \ln{ \left( F_{ete}^{th} - \frac{1}{4} \right)} $
\end{theorem}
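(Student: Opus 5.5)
We view Problem~\ref{prob:problem1} through Remark~\ref{remark.concave}: maximising $\mathcal{T}_\pi$ is the same as maximising $\sum_{e\in\pi}\phi(L_e)$ with $\phi(L)=\ln\frac{L}{L+K}$, which is strictly increasing and strictly concave. The feasible set is the polytope defined by \eqref{eq:P1-budget}--\eqref{eq:P1-cap}. The problem is therefore convex with affine constraints, so Slater's condition holds whenever the feasible set has nonempty interior, and the KKT conditions are necessary and sufficient. We will write the Lagrangian with multiplier $\lambda\ge 0$ for the budget and $\mu_e\ge 0$ for the caps $L_e\le L_e^{th}$. The lower bounds $L_e\ge 0$ are inactive because $\phi\to-\infty$ as $L_e\to 0$. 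Stationarity reads $\phi'(L_e)=\frac{K}{L_e(L_e+K)}=\lambda+\mu_e$.

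Since $\phi'$ is strictly decreasing, this yields the water-filling form
\[
L_e=\min\{L_e^{th},\,\theta\},\qquad \theta=(\phi')^{-1}(\lambda),
\]
with a common level $\theta$ for all uncapped links. The level is chosen so that either the budget binds, $\sum_e\min\{L_e^{th},\theta\}=\mathcal{L}$, or $\lambda=0$ and $\theta=\infty$, in which case every link sits at its cap. Uniqueness follows from strict concavity. The remaining work is to show that the recursive rule in the statement produces exactly this water-filling solution. We will also check that $\Omega_\pi=\mathcal{L}/N$, which is what Theorem~\ref{theorem.equalactivation} gives via \eqref{eq:omega_l}, so that $\Omega_\pi$ is the equal-share level when no cap is active.

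\textbf{Case 1.} If every $L_i^{th}\le\Omega_\pi$, then $\sum_i L_i^{th}\le N\Omega_\pi=\mathcal{L}$, so the caps alone are feasible. Monotonicity (Theorem~\ref{theorem.optimalthroughput}) then gives $L_i=L_i^{th}$.

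\textbf{Case 2.} If every $L_i^{th}>\Omega_\pi$, we take $\theta=\Omega_\pi$. The budget is tight, no cap is active, and we recover Theorem~\ref{theorem.equalactivation}.

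\textbf{Case 3(a).} If the caps fit in the budget, the caps alone are feasible and again optimal by monotonicity.

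\textbf{Case 3(b).} Here the budget binds. We first argue that the links with $L_j^{th}\le\Omega_\pi$ are capped in the optimum. Since the budget binds, $\theta$ satisfies $\sum\min\{L^{th},\theta\}=N\Omega_\pi$. If $\theta<\Omega_\pi$, every term would be at most $\theta$ and hence the sum would be less than $N\Omega_\pi$, a contradiction. So $\theta\ge\Omega_\pi\ge L_j^{th}$, and every $j$-link is capped. Fixing those links leaves the subproblem on the $k$ links with residual budget $\mathcal{L}-\sum L_j^{th}$. Its equal-share level is $\frac{1}{k}\bigl[(j+k)\Omega_\pi-\sum L_j^{th}\bigr]=\Omega_\pi^{\gamma}$, which matches the stated formula. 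The subproblem has the same structure, so the rule applies recursively. Termination follows because each level of the recursion either stops (Cases 1, 2, or 3(a)) or strictly shrinks the set of unfixed links. Note also that $\Omega^\gamma_\pi\ge\Omega_\pi$, so the levels increase monotonically, which is consistent with the water-filling picture.

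The main obstacle is the Case 3(b) reduction: justifying rigorously that fixing the $j$-links at their caps loses no optimality. This needs the lower bound $\theta\ge\Omega_\pi$ derived above, together with a careful handling of the boundary where $\sum\min\{L^{th},\theta\}$ changes slope, so that the recursion and the KKT level coincide. We will also note the feasibility assumption $\mathcal{L}>0$, which is needed for the problem to have any feasible point with $L_e>0$.
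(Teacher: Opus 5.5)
Your argument is correct, but it takes a different route from the paper. The paper proves the theorem by a direct case analysis that mirrors Algorithm~\ref{alg:QLO}. It invokes Theorem~\ref{theorem.optimalthroughput} (push links to their caps when the budget allows) and Theorem~\ref{theorem.equalactivation} (use the symmetric point $\Omega_\pi$ when it does not). In Case~3(b) it simply fixes the $j$-links at $L_j^{th}$, recomputes the optimal point on the remaining $k$ links, and relies on Appendix~\ref{appendix:optimal_point_proof} ($\Omega_\pi^{\gamma+1}\ge\Omega_\pi^{\gamma}$) to keep the recursion consistent. It never exhibits an optimality certificate. So the optimality of equal splitting in Case~2, and the claim that fixing the $j$-links loses nothing in Case~3(b), are essentially asserted; the proof of Theorem~\ref{theorem.equalactivation} only locates the symmetric point on the budget hyperplane.

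Your KKT/water-filling characterisation $L_e=\min\{L_e^{th},\theta\}$, together with $\Omega_\pi=\mathcal{L}/N$ and the bound $\theta\ge\Omega_\pi$, supplies exactly that certificate and also gives uniqueness. The step you flag as the main obstacle is already closed by what you wrote:
\begin{itemize}
\item $\min\{L_j^{th},\theta\}=L_j^{th}$ holds even when $\theta=L_j^{th}$, so there is no boundary issue.
\item By separability, the restriction of the unique optimum to the $k$-links is the unique optimum of the residual problem with budget $k\,\Omega_\pi^{\gamma}$. Induction on the number of unfixed links then finishes the proof.
\end{itemize}

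Alternatively, you can check the output directly. Once Case~3(b) is entered, the thresholds of the unresolved links keep summing to more than the residual budget, so the recursion can only terminate in the all-above case. Its output is then $\min\{L_e^{th},\Omega_\pi^{\gamma_{\max}}\}$ with a tight budget; monotonicity of $\Omega_\pi^{\gamma}$ ensures every fixed link has $L_e^{th}\le\Omega_\pi^{\gamma_{\max}}$. This satisfies KKT with $\lambda=\phi'(\Omega_\pi^{\gamma_{\max}})>0$ and $\mu_e=\phi'(L_e^{th})-\lambda\ge0$.

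Since the constraints are affine, you do not need Slater. You only need $\mathcal{L}>0$ and $L_e^{th}>0$, so that an optimum with all $L_e>0$ exists.

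In short, the paper's route buys an algorithm-shaped narrative that matches the pseudocode step by step. Yours buys an actual proof of optimality and uniqueness.
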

\begin{proof}
    According to Theorem~\ref{theorem.optimalthroughput}, $L_i = L_i^{th}$. In addition, from Theorem~\ref{theorem.equalactivation} (Eq.~\eqref{eq:sumLi}), the end-to-end fidelity defines the constraint that is, $\sum_{i=0}^{N-1} L_i \leq (N-1) \ln{\left( \frac{4}{3} \right)} + N \ln{A} - \ln{ \left( F_{ete}^{th} - \frac{1}{4} \right)}$. Let the individual thresholds on the initial fidelity be $L_i \leq L_i^{th}$. 
    Essentially, the subsection of N-dim L-hyperspace under consideration is defined by
    \begin{enumerate}
        \item Equation of optimality: $L_i = L_i^{th}$
        \item End-to-end fidelity constraint: $\sum_{i=0}^{N-1} L_i \leq (N-1) \ln{\left( \frac{4}{3} \right)} + N \ln{A} - \ln{ \left( F_{ete}^{th} - \frac{1}{4} \right)}$
        \item Individual initial fidelity threshold constraint: $L_i \leq L_i^{th}$
    \end{enumerate}
    Consider the following cases:
    \begin{enumerate}
        \item For $L_i^{th} \leq \Omega_{\pi}$, all the individual initial fidelity thresholds are less than the optimal point $\Omega_{\pi}$. In such a case, constraints are fulfilled trivially; hence, using the equation of optimality, the throughput is maximised for $L_i=L_i^{th}.$
        \item For $\Omega_{\pi}<L_i^{th}$, all the individual initial fidelity thresholds are more than the optimal point $\Omega_{\pi}.$ In such a case, operating at optimal using the equation of optimality would violate constraint 2. Hence, to satisfy constraint 2, a simple optimal point is used, that is, the throughput is maximised when $L_i=\Omega_{\pi}.$
        \item For $L_j^{th} \leq \Omega_{\pi} \ \& \ \Omega_{\pi} < L_k^{th}$, that is, $j$ of the individual initial fidelity thresholds are less than the optimal point $\Omega_{\pi}$, and $k$ of the individual initial fidelity thresholds are more than the optimal point $\Omega_{\pi}$. In such a case, constraint 2 can be satisfied or violated depending upon the individual initial fidelity thresholds. 
        \begin{enumerate}
            \item If constraint 2 is satisfied that is, $\sum L_j^{th} + \sum L_k^{th} \leq \mathcal{L} $ where $\mathcal{L} = \sum_{i=0}^{N-1} L_i = (N-1) \ln{\left( \frac{4}{3} \right)} + N \ln{A} - \ln{ \left( F_{ete}^{th} - \frac{1}{4} \right)} $, then similar to case 1, using the equation of optimality, the throughput is maximized for $L_j=L_j^{th} \ \& \ L_k=L_k^{th}.$
            \item If the constraint 2 is violated, that is, $\sum L_j^{th} + \sum L_k^{th} > \mathcal{L}$ where $\mathcal{L} = \sum_{i=0}^{N-1} L_i = (N-1) \ln{\left( \frac{4}{3} \right)} + N \ln{A} - \ln{ \left( F_{ete}^{th} - \frac{1}{4} \right)} $ then, $j$ of the links are operated using the equation of optimality that is, $L_j=L_j^{th}$ and hence reducing the original $N$-dim $L$-hyper-space to $(N-j)$-dim $L$-hyper-space. The remaining $k$ of the links are evaluated again with an increased optimal point given by $\Omega_{\pi}^{1} = \frac{1}{k} \left[ (j+k) \Omega_{\pi} - \left( \sum L_j^{th}\right) \right]$ following the same procedure. See Appendix~\ref{appendix:optimal_point_proof} for proof of $\Omega^{\gamma+1}_{\pi} \geq \Omega^{\gamma}_{\pi}$. The recursion follows until all the links are optimally allocated at the level of optimisation $\gamma$ with optimal point $\Omega_{\pi}^{\gamma+1} = \frac{1}{k} \left[ 
 (j+k) \; \Omega_{\pi}^{\gamma} - \sum_j L_j^{th} \right].$
        \end{enumerate}
    \end{enumerate}
\end{proof}

Algorithm~\ref{alg:QLO} implements Theorem~\ref{theorem.chain} constructively.
\textbf{Lines~1--7} initialise the optimisation level $\gamma=0$, set the current
optimal point $\Omega\gets\Omega_\pi$, and define the fixed set $\mathcal{F}=\emptyset$
and remaining set $\mathcal{R}=\{1,\ldots,N\}$ with all allocations zeroed.
\textbf{Line~9} identifies $\mathcal{S}\subseteq\mathcal{R}$, the subset of links
whose thresholds already lie at or below $\Omega$; if $\mathcal{S}=\emptyset$
(\textbf{lines~10--15}), all remaining links are set to $\Omega$, and the algorithm
returns immediately.
\textbf{Lines~16--17} fix every link $i\in\mathcal{S}$ at its threshold
$L_i\gets L_i^{th}$, moving it from $\mathcal{R}$ to $\mathcal{F}$; if
$\mathcal{R}$ is now empty (\textbf{lines~21--23}), the algorithm terminates.
\textbf{Lines~24--30} perform a feasibility check: if the fixed allocations
plus the thresholds of all remaining links still satisfy the end-to-end budget
$S_F + S_R^{th}\le\mathcal{L}$, the remaining links are set to their thresholds
and the algorithm returns.
Otherwise (\textbf{lines~32--35}), the budget would be violated, so the optimal
point is updated as
$\Omega\gets\frac{1}{k}\bigl((j+k)\Omega_\pi - S_F\bigr)$,
the optimisation level is incremented, and the loop repeats on the reduced
set $\mathcal{R}$.
A detailed explanation of each step is given in Appendix~\ref{appendix:allocation_algorithm}. A worked two-link example illustrating Cases~1--3 of Theorem~\ref{theorem.chain},
including the throughput heat maps in F- and L-space, is provided in
Appendix~\ref{sec:two_link_example}.

\begin{algorithm}[t]
\caption{Quantum Link Orchestration for a Path $\pi$}
\label{alg:QLO}
\begin{algorithmic}[1]
\REQUIRE Path $\pi$ with $N$ links, initial fidelity thresholds $\{L_i^{th}\}_{i=1}^N$,
end-to-end budget $\mathcal{L}$ and base optimal point $\Omega_\pi$ both derived from end-to-end fidelity threshold $F_{ete}^{th}$
\ENSURE Allocated link values $\{L_i\}_{i=1}^N$

\STATE Initialise optimisation level $\gamma \gets 0$
\STATE Set current optimal point $\Omega \gets \Omega_\pi$
\STATE Initialise fixed set $\mathcal{F} \gets \emptyset$
\STATE Initialise remaining set $\mathcal{R} \gets \{1,2,\ldots,N\}$
\FOR{$i=1$ \TO $N$}
    \STATE $L_i \gets 0$
\ENDFOR

\WHILE{$\mathcal{R} \neq \emptyset$}

    \STATE Identify $\mathcal{S} \gets \{ i \in \mathcal{R} \mid L_i^{th} \le \Omega \}$

    \IF{$\mathcal{S} = \emptyset$}
        \FOR{$i \in \mathcal{R}$}
            \STATE $L_i \gets \Omega$
        \ENDFOR
        \RETURN $\{L_i\}_{i=1}^N$
    \ENDIF
    
    \FOR{$i \in \mathcal{S}$}
        \STATE $L_i \gets L_i^{th}$
        \STATE $\mathcal{F} \gets \mathcal{F} \cup \{i\}$
        \STATE $\mathcal{R} \gets \mathcal{R} \setminus \{i\}$
    \ENDFOR

    \IF{$\mathcal{R} = \emptyset$}
        \RETURN $\{L_i\}_{i=1}^N$
    \ENDIF

    \STATE Compute $S_F \gets \sum_{i \in \mathcal{F}} L_i$
    \STATE Compute $S_R^{th} \gets \sum_{i \in \mathcal{R}} L_i^{th}$

    \IF{$S_F + S_R^{th} \le \mathcal{L}$}
        \FOR{$i \in \mathcal{R}$}
            \STATE $L_i \gets L_i^{th}$
        \ENDFOR
        \RETURN $\{L_i\}_{i=1}^N$
    \ELSE
        \STATE $j \gets |\mathcal{F}|$ \hspace{1em} $k \gets |\mathcal{R}|$
        \STATE Update optimal point:
        \STATE $\Omega \gets \frac{1}{k}\Big( (j+k)\Omega_\pi - S_F \Big)$
        \STATE $\gamma \gets \gamma + 1$
    \ENDIF

\ENDWHILE

\end{algorithmic}
\end{algorithm}

\subsection{Complexity and Optimization Depth}\label{ssec:complexity}
\begin{corollary}[Worst-Case Complexity]
\label{cor:qlo_complexity}
In the worst case, Algorithm~\ref{alg:QLO} runs in $O(N^2)$ time.
\end{corollary}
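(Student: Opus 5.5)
The proof will bound the number of iterations of the \texttt{while} loop in Algorithm~\ref{alg:QLO}, bound the cost of one iteration, and multiply the two. The key invariant is that every iteration that does not return removes at least one link from $\mathcal{R}$. Such an iteration must pass the test at lines~10--15 without returning, so $\mathcal{S}\neq\emptyset$. Lines~16--20 then move every $i\in\mathcal{S}$ from $\mathcal{R}$ to $\mathcal{F}$, and $|\mathcal{R}|$ drops by $|\mathcal{S}|\ge 1$. Every other outcome returns: $\mathcal{S}=\emptyset$, $\mathcal{R}$ emptied, or the budget check at lines~24--30 succeeding. Since $|\mathcal{R}|=N$ initially, the loop runs at most $N$ times, so $\gamma\le N$.

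Next I would bound the work per iteration by $O(|\mathcal{R}|+|\mathcal{F}|)=O(N)$. Building $\mathcal{S}$ scans $\mathcal{R}$ once. The assignment loops (lines~11--13, 16--20, 26--28) each touch at most $|\mathcal{R}|$ links. Recomputing $S_F$ and $S_R^{th}$ from scratch costs $O(|\mathcal{F}|)+O(|\mathcal{R}|)\le O(N)$. The update of $\Omega$, together with $j$ and $k$, is $O(1)$ once $S_F$ is known. The initialisation in lines~1--7 is $O(N)$. The total is therefore $O(N)+N\cdot O(N)=O(N^2)$.

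The main obstacle is justifying that the bound is attained, which is what the word ``worst-case'' suggests. I would exhibit thresholds that are strictly increasing and spaced so that each level $\gamma$ admits exactly one new link into $\mathcal{S}$. Each updated optimal point $\Omega_\pi^{\gamma+1}=\frac1k[(j+k)\Omega_\pi-S_F]$ must exceed exactly one further threshold, and the budget check must keep failing until the end. This requires the monotonicity $\Omega^{\gamma+1}_\pi\ge\Omega^\gamma_\pi$ from Appendix~\ref{appendix:optimal_point_proof}, together with a suitably tight $\mathcal{L}$.

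The sum over iterations is then $\sum_{k=1}^{N}\Theta(N)=\Theta(N^2)$, because $S_F$ and the scan of $\mathcal{R}$ together cost $\Theta(N)$ at every level. If this construction proves delicate, it suffices to note that the $O(N^2)$ upper bound does not depend on it. One could also remark that maintaining $S_F$ and $S_R^{th}$ incrementally, with the thresholds pre-sorted, would reduce the cost to $O(N\log N)$, but the corollary only claims the upper bound for the algorithm as written.
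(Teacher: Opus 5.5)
Your argument is correct and follows the paper's proof: the paper also uses the fact that at least one link leaves $\mathcal{R}$ on every iteration that does not return, charges $O(|\mathcal{R}|)$ for the scan plus linear-time updates of the sums, and obtains $\sum_{m=1}^{N} O(m)=O(N^2)$. Your explicit check that $\mathcal{S}\neq\emptyset$ whenever an iteration continues tightens the paper's ``exactly one link is fixed per iteration'' phrasing, and the lower-bound construction is not needed for the stated $O(N^2)$ upper bound.
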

\begin{proof}
At each iteration, the algorithm scans the remaining set $\mathcal{R}$ to identify
$\mathcal{S}=\{i\in\mathcal{R}\mid L_i^{th}\le \Omega\}$, which costs $O(|\mathcal{R}|)$.
In the worst case, exactly one link is fixed per iteration, so $|\mathcal{R}|$ decreases
from $N$ to $1$. Hence, the total scanning cost is
$\sum_{m=1}^{N} O(m)= O\left( \frac{N(N+1)}{2}\right)= O(N^2)$.
All additional per-iteration updates and sum computations are at most linear in $N$,
therefore the overall worst-case time complexity is $O(N^2)$.
\end{proof}

\begin{remark}[Optimisation Level]
\label{remark.optlevel}
Algorithm~\ref{alg:QLO} proceeds through a sequence of \emph{optimisation levels} indexed by
$\gamma \in \mathbb{N}$, where each level corresponds to an update of the effective optimal point
$\Omega_\pi^\gamma$, which is used to allocate the links that remain unresolved.

At the beginning of level $\gamma$, let $\mathcal{R}^\gamma$ denote the set of \emph{remaining} (unfixed) links,
and let $\mathcal{F}^\gamma$ denote the set of links already fixed at earlier levels. The current optimal point is
$\Omega_\pi^\gamma$ (with $\Omega_\pi^0 = \Omega_\pi$). The remaining links are partitioned as
\[
\mathcal{J}^\gamma \;=\; \{\, i \in \mathcal{R}^\gamma : L_i^{th} \le \Omega_\pi^\gamma \,\}, \
\mathcal{K}^\gamma \;=\; \{\, i \in \mathcal{R}^\gamma : L_i^{th} > \Omega_\pi^\gamma \,\}.
\]
All links in $\mathcal{J}^\gamma$ are \emph{fixed} at their thresholds, i.e., $L_i = L_i^{th}$ for
$i \in \mathcal{J}^\gamma$, and moved from $\mathcal{R}^\gamma$ to $\mathcal{F}^\gamma$.

Next, the algorithm checks whether allocating the remaining links in $\mathcal{R}^\gamma$ at their thresholds
respects the end-to-end budget. Writing
\[
S_F^\gamma \;=\; \sum_{i \in \mathcal{F}^\gamma} L_i,
\qquad
S_R^{th,\gamma} \;=\; \sum_{i \in \mathcal{R}^\gamma} L_i^{th},
\]
if $S_F^\gamma + S_R^{th,\gamma} \le \mathcal{L}$ then the procedure terminates by assigning
$L_i = L_i^{th}$ for all $i \in \mathcal{R}^\gamma$.

Otherwise, the budget is violated, and the algorithm updates the optimal point for the unresolved links.
Let $j^\gamma = |\mathcal{F}^\gamma|$ and $k^\gamma = |\mathcal{R}^\gamma|$. The next-level optimal point is
\[
\Omega_\pi^{\gamma+1}
\;=\;
\frac{1}{k^\gamma}\Bigl( (j^\gamma + k^\gamma)\,\Omega_\pi^{\gamma} \;-\; S_F^\gamma \Bigr),
\]
which equates to the update rule in Theorem~\ref{theorem.chain}. The algorithm then continues to level
$\gamma+1$ on the reduced set of remaining links.

Since at least one link is fixed whenever $\mathcal{J}^\gamma \neq \emptyset$, the process terminates in a finite
number of levels, and the final allocation assigns each link either to its threshold $L_i^{th}$ or according to the
appropriate updated optimal point $\Omega_\pi^\gamma$.

The number of distinct allocation configurations that can arise across
$\gamma$ levels, and its extension to $P$ overlapping paths, is characterised
in Appendix~\ref{appendix:orchestration_cases} (Corollary~\ref{cor:number-of-cases}).
\end{remark}

\section{Calibration aware orchestration in general networks}\label{sec:calibration_general}
Having established the optimal solution for calibration-aware orchestration in a linear quantum network chain, we now explore the problem in a general network. 
\begin{problem}[General-network calibration allocation]\label{prob:P2}
Let $G=(V,E)$ carry a set of fixed source--destination paths
$P=\{\pi_1,\dots,\pi_{|P|}\}$ (one per SD pair, obtained by a given routing
policy). For each link $e\in E$ let $K=\Gamma c$ and per-link cap $L_e^{th}$,
and for each path $\pi\in P$ let $\mathcal{L}_\pi$ be its end-to-end log-fidelity
budget. Let $\Pi_e=\{\pi\in P : e\in\pi\}$ denote the paths traversing link $e$.
A single physical activation period is enforced per link, i.e. one variable
$L_e$ shared by all $\pi\in\Pi_e$. Find $\{L_e\}_{e\in E}$ solving
\begin{align}
\max_{\{L_e\}_{e\in E}}\quad
& T \;=\; \sum_{\pi\in P} C_\pi \prod_{e\in\pi}\frac{L_e}{L_e+K} \label{eq:P2-obj}\\
\text{s.t.}\quad
& \sum_{e\in\pi} L_e \;\le\; \mathcal{L}_\pi, \qquad \forall\, \pi\in P, \label{eq:P2-budget}\\
& 0 \;\le\; L_e \;\le\; L_e^{th}, \qquad\quad\ \, \forall\, e\in E. \label{eq:P2-cap}
\end{align}
The objective in Eq.~\eqref{eq:P2-obj} sums only the throughput of paths meeting their
own end-to-end fidelity threshold $F_{ete,\pi}^{th}$; the activation periods are
recovered as $a_e = L_e/\Gamma$.
\end{problem}
\begin{remark}[Loss of separable concave structure]\label{remark.network_nonconvex}
Unlike the per-path objective of Problem~\ref{prob:problem1}, the network
objective in Eq.~\eqref{eq:P2-obj} is a \emph{sum} of the per-path throughputs
$\mathcal{T}_\pi = C_\pi\prod_{e\in\pi}\frac{L_e}{L_e+K}$. Each $\mathcal{T}_\pi$
is log-concave (Remark~\ref{remark.concave}), but log-concavity is not preserved
under addition, and a shared link couples its paths through a common variable
$L_e$. The resulting objective $T$ is therefore neither separable nor jointly
concave, so the Problem~\ref{prob:P2} is a non-convex program for which no
closed-form per-link optimum analogous to Theorem~\ref{theorem.chain} is available.
When no links are shared, $T$ decouples into the disjoint per-path problems of
Problem~\ref{prob:problem1}, each solved optimally by QLO. With shared links, we
resort to two feasible strategies: the conservative MIN/GRO heuristics
(Section~\ref{ssec:minimum_selection_rule},~\ref{ssec:greedy_qlo}) and the
numerical optimiser NUM (Section~\ref{ssec:approaches}), which attains only a
local optimum of this non-convex problem.
\end{remark}
We first extend QLO to a general quantum network by utilising the \emph{minimum selection rule} (MIN), which serves as a baseline heuristic (Section~\ref{ssec:minimum_selection_rule}). We then propose a greedy quantum link orchestration (GRO) heuristic for the calibration problem in general quantum networks (Section~\ref{ssec:greedy_qlo}).
\begin{remark}[On the notion of a shared link]
    In the quantum networking literature, the term \emph{link} is used in two distinct senses: 
    (i)~the \emph{physical link}, comprising a quantum channel and/or a classical control 
    channel connecting neighbouring nodes, and (ii)~the \emph{entanglement pair} distributed 
    between those nodes in the abstract, protocol-level sense. The physical link is a reusable 
    resource since it can generate entanglement pairs repeatedly across successive activation 
    periods, whereas a distributed entanglement pair is consumed upon use, e.g., during 
    entanglement swapping to extend entanglement over longer distances. Throughout this section, 
    \emph{shared link} refers exclusively to the physical link in sense~(i).

    Shared-link paths naturally arise when the activation period spans multiple time slots for a fixed set of source and destination pairs utilising end-to-end entanglement. In such a scenario, while two paths within the same time slot may not share links, shared links can still emerge across paths belonging to different time slots, since routing paths are recalculated after each time slot as depicted in Fig.~\ref{fig:calibration_timeslot}. The red blocks $\epsilon_i$ are individual entangled-pair deliveries on the end-to-end timeline. The orange blocks $\theta_0, ...,\,\theta_3 $ are the \textit{network level request processing time slots} where for first block $\theta_0$ a certain number of entangled pair are delivered between source-destination pair members ({(Alice 1, Bob 2), (Alice 2, Bob 1)}) -- same color for same source-destination pair until one (or more) members have met its required number of entangled pair under current request or admission of incoming request under a given protocol thereby ending processing time slot. For the next slot, again the same procedure follows, which allows additional source-destination members with their own entangled pair requirements.
\begin{figure*}[!t]
\centering
\includegraphics[width=2\columnwidth]{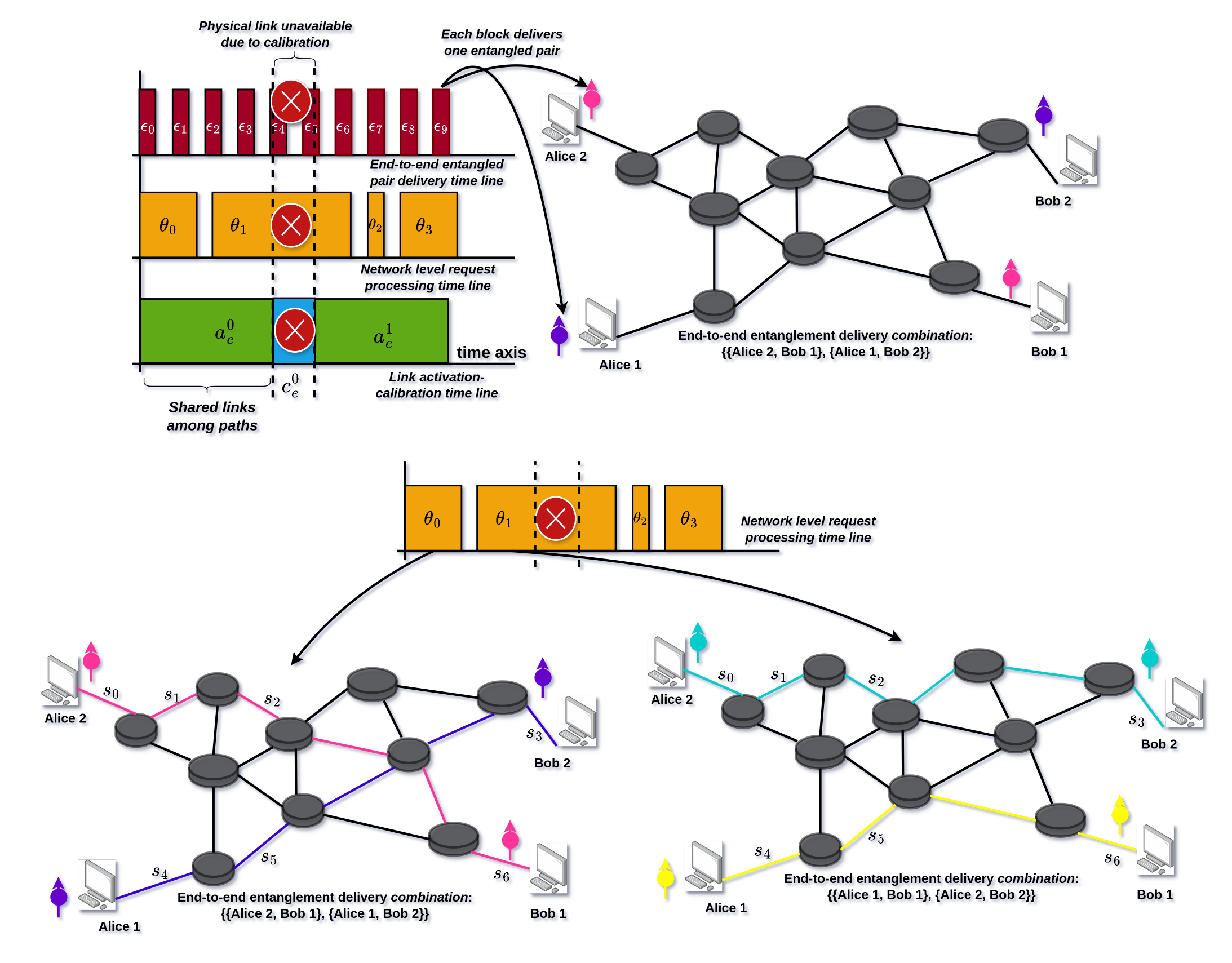}
\caption{\textbf{Timing diagram.} (Top) Each red block represents the delivery of a single entangled pair between a dedicated source–destination pair in a combination. Each orange block denotes the processing time associated with a request for delivering a specific number of entangled pairs between given source–destination pairs. The green and blue blocks correspond to the activation and calibration periods, respectively, of physical link $e$. The overlap of the activation period $a_e^0$ with the time slots of requests $\theta_0$ and $\theta_1$ induces link sharing among the routing paths involved in the delivery of entangled pairs, particularly for the blocks $\epsilon_i$, $\forall i \in \{0,1,2,3\}$. (Bottom) The connections switch from \{\{Alice 2, Bob 1\}, \{Alice 1, Bob 2\}\} in time slot $\theta_0$ to \{\{Alice 1, Bob 1\}, \{Alice 2, Bob 2\}\} in time slot $\theta_1$, inducing overlap on the shared links $s_0, s_1, \dots, s_6$.
}
\label{fig:calibration_timeslot}
\end{figure*}
\end{remark}
\subsection{Minimum selection rule benchmark}\label{ssec:minimum_selection_rule}

\begin{proposition}[General overlapping-link calibration via minimum selection rule]
\label{prop:general-overlap}
Consider a set of $P$ paths $\{\pi_1, \ldots, \pi_P\}$ sharing 
a common physical graph $G = (V, E)$, obtained by routing $P$ 
source-destination pairs via a general routing policy. By 
Theorem~\ref{theorem.chain}, each path $\pi_k$ admits an individually optimal 
QLO allocation $L_{\{\pi_k, e\}}$ for every link $e \in \pi_k$.

For any link $e \in E$, let $\Pi_e \subseteq \{\pi_1,\ldots,\pi_P\}$ denote 
the set of paths that traverse $e$. A feasible benchmark allocation is 
obtained by assigning
\[
    L_e = \min_{\pi \in \Pi_e}\; L_{\{\pi, e\}},
    \qquad \forall\, e \in E \text{ with } \Pi_e \neq \varnothing.
\]
Links not traversed by any path incur no overlap constraint.
\end{proposition}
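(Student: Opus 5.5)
The claim to establish is feasibility: the allocation $L_e=\min_{\pi\in\Pi_e}L_{\{\pi,e\}}$ satisfies every constraint of Problem~\ref{prob:P2}, namely the per-link caps \eqref{eq:P2-cap} and every per-path budget \eqref{eq:P2-budget}. The plan is to reduce both checks to properties that each individual QLO allocation already has by Theorem~\ref{theorem.chain}. Then we observe that taking a minimum over paths can only lower every coordinate. Since both constraint families are downward closed on the nonnegative orthant, lowering coordinates preserves them.

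\textbf{Step 1 (per-path QLO feasibility).} For each path $\pi$, the output of Algorithm~\ref{alg:QLO} satisfies $0\le L_{\{\pi,e\}}\le L_e^{th}$ for all $e\in\pi$, and $\sum_{e\in\pi}L_{\{\pi,e\}}\le\mathcal{L}_\pi$. I would verify this case by case following Theorem~\ref{theorem.chain}. Links fixed at $L_e^{th}$ trivially respect the cap. Links set to an optimal point $\Omega_\pi^\gamma$ are only set so when $\Omega_\pi^\gamma< L_e^{th}$. For the budget:
\begin{itemize}
\item In case 3(a) the sum is $\le\mathcal{L}_\pi$ by the test itself.
\item In the terminating branch where all remaining links take $\Omega_\pi^\gamma$, the update rule $\Omega^{\gamma+1}_\pi=\frac1k[(j+k)\Omega^\gamma_\pi-S_F]$ is constructed exactly so that $S_F+k\,\Omega^{\gamma+1}_\pi=(j+k)\Omega^\gamma_\pi$. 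Telescoping back to level $0$ gives a total of $N\Omega_\pi=\mathcal{L}_\pi$, using $\Omega_\pi=\mathcal{L}_\pi/N$ from Eq.~\eqref{eq:omega_l} and the definition of $\mathcal{L}$.
\item When the loop ends because all links land in $\mathcal{S}$, each $L_e^{th}\le\Omega^\gamma_\pi$, so the same telescoping bound gives $\le\mathcal{L}_\pi$.
\end{itemize}
Nonnegativity requires $\Omega^\gamma_\pi\ge0$, which follows from the monotonicity $\Omega^{\gamma+1}_\pi\ge\Omega^\gamma_\pi$ in Appendix~\ref{appendix:optimal_point_proof} together with $\Omega_\pi\ge0$ (assumed whenever the path is feasible at all).

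\textbf{Step 2 (monotone reduction).} For any $e$ with $\Pi_e\ne\varnothing$ and any $\pi\in\Pi_e$, we have $L_e=\min_{\pi'\in\Pi_e}L_{\{\pi',e\}}\le L_{\{\pi,e\}}$. Moreover $L_e$ equals $L_{\{\pi^*,e\}}$ for some $\pi^*$, so $0\le L_e\le L_e^{th}$ by Step 1, which gives \eqref{eq:P2-cap}. For a fixed path $\pi$, summing the inequality over $e\in\pi$ gives $\sum_{e\in\pi}L_e\le\sum_{e\in\pi}L_{\{\pi,e\}}\le\mathcal{L}_\pi$, which gives \eqref{eq:P2-budget}. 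Links with $\Pi_e=\varnothing$ appear in no budget constraint and only in the cap, so setting them to any value in $[0,L_e^{th}]$ (e.g.\ $L_e^{th}$) keeps feasibility. This is the content of the final sentence of the statement.

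\textbf{Main obstacle.} The only nontrivial point is the budget part of Step 1: the telescoping identity for the recursive $\Omega^\gamma_\pi$, and the fact that links fixed at thresholds $\le\Omega^\gamma_\pi$ never push the sum past $k\,\Omega^{\gamma}_\pi$ plus the already fixed mass. I would prove by induction on $\gamma$ that $S_F^\gamma+|\mathcal{R}^\gamma|\,\Omega^\gamma_\pi=\mathcal{L}_\pi$ at the start of each level. Every termination branch then yields a sum $\le\mathcal{L}_\pi$ directly. The rest of the argument is the one-line monotonicity observation.
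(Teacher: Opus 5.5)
Your proposal is correct and matches the paper's reasoning: the paper justifies feasibility of the minimum selection rule (in part 1 of the proof of Proposition~\ref{prop:gro-feasibility} and in the explanation of Algorithm~\ref{alg:multi-path}) by exactly your Step 2, namely $L_e \le L_{\{\pi,e\}}$ coordinatewise on every path combined with feasibility of each per-path QLO allocation. The only difference is that the paper takes that per-path feasibility for granted from Theorem~\ref{theorem.chain} and Algorithm~\ref{alg:QLO}, whereas you verify it through the invariant $S_F^\gamma + |\mathcal{R}^\gamma|\,\Omega_\pi^\gamma = \mathcal{L}_\pi$ (which holds for the algorithm's update with base $\Omega_\pi$ and cumulative $S_F$; for your telescoping formula to work, $S_F$ and $j$ there must be read as the per-level quantities).
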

Algorithm~\ref{alg:multi-path} (Appendix~\ref{appendix.minimum_rule}) implements this in a two-pass structure, running in $O(M)$ time where $M \;=\; \sum_{\pi} |\pi|$; see Appendix~\ref{appendix.minimum_rule} for the pseudocode and complexity proof.

\begin{remark}[Overall complexity with overlapping paths]
\label{remark:overall-complexity}
When extending the linear-chain orchestration of Theorem~\ref{theorem.chain}
to a general quantum network with $P$ source--destination paths, the overall
procedure decomposes into two stages.

\begin{enumerate}
    \item \textbf{Per-path orchestration (Stage A).}
    Algorithm~\ref{alg:QLO} is applied independently to each of the $P$
    paths. Let $N_\pi$ denote an upper bound on the number of links in any
    path. By Corollary~\ref{cor:qlo_complexity}, the worst-case cost per
    path is $O(N_\pi^2)$, giving a total Stage A cost of
    \[
        O\bigl(P \cdot N_\pi^2\bigr).
    \]

    \item \textbf{Shared-link resolution (Stage B).}
    The per-path QLO allocations are harmonised using
    Algorithm~\ref{alg:multi-path}. By
    Corollary~\ref{cor:multi-path-complexity}, Stage B costs $O(M)$.
\end{enumerate}

Combining both stages, the overall running time is
\[
    O\bigl(P \cdot N_\pi^2 + M\bigr).
\]
\end{remark}

\subsection{Greedy quantum link orchestration}
\label{ssec:greedy_qlo}
 
While the minimum selection rule provides a simple and conservative benchmark for
resolving overlapping-link constraints, it does not exploit the calibration budget
freed on each path when shared links are clamped below their path-optimal values.
To address this limitation, we propose \emph{Greedy Quantum Link Orchestration}
(GRO), a heuristic that redistributes the freed budget locally on each path while
preserving feasibility across all paths. Like the minimum selection rule, GRO is
applied independently to each SD-pair combination involving fixed paths between each source-destination pair of combination.
 
\paragraph{Two-pass structure.}
GRO operates in two sequential passes over the network.
 
\emph{Pass~1 (conflict resolution).}
For every link $e \in E$, GRO applies the minimum selection rule across all paths
in the combination that shares it:
\[
    L_e \;=\;
    \begin{cases}
        \min_{\pi \in \Pi_e}\; L_{\{\pi,e\}} & \text{if } |\Pi_e| \ge 2,\\
        L_{\{\pi,e\}}                         & \text{if } \Pi_e = \{\pi\},
    \end{cases}
\]
where $\Pi_e$ denotes the set of paths in the combination that traverse link $e$,
and $L_{\{\pi,e\}}$ is the QLO allocation assigned to link $e$ by path $\pi$.
All shared links are resolved in one flat pass with no priority ordering among them.
This pass is equivalent to Algorithm~\ref{alg:multi-path} and guarantees feasibility
for every path.
 
\emph{Pass~2 (greedy reallocation).}
After Pass~1, each path $\pi$ may have unused budget. For each path independently,
GRO computes the residual
\[
    \Delta_\pi \;=\; \mathcal{L}_\pi - \sum_{e \in \pi} L_e \;\ge\; 0,
\]
and redistributes it over the exclusive (unshared) links of $\pi$:
\[
    \pi_{\mathrm{un}} \;=\; \bigl\{\, e \in \pi \mid |\Pi_e| = 1 \,\bigr\}.
\]
Among the links in $\pi_{\mathrm{un}}$ with strictly positive available headroom,
links are sorted in \emph{descending order of headroom}
\[
    r_e \;=\; L_e^{th} - L_e \;>\; 0,
\]
where $L_e^{th} = \ln\!\bigl((F_e^M - \tfrac{1}{4})/(F_e^{th} - \tfrac{1}{4})\bigr)$
is the per-link cap defined in Lemma~\ref{lemma.initialfidelityvariation}.
Budget $\Delta_\pi$ is allocated greedily in this order: each link $e$ receives
$\delta = \min(r_e,\, \Delta_\pi)$, and $\Delta_\pi$ is decremented accordingly
until it is exhausted or all unshared links reach their caps.
Because unshared links are exclusive to $\pi$, this reallocation cannot affect any
other path, and the end-to-end constraint $\sum_{e \in \pi} L_e \le \mathcal{L}_\pi$
is maintained throughout.
The pseudocode is given as Algorithm~\ref{alg:gro_hybrid}.
 
\begin{proposition}[Feasibility and improvement over minimum selection rule]
\label{prop:gro-feasibility}
Algorithm~\ref{alg:gro_hybrid} produces an allocation $\{L_e\}_{e \in E}$ that:
\begin{enumerate}
    \item is feasible, i.e., $\sum_{e \in \pi} L_e \le \mathcal{L}_\pi$ for every
          path $\pi \in P$; and
    \item weakly dominates the minimum selection rule, i.e.,
          $\mathcal{T}_\pi^{\mathrm{GRO}} \ge \mathcal{T}_\pi^{\mathrm{MIN}}$ for every path $\pi$,
          with strict inequality whenever $\Delta_\pi > 0$ and at least one
          unshared link of $\pi$ has positive headroom $r_e > 0$.
\end{enumerate}
\end{proposition}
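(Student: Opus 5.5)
The plan is to treat Pass~1 and Pass~2 separately. The key fact is that Pass~2 only touches links that belong to exactly one path.

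\emph{Pass~1: the MIN allocation is feasible.} Fix a path $\pi$. For each $e\in\pi$ we have $\pi\in\Pi_e$, so Pass~1 gives $L_e^{\mathrm{MIN}}=\min_{\pi'\in\Pi_e}L_{\{\pi',e\}}\le L_{\{\pi,e\}}$. Summing over $e\in\pi$ and using that the per-path QLO allocation of Theorem~\ref{theorem.chain} satisfies $\sum_{e\in\pi}L_{\{\pi,e\}}\le\mathcal{L}_\pi$, we get $\sum_{e\in\pi}L_e^{\mathrm{MIN}}\le\mathcal{L}_\pi$. Hence $\Delta_\pi\ge0$. The cap constraint also holds, since each $L_{\{\pi,e\}}\le L_e^{th}$ and taking a minimum preserves this.

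\emph{Pass~2 preserves feasibility (claim~1).} I will argue by induction over the individual greedy increments. Each increment raises some $L_e$ with $e\in\pi_{\mathrm{un}}$ by $\delta=\min(r_e,\Delta_\pi)\ge0$.
\begin{itemize}
\item Because $|\Pi_e|=1$, the link $e$ appears only in the budget constraint of $\pi$. No other path's sum changes, and this does not depend on the order in which paths are processed.
\item For $\pi$ itself, the invariant $\sum_{e\in\pi}L_e+\Delta_\pi=\mathcal{L}_\pi-(\text{initial slack})+\Delta_\pi^{(0)}$ is maintained with $\Delta_\pi\ge0$, since each step adds $\delta$ to the sum and subtracts $\delta$ from $\Delta_\pi$. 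So the sum never exceeds $\mathcal{L}_\pi$.
\item Since $\delta\le r_e$, the cap $L_e\le L_e^{th}$ is never exceeded.
\end{itemize}
This establishes claim~1.

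\emph{Pass~2 weakly improves throughput (claim~2).} By Lemma~\ref{lemma.throughputN}, $\mathcal{T}_\pi=C_\pi\prod_{e\in\pi}\frac{L_e}{L_e+K}$, and the factor $\frac{L}{L+K}$ is strictly increasing for $K>0$ (Remark~\ref{remark.concave}). Every link value after Pass~2 is at least its Pass~1 value. On shared links, GRO and MIN coincide, because Pass~2 never touches them, including when another path is processed. So each factor is weakly larger and $\mathcal{T}_\pi^{\mathrm{GRO}}\ge\mathcal{T}_\pi^{\mathrm{MIN}}$.

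For the strict case, assume $\Delta_\pi>0$ and some $e\in\pi_{\mathrm{un}}$ has $r_e>0$. The first link in descending-headroom order receives $\delta=\min(r_e,\Delta_\pi)>0$. Strict monotonicity of its factor then gives a strict increase, provided every other factor is positive.

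\emph{Main obstacle: the degenerate case in the strict inequality.} If some factor is zero, i.e.\ some $L_e^{\mathrm{MIN}}=0$ on $\pi$, the product stays $0$ and strictness fails. I plan to add the mild hypothesis that all MIN allocations are strictly positive, equivalently $\mathcal{T}_\pi^{\mathrm{MIN}}>0$. This is natural because a path with zero throughput contributes nothing to the objective in Eq.~\eqref{eq:P2-obj}. With this hypothesis in place, the strict inequality follows, and the rest of the argument is bookkeeping.
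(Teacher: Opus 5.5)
Your proof is correct and follows the paper's argument essentially step by step: Pass~1 feasibility from $\min_{\pi'\in\Pi_e}L_{\{\pi',e\}}\le L_{\{\pi,e\}}$ together with QLO's budget feasibility, Pass~2 feasibility because increments touch only exclusive links and never exceed the residual, and dominance from coordinatewise monotonicity of $\prod_{e\in\pi}L_e/(L_e+K)$. Your zero-factor caveat is a genuine refinement that the paper's proof skips. Its claim that $\mathcal{T}_\pi$ is strictly increasing in each $a_e$ requires every other factor to be positive, so the stated strict inequality fails if some $L_e^{\mathrm{MIN}}=0$ on $\pi$ (e.g.\ $F_e^{th}=F_e^M$); your added hypothesis $\mathcal{T}_\pi^{\mathrm{MIN}}>0$ is exactly what is needed.
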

 
\begin{proof}
\begin{enumerate}
\item \textit{Feasibility.}
Pass~1 assigns $L_e \le L_{\{\pi,e\}}$ on shared links and
$L_e = L_{\{\pi,e\}}$ on exclusive links, so
$\sum_{e \in \pi} L_e \le \mathcal{L}_\pi$ after Pass~1, since the QLO
allocations already satisfy the end-to-end budget.
Pass~2 adds increments $\delta \le r_e$ to exclusive links only, and the greedy
loop terminates as soon as $\Delta_\pi$ is exhausted, maintaining
$\sum_{e \in \pi} L_e \le \mathcal{L}_\pi$ throughout.
 
\item \textit{Improvement.}
Pass~1 is identical for both GRO and MIN.
Pass~2 adds non-negative increments to exclusive links, strictly increasing their
$L_e$ values whenever $\Delta_\pi > 0$ and some $r_e > 0$.
Since $\mathcal{T}_\pi = C \prod_{e \in \pi} (a_e/(a_e + c))$ is strictly increasing in
each $a_e = L_e / \Gamma$ (Lemma~\ref{lemma.throughputN}), any such increment
yields $\mathcal{T}_\pi^{\mathrm{GRO}} > \mathcal{T}_\pi^{\mathrm{MIN}}$.
If $\Delta_\pi = 0$ or all unshared links are at their maximum caps, the
allocations coincide and $\mathcal{T}_\pi^{\mathrm{GRO}} = \mathcal{T}_\pi^{\mathrm{MIN}}$.
\end{enumerate}
\end{proof}

\paragraph{Complexity.}
Pass~1 scans all path-link incidences once and maintains a running minimum per
shared link: $O(M)$ with standard bookkeeping.
Pass~2 sorts the exclusive links of each path $\pi$ with positive headroom by
descending $r_e$ and performs a linear scan of those links: $O(k_\pi \log k_\pi)$
per path, where $k_\pi = |\pi_{\mathrm{un}}|$. Summing over all paths gives
$O(M \log M)$ in the worst case, where the sort dominates.
Combined with the $O(P \cdot N_\pi^2)$ cost of the preceding per-path QLO stage,
the end-to-end complexity of GRO is
\[
    O\bigl(P \cdot N_\pi^2 + M \log M\bigr).
\]
 
\begin{algorithm}[!t]
\caption{Greedy Quantum Link Orchestration (GRO) for shared-link resolution}
\label{alg:gro_hybrid}
\begin{algorithmic}[1]
\REQUIRE Set of paths $P$;
per-path QLO allocations $L_{\{\pi,e\}}$ for each $\pi \in P$, $e \in \pi$;
graph $G$ with per-link fidelity thresholds $F_e^{th}$ (used to compute $L_e^{th}$);
path budgets $\mathcal{L}_{\pi}$ derived from end-to-end fidelity threshold path $F_{ete, \ \pi}^{th}$
\ENSURE Updated per-link assignment $\{L_e\}$ and corresponding per-path allocations
 
\STATE \textit{// Pass 1: resolve shared links via minimum selection}
\FORALL{$e$ used by at least one path in $P$}
  \STATE $\Pi_e \gets \{\pi \in P \mid e \in \pi\}$
  \IF{$|\Pi_e| \ge 2$}
    \STATE $L_e \gets \min_{\pi \in \Pi_e}\; L_{\{\pi,e\}}$
  \ELSE
    \STATE $L_e \gets L_{\{\pi,e\}}$ \quad ($\Pi_e = \{\pi\}$)
  \ENDIF
\ENDFOR
 
\STATE \textit{// Pass 2: greedily redistribute freed budget over exclusive links}
\FORALL{$\pi \in P$}
  \STATE $\Delta_{\pi} \gets \mathcal{L}_{\pi} - \sum_{e \in \pi} L_e$
  \IF{$\Delta_{\pi} > 0$}
    \STATE $U_{\pi} \gets \{e \in \pi \mid |\Pi_e| = 1\}$ \quad (exclusive links of $\pi$)
    \STATE Compute $r_e \gets L_e^{th} - L_e$ for each $e \in U_{\pi}$
    \STATE Let $U_{\pi}^+ \gets \{e \in U_{\pi} \mid r_e > 0\}$
    \STATE Sort $U_{\pi}^+$ by descending $r_e$
    \FORALL{$e \in U_{\pi}^+$ in sorted order}
      \IF{$\Delta_{\pi} \le 0$}
        \STATE \textbf{break}
      \ENDIF
      \STATE $\delta \gets \min(r_e,\;\Delta_{\pi})$
      \STATE $L_e \gets L_e + \delta$;\quad $\Delta_{\pi} \gets \Delta_{\pi} - \delta$
    \ENDFOR
  \ENDIF
\ENDFOR
\RETURN $\{L_e\}$ and the corresponding per-path activation periods $a_e = L_e / \Gamma$
\end{algorithmic}
\end{algorithm}

\section{Simulations and Performance Evaluation}\label{sec:simulation}

\begin{table}[t]
\centering
\caption{Comparison of calibration approaches based on their underlying assumptions. REF and QLO represent upper bounds and are inherently infeasible by construction, whereas MIN, GRO, and NUM are feasible strategies applicable to general quantum networks.}
\label{tab:comparison}
\setlength{\tabcolsep}{3pt}
\renewcommand{\arraystretch}{1.1}

\begin{tabular}{>{\centering\arraybackslash}p{0.13\columnwidth}
                >{\centering\arraybackslash}p{0.23\columnwidth}
                >{\centering\arraybackslash}p{0.23\columnwidth}
                >{\centering\arraybackslash}p{0.24\columnwidth}}
\hline
\textbf{Method} &
\makecell[c]{\textbf{Local thresholds} \\ $(F^{th}_e)$} &
\makecell[c]{\textbf{E2E thresholds} \\ $(F^{th}_{ete})$} &
\makecell[c]{\textbf{Shared link} \\ \textbf{treatment}} \\
\hline
\textbf{REF} & $\checkmark$ & $\times$ & $\times$ \\
\textbf{QLO} & $\checkmark$ & $\checkmark$ & $\times$ \\
\textbf{MIN/GRO} & $\checkmark$ & $\checkmark$ & $\checkmark$ \\
\textbf{NUM} & $\checkmark$ & $\checkmark$ & $\checkmark$ \\
\hline
\end{tabular}
\end{table}

To complement the theoretical study till now, we developed in Python a specialised simulator QNetCal\footnote{The simulator will be released to the public with the submission of the camera-ready version of this paper.} for this work to compare the performances of different approaches, including relevant reference (REF), upper bound (QLO) and numerical approach (NUM). We consider a grid-based quantum network in which $n$ source--destination (SD)
pairs are selected uniformly at random over an $n\times n$ grid graph.
Each SD pair is assigned an independent end-to-end fidelity threshold
$F_{ete,\pi}^{th}$ drawn uniformly from a specified fidelity range.
For each value of $n \in \{1, 2, \ldots, 25\}$, we generate 100 independent
random SD-pair combinations and evaluate all approaches on each combination.
Results are reported as averages over these 100 combinations.
The physical link parameters are set to $\Gamma = 0.001$,
$c = 0.13$, $F_e^M = 0.999$, and $A = 1000$ throughout.
Routing follows Dijkstra's shortest-path selection between each SD pair.
All simulations were carried out on a regular server. Wall-clock runtime was recorded for each approach: NUM required on the order of \emph{hours} per
fidelity threshold range, whereas GRO (and MIN) completed in the order of \emph{minutes}.

\begin{figure*}[!t]
\centering
\includegraphics[width=2\columnwidth]{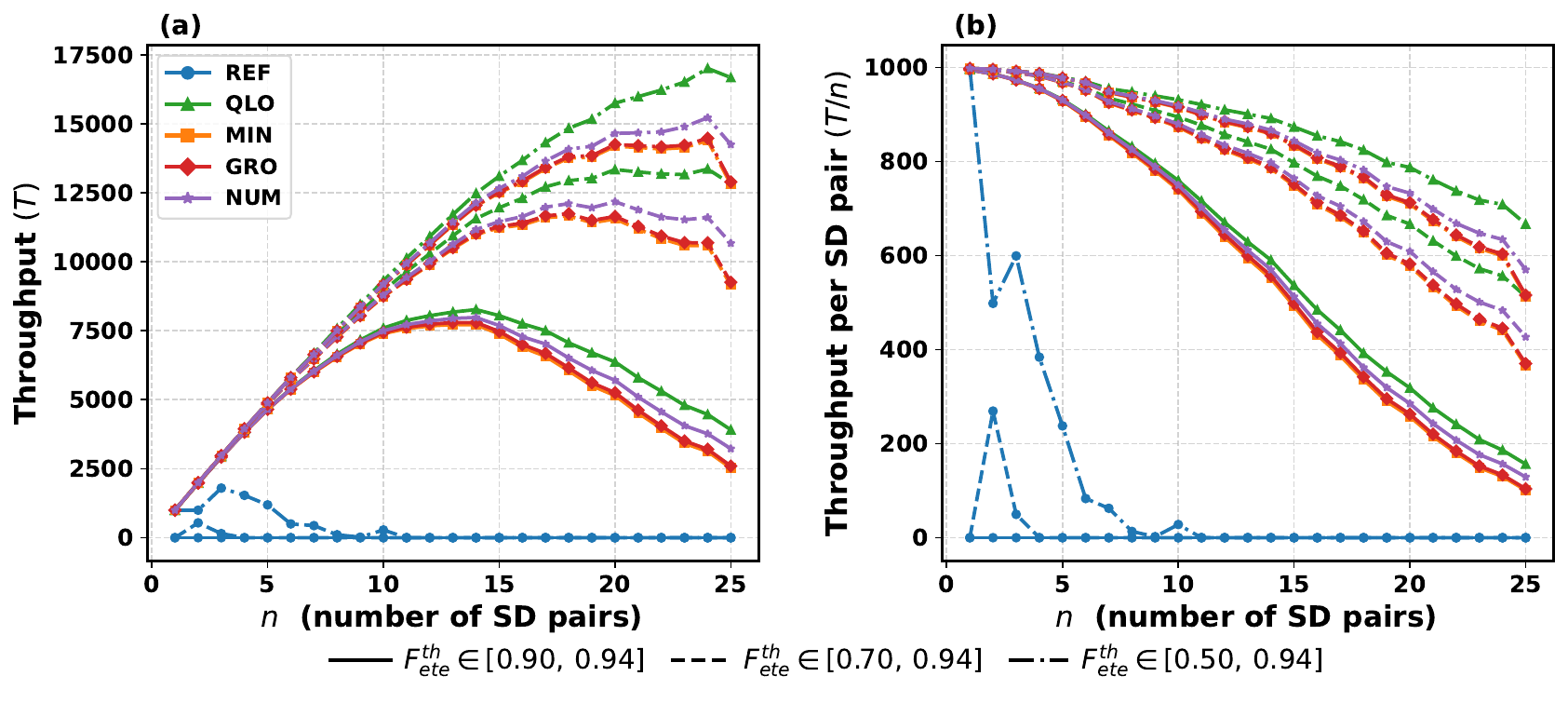}
\caption{Throughput as a function of the number of SD pairs $n$
    for all five calibration approaches across three end-to-end fidelity thresholds
    ranges, $F_{ete}^{th} \in [0.90, 0.94]$, $F_{ete}^{th} \in [0.70, 0.94]$, and $F_{ete}^{th} \in [0.50, 0.94]$.
    Panel~(a) shows the total throughput summed over all $n$ paths;
    panel~(b) shows the throughput normalised per SD pair.
    QLO serves as an infeasible upper bound since it ignores the shared link conflict resolution. MIN, GRO, and NUM all enforce feasibility; the gap between them and
    REF reflects the cost of satisfying $F_{ete}^{th}$.}
\label{fig:observed_throughput_multi_range_9094_5094}
\end{figure*}

Three fidelity threshold ranges are evaluated for this study:
(i)~$F_{ete}^{th} \in [0.90, 0.94]$;
(ii)~$F_{ete}^{th} \in [0.70, 0.94]$; and
(iii)~$F_{ete}^{th} \in [0.50, 0.94]$.
\subsection{Approaches evaluated}\label{ssec:approaches}
Five approaches are compared, summarised in Table~\ref{tab:comparison}.

\begin{enumerate}
\item \textbf{REF} (Reference baseline): each link is operated at its
individual threshold $a_e = a_e^{th}$, ignoring the end-to-end
fidelity constraint entirely. REF maximises raw activation time and serves as an infeasible upper bound on throughput.

\item \textbf{QLO} (per-path, Theorem~\ref{theorem.chain}): Algorithm~\ref{alg:QLO}
is applied independently to each path, accounting for both the end-to-end
fidelity constraint and per-link thresholds. However, in a general network,
different paths may assign conflicting activation periods to the same
physical shared link. QLO does not resolve these conflicts; each path uses
its own independently optimal values on shared links. The reported throughput
is therefore physically infeasible in any network with shared links,
constituting a second infeasible upper bound that is tighter than REF but
unachievable in practice.

\item \textbf{MIN} (Minimum selection rule, Algorithm~\ref{alg:multi-path}):
per-path QLO allocations are computed first; shared links are then clamped
to the minimum across all paths that traverse them.
MIN is feasible by construction
(Proposition~\ref{prop:general-overlap}) and serves as the conservative lower
bound among the constrained approaches.

\item \textbf{GRO} (Greedy quantum link orchestration,
Algorithm~\ref{alg:gro_hybrid}): extends MIN by redistributing the freed
calibration budget on each affected path over its exclusive (unshared) links.
GRO is feasible and weakly dominates MIN
(Proposition~\ref{prop:gro-feasibility}).

\item \textbf{NUM} (Numerical optimisation): the sum of path throughputs
$\max \sum_\pi \mathcal{T}_\pi$ is maximised directly, subject to the linear
fidelity-budget constraints, using the sequential quadratic programming solver
(SLSQP) from SciPy, warm-started from the GRO solution. A single physical
activation period is enforced per shared link across all paths, making NUM a
true constrained optimiser of the general network problem. The warm start
guarantees $T^{\mathrm{NUM}} \ge T^{\mathrm{GRO}}$ by construction. Being
feasible, NUM lower-bounds the true network optimum $T^\star$, while QLO
upper-bounds it, so the two bracket it as
$T^{\mathrm{GRO}} \le T^{\mathrm{NUM}} \le T^\star \le T^{\mathrm{QLO}}$; SLSQP
attains only a local optimum of this non-convex problem, hence NUM is the
strongest feasible reference against which GRO is evaluated. This accuracy is
costly: the warm-started SLSQP solves make the full NUM sweep take hours,
against minutes for GRO on the identical harness (Section~\ref{ssec:results}).
NUM is therefore an offline benchmark rather than a deployable strategy; GRO is
the practical alternative.
\end{enumerate}

\begin{figure*}[!t]
\centering
\includegraphics[width=2\columnwidth]{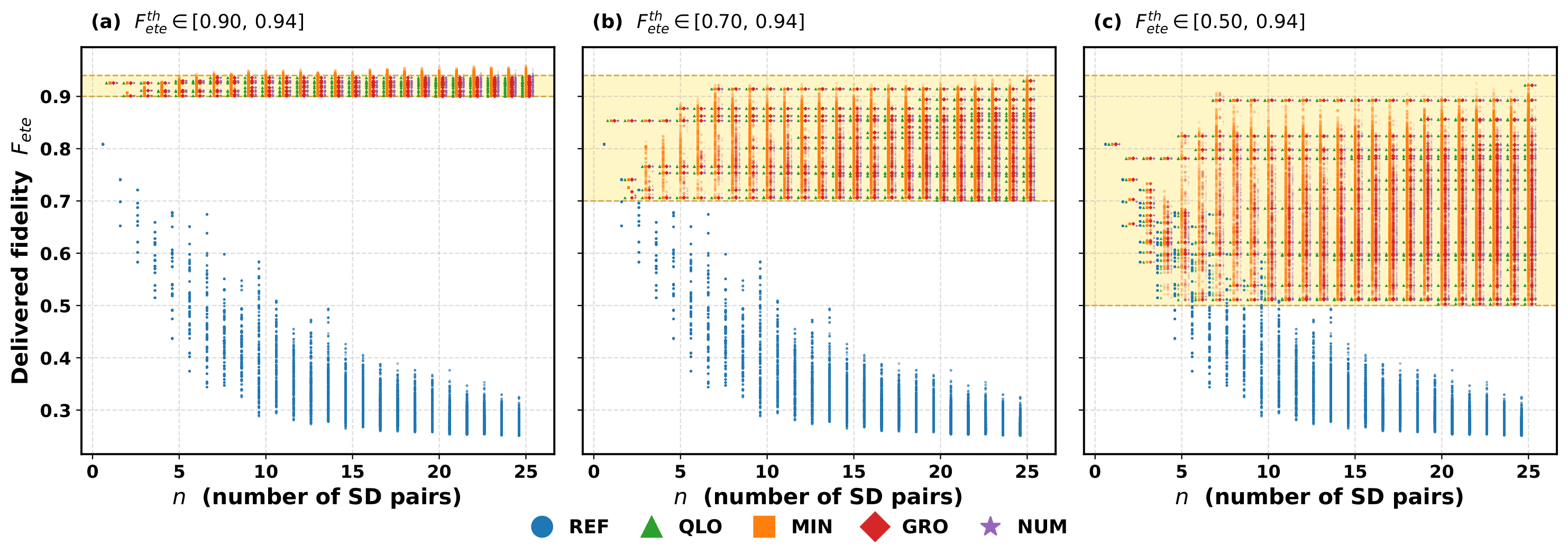}
\caption{Individual delivered end-to-end fidelity for each number of SD pairs $n$ for all approaches. The depicted yellow range is the end-to-end fidelity thresholds set with panel (a) $F_{ete}^{th} \in [0.90, 0.94]$; panel (b) $F_{ete}^{th} \in [0.70, 0.94]$; panel (c) $F_{ete}^{th} \in [0.50, 0.94]$. REF shows a similar pattern across different ranges since it does not account for end-to-end fidelity thresholds, while other approaches all deliver, considering the end-to-end fidelity thresholds.}
\label{fig:fidelity_scatter_combine}
\end{figure*}

The primary metric is the sum of path throughputs ($T$) over all paths in a combination that satisfies their individual
end-to-end fidelity threshold $F_{ete,\pi}^{th}$, averaged over the 100
random combinations at each $n$.
Paths that deliver fidelity below their threshold are excluded from the sum
to reflect realistic operational requirements.
The per-SD-pair throughput is additionally reported by normalising the sum of path throughputs by $n$.

\subsection{Simulation results}\label{ssec:results}
By construction of the five approaches, the sum of the path throughputs ($T$) exhibits the ordering
\begin{equation}
T^{\mathrm{REF}} \;\ge\; T^{\mathrm{QLO}} \;\ge\; T^{\mathrm{NUM}}
\;\ge\; T^{\mathrm{GRO}} \;\ge\; T^{\mathrm{MIN}}.
\label{eq:ordering}
\end{equation}
Fig.~\ref{fig:observed_throughput_multi_range_9094_5094} shows the
throughput as a function of the number of SD pairs $n$ for all five approaches
across the fidelity ranges ($[0.90, 0.94]$), ($[0.70, 0.94]$), and ($[0.50, 0.94]$).
\begin{remark}
In Fig.~\ref{fig:observed_throughput_multi_range_9094_5094}, REF exhibits the lowest performance, which appears to contradict the expected ordering in Eq.~\eqref{eq:ordering}. This discrepancy arises because the plot excludes the throughput of paths whose end-to-end fidelity is below the required threshold, thereby reflecting realistic operational constraints. This behaviour is further illustrated in Fig.~\ref{fig:fidelity_scatter_combine}, where a significant fraction of the delivered fidelity does not meet the end-to-end fidelity threshold (Fig.~\ref{fig:fidelity_scatter_combine}a). The limited throughput observed for REF is therefore primarily due to the relaxation of the end-to-end fidelity threshold (Fig.~\ref{fig:fidelity_scatter_combine}b and \ref{fig:fidelity_scatter_combine}c), consistent with the trends shown in Fig.~\ref{fig:observed_throughput_multi_range_9094_5094}.
\end{remark}
This ordering is structurally meaningful: REF is the infeasible ceiling since
it ignores the end-to-end constraint; QLO is a second infeasible upper bound
since it ignores shared-link conflicts; and NUM, GRO, MIN are the three
feasible approaches ordered by the degree of budget recovery they achieve.

The gap between QLO and NUM represents the \emph{cost of shared-link
coordination}: the throughput that must be surrendered when each physical link
is constrained to a single activation period shared across all paths that
traverse it. At small $n$, most links
are exclusive to individual paths, and the coordination cost is negligible;
at $n = 25$, the gap between QLO and NUM becomes substantial, reflecting a significant increase in the links shared by multiple paths.

Fig.~\ref{fig:observed_throughput_multi_range_9094_5094}b shows the
per-SD-pair throughput, which captures a complementary effect: as $n$ grows
and shared links accumulate more contention, the average throughput available
per path decreases. The per-path throughput of GRO and MIN degrades more
steeply than that of NUM, confirming that the joint optimisation performed
by NUM provides increasing benefit at scale.

\begin{table*}[!t]
\centering
\caption{MIN, GRO, and NUM throughput relative to QLO (\%) for three
end-to-end fidelity threshold ranges and representative values of~$n$.
Negative values indicate degradation with respect to the per-path QLO
upper bound.}
\label{tab:rel_throughput_vs_qlo}
\renewcommand{\arraystretch}{1.15}
\setlength{\tabcolsep}{5pt}
\begin{tabular}{c|rrr|rrr|rrr}
\hline
& \multicolumn{3}{c|}{$F_{ete}^{th} \in [0.90,\,0.94]$}
& \multicolumn{3}{c|}{$F_{ete}^{th} \in [0.70,\,0.94]$}
& \multicolumn{3}{c}{$F_{ete}^{th} \in [0.50,\,0.94]$} \\
$n$
& \multicolumn{1}{c}{MIN}
& \multicolumn{1}{c}{GRO}
& \multicolumn{1}{c|}{NUM}
& \multicolumn{1}{c}{MIN}
& \multicolumn{1}{c}{GRO}
& \multicolumn{1}{c|}{NUM}
& \multicolumn{1}{c}{MIN}
& \multicolumn{1}{c}{GRO}
& \multicolumn{1}{c}{NUM} \\
\hline
 1  &    0.00 &    0.00 &    0.00 &    0.00 &    0.00 &    0.00 &    0.00 &    0.00 &    0.00 \\
 5  &  $-$0.33 &  $-$0.21 &  $-$0.11 &  $-$0.26 &  $-$0.21 &  $-$0.15 &  $-$0.17 &  $-$0.15 &  $-$0.12 \\
10  &  $-$2.64 &  $-$2.11 &  $-$1.29 &  $-$2.38 &  $-$2.15 &  $-$1.65 &  $-$1.79 &  $-$1.67 &  $-$1.36 \\
15  &  $-$8.29 &  $-$7.14 &  $-$4.47 &  $-$6.13 &  $-$5.69 &  $-$4.21 &  $-$4.57 &  $-$4.33 &  $-$3.37 \\
20  & $-$19.18 & $-$17.51 & $-$10.47 & $-$13.42 & $-$12.74 &  $-$8.73 &  $-$9.92 &  $-$9.52 &  $-$6.96 \\
25  & $-$35.37 & $-$33.56 & $-$17.48 & $-$28.63 & $-$27.83 & $-$16.79 & $-$23.18 & $-$22.66 & $-$14.56 \\
\hline
\end{tabular}
\end{table*}

Fig.~\ref{fig:fidelity_scatter_combine} shows the delivered end-to-end
fidelity of every individual path across all combinations and all values of
$n$, for three fidelity threshold ranges. The gold-shaded band marks the
range $[F_{th,\text{low}}, F_{th,\text{high}}]$ from which the per-path
thresholds were drawn.

The REF approach delivers fidelity values that are largely distributed
independently of the threshold band and show a similar scatter pattern across
all three ranges. This is expected: REF operates each link at its
threshold $a_e^{th}$ without consulting the end-to-end fidelity requirement,
so the delivered fidelity is determined purely by the network topology and
the per-link parameters rather than by any path-level constraint.

In contrast, QLO, MIN, GRO, and NUM all enforce the end-to-end fidelity
constraint, and their scatter distributions are visibly concentrated at or
above the lower edge of the threshold band. The fraction of points falling
below the band (fidelity violations) is none for QLO, MIN, GRO, and NUM by
construction.

Comparing across ranges, the widening of the threshold band from
$[0.90, 0.94]$ (panel~a) to $[0.50, 0.94]$ (panel~c) produces a visibly
larger spread in the delivered fidelity for the constrained approaches,
reflecting the broader range of per-path requirements. The feasible approaches
deliver fidelity values distributed across the full threshold band in each
case, confirming that the calibration optimisation is active across the
entire range of requirements.

\subsection{MIN, GRO, and NUM Performance Relative to QLO}\label{ssec:relative_performance}

Table~\ref{tab:rel_throughput_vs_qlo} reports the throughput of MIN, GRO,
and NUM as a percentage deviation from the per-path QLO upper bound as
\begin{equation}
    \Delta\% = \frac{T_{approach} - T_{QLO}}{T_{QLO}} \times 100,
\end{equation}
$\forall \ \text{approach}\in \{\text{MIN, GRO, NUM}\}$, across
all three fidelity threshold ranges and representative values of~$n$.

At $n=1$, all approaches coincide with QLO, with a single SD pair there
are no shared links, so the minimum selection rule, GRO, and the numerical
optimiser all reduce to the single-path QLO solution and the deviation
is~$0\%$. As $n$ increases, all three feasible approaches degrade below QLO,
but at markedly different rates, consistently ordered as
$T^{\mathrm{NUM}} \ge T^{\mathrm{GRO}} \ge T^{\mathrm{MIN}}$,
which is the expected ordering from Eq.~\eqref{eq:ordering}.

\section{Conclusion}\label{sec:conclusion}

This work addressed the problem of calibration-aware entanglement distribution in
quantum networks, motivated by experimental evidence that the initial fidelity of
entangled pairs generated over fiber links decay monotonically with the duration
of the activation period and is restored only through periodic recalibration.
We modelled this duty-cycle structure explicitly at the network layer and
formulated the resulting throughput optimisation as a constrained allocation
problem over activation periods.

For linear quantum chains, we derived the optimal allocation analytically.
Theorem~\ref{theorem.chain} (Quantum Link Orchestration, QLO) shows that the
throughput-maximising allocation follows a closed-form partitioning rule: links
whose per-link fidelity threshold falls below the path-optimal point $\Omega_\pi$
are operated at their individual threshold; links above $\Omega_\pi$ are clamped
to a recursively updated optimal point that redistributes the remaining
end-to-end budget. The algorithm runs in $O(N^2)$ time and the optimal point is
non-decreasing across recursion levels, guaranteeing finite termination and
optimality.

Extending to general network topologies, where multiple paths may traverse the
same physical link introduces a conflict on the selection of activation periods. We established
the minimum selection rule (MIN, Algorithm~\ref{alg:multi-path}) as a feasible
$O(M)$ benchmark: shared links are assigned the minimum QLO allocation across all
paths that traverse them, guaranteeing end-to-end fidelity compliance for every
path by construction. Building on this, we proposed the Greedy Quantum Link
Orchestration heuristic (GRO, Algorithm~\ref{alg:gro_hybrid}), which recovers the
budget freed by shared-link clamping and redistributes it over each path's exclusive
links, sorted by descending available headroom. GRO runs in
$O(P \cdot N_\pi^2 + M \log M)$ and weakly dominates MIN with strict improvement
whenever freed budget and positive headroom coexist.

Simulations on grid networks confirmed
the theoretical ordering $T^{\mathrm{REF}} \ge T^{\mathrm{QLO}} \ge
T^{\mathrm{NUM}} \ge T^{\mathrm{GRO}} \ge T^{\mathrm{MIN}}$. REF, which ignores
end-to-end fidelity constraints, delivers high raw activation time, but fails to
meet per-path fidelity requirements for a substantial fraction of paths, making
it the lowest performer once infeasible paths are excluded from the sum.
QLO constitutes a tight infeasible upper bound, while NUM, being feasible,
lower-bounds the true optimum; together they bracket it. Among the three feasible
approaches, GRO closely tracks the numerically optimal solution (NUM) in small networks.
With large networks and a narrow range of end-to-end fidelity threshold, where shared-link contention is most severe,
GRO reaches $-33.56\%$ relative to QLO versus $-17.48\%$ for NUM, indicating
that the greedy reallocation recovers a meaningful but incomplete fraction of the
budget that joint numerical optimisation can fully exploit.

Several directions remain open. First, the GRO heuristic processes paths
independently in Pass~2; a coordinated multi-path reallocation that accounts for
the joint throughput objective could narrow the GRO--NUM gap at large $n$ without
the full cost of numerical optimisation. Second, this work assumed that the
shortest path between each SD pair is fixed, and the link parameters are
homogeneous; relaxing these to allow path selection and heterogeneous decay rates
$\Gamma_e$ would bring the model closer to deployed quantum network conditions.
Third, the fidelity decay model validated here is a single-exponential model; incorporating the stretched-exponential fit observed in the
experimental data of Fig.~\ref{fig:fid_vs_time} would improve accuracy for long
activation periods.

\section*{Acknowledgment}

The authors would like to thank Victor Krutyanskiy for providing experimental data of the optical-fiber link at the University of Innsbruck.

The author(s) acknowledge the use of Claude (Sonnet 4.6, Opus 4.6) for code generation and code presentation for a few parts of the codebase. 

All content was reviewed and edited by the authors, who take full responsibility for the final work.

\clearpage
{\appendices

\section{Processing of Experimental Data and Derivation of Fig.~\ref{fig:fid_vs_time}}
\label{appendix:experimental_data}

\subsection*{A. Raw Data Format}

The experimental data were kindly provided by Dr. Victor Krutyanskiy from the
Innsbruck group. The measurements were performed in 2020 on the same 520\,m
fiber link that was later used for the ion-ion entanglement experiment
of~\cite{viktor2023trappedionentanglement}. The experiment ran continuously
from 17:59 on 3 June 2020 to 14:40 on 4 June 2020, spanning approximately
$74{,}400$\,s ($\sim$20.7 hours) in total. Each record in the dataset takes
the form
\[
\begin{split}
(\texttt{Proj},\ t_{\mathrm{stamp}},\ t,\ 
\texttt{Ax1}^{\mathrm{abs}},\ \texttt{Ax1}^{\mathrm{rel}},\\
\texttt{Ax2}^{\mathrm{abs}},\ \texttt{Ax2}^{\mathrm{rel}},\\
S_1,\ S_2,\ S_3,\ \eta,\ \theta,\ \mathrm{DOP},\ I)
\end{split}
\]
where the column \texttt{Proj} $\in \{H, L, A, R, V, D\}$
labels the six input polarization states iteratively prepared and injected into the fiber:
horizontal linear (H), left circular (L), antidiagonal (A), right circular (R),
vertical linear (V), and diagonal (D), cycled in this order. For each of these input states at the recorded wall-clock timestamp ($t_{\mathrm{stamp}}$) and the elapsed time from the start of the run in seconds ($t$) the polarisation state coming out of the other end of the fiber is determined. $S_1$, $S_2$, $S_3$ are the three non-trivial
Stokes parameters of the output polarization state. For the fidelity
analysis only the columns $(\texttt{Proj},\ t,\ S_1, S_2, S_3)$ are used; the
remaining columns provide auxiliary relavant to the experiment.

\subsection*{B. Stokes Parameters and the Poincar\'{e} Sphere}

A fully polarized optical field is completely described by its Stokes vector
$(S_0, S_1, S_2, S_3)$. The Stokes parameters are the direct
optical analogue of the three components of the Bloch vector for a qubit:
$S_1$ corresponds to the $Z$-axis (H/V basis), $S_2$ to the $X$-axis (D/A
basis), and $S_3$ to the $Y$-axis (R/L basis) \cite{huistokes2009}. Considering transmitted states
are close to pure we set $S_0 = 1$
throughout, so that the state lives on the unit Poincar\'{e} sphere with
\[
    S_1^2 + S_2^2 + S_3^2 = p^2 \;\le\; 1,
\]
where $p = \sqrt{S_1^2 + S_2^2 + S_3^2}$ is the degree of polarization.
The reduced Stokes parameters \cite{stokestojones}
\begin{equation}
    Q = \frac{S_1}{S_0\,p}, \qquad
    U = \frac{S_2}{S_0\,p}, \qquad
    V = \frac{S_3}{S_0\,p}
    \label{eq:QUV}
\end{equation}
parametrize the surface of the unit sphere.

\subsection*{C. Stokes-to-Jones Conversion}

For a pure state the Jones vector
$\mathbf{j} = (j_1,\, j_2)^{\mathsf{T}} \in \mathbb{C}^2$ is related to the
Stokes parameters through \cite{stokestojones}
\begin{align}
    j_1 &= \sqrt{\frac{1+Q}{2}}, \label{eq:jones1} \\[4pt]
    j_2 &= \frac{U}{2\,j_1} - i\,\frac{V}{2\,j_1}, \label{eq:jones2}
\end{align}
where $Q$, $U$, $V$ are as defined in Eq.~\eqref{eq:QUV}. When $j_1 = 0$, the
expression in Eq.~\eqref{eq:jones2} is undefined; in this limiting case
one sets $j_2 = 1$ and $j_1 = 0$ directly from the spherical coordinates.

\subsection*{D. Necessity of Six Projections}

Measuring the drift of a single input polarization state over time is
insufficient to characterize the full polarization transformation induced by
the fiber. Slow birefringence drift corresponds to a time-varying unitary
rotation on the Poincar\'{e} sphere. Certain rotations leave specific polarization
states invariant, for instance, a rotation about the $S_1$ axis does not
change the H state but substantially rotates the D and R states \cite{huistokes2009}. More
generally, any polarization state that is an eigenstate of the instantaneous
fiber rotation operator will appear unchanged at the output even as the fiber
drifts. Measuring all six Stokes basis projections $\{H, V, D, A, R, L\}$
provides a redundant and robust detection of any unitary drift, since no nontrivial unitary rotation can simultaneously fix all six states. The full dataset is sufficient (when full Stokes vectors are measured) to reconstruct the $4\times4$ Mueller matrix of the fiber transformation at each time $t$. However for this work, we
do not require the full process tomography; instead, we compute a per-projection
fidelity and average across projections to obtain a scalar proxy for the overall
polarization stability of the link.

\subsection*{E. Per-Projection Fidelity}

For each projection $k \in \{H, L, A, R, V, D\}$, a reference Jones vector
$\mathbf{j}^{(k)}_{\mathrm{ref}}$ is established from the \emph{first}
measurement of that projection in the dataset, corresponding to $t \approx 0$
when the link is freshly calibrated and the polarization alignment is at its
best. The instantaneous polarization fidelity at time $t$ for projection $k$
is then
\begin{equation}
    F^{(k)}(t) \;=\; \left|\bigl\langle \mathbf{j}^{(k)}_{\mathrm{ref}},\;
    \mathbf{j}^{(k)}(t) \bigr\rangle\right|^2,
    \label{eq:fidelity_jones}
\end{equation}
where $\langle \cdot, \cdot \rangle$ denotes the standard Hermitian inner
product on $\mathbb{C}^2$. Because $\mathbf{j}^{(k)}_{\mathrm{ref}}$ is
normalized, $F^{(k)}(t) \in [0,1]$, with $F^{(k)}(0) = 1$ by construction.
This quantity measures how much the transmitted polarization state has drifted
from its initial alignment, and therefore directly proxies the fidelity of the
distributed entangled pair generated by that link at time $t$.

\subsection*{F. Time Alignment and Cross-Projection Averaging}

The six projections are recorded in a fixed round-robin cycle
$H \to L \to A \to R \to V \to D$. This introduces slightly staggered
absolute start times across projections. To place all traces on a common time
axis, the H projection is taken as the reference, and every other projection
$k$ has its time axis shifted by
\[
    \Delta t^{(k)} = t^{(H)}_0 - t^{(k)}_0,
\]
where $t^{(k)}_0$ is the first recorded timestamp for projection $k$. After
alignment, let $\mathcal{T}^{(k)} = \{t^{(k)}_1, t^{(k)}_2, \ldots\}$ denote
the ordered sequence of measurement times for projection $k$. Each trace is
sorted chronologically and all six traces are trimmed to the same number of
samples $N_{\min} = \min_k |\mathcal{T}^{(k)}|$. The
averaged fidelity at each sample index $m$ is
\begin{equation}
    \bar{t}_m = \frac{1}{6}\sum_{k} t^{(k)}_m, \qquad
    \bar{F}_m = \frac{1}{6}\sum_{k} F^{(k)}_m,
    \label{eq:avg_fidelity}
\end{equation}
yielding the single averaged curve $\bar{F}(\bar{t})$ plotted in
Fig.~\ref{fig:fid_vs_time}b.

\subsection*{G. Model Selection via Akaike Information Criterion (AIC)}

To characterise the fidelity decay analytically, three parametric families are
fitted to $\bar{F}(\bar{t})$ over the window $[0,\, 20{,}000]$\,s using
robust nonlinear least squares (soft-$\ell_1$) loss\footnote{
The soft-$\ell_1$ loss replaces the standard squared-residual penalty with
$\rho(r) = 2\!\left(\sqrt{1+r^2}-1\right)$,
which behaves as $\rho(r)\approx r^2$ for small residuals and as
$\rho(r)\approx 2|r|$ for large ones. This smooth transition from quadratic
to linear penalization reduces the influence of occasional outlying
polarimetry readings on the fitted decay parameters without discarding them
entirely, unlike hard truncation. In practice, residuals larger than the
scale parameter $f_{\mathrm{scale}}=0.02$ (in fidelity units) are treated
as outliers and down-weighted accordingly. The fitting is implemented via
\texttt{scipy.optimize.least\_squares} with \texttt{loss=`soft\_l1'} and
\texttt{f\_scale=0.02}.}. The fitted data covers
roughly the first quarter of the full $74{,}400$\,s dataset. The selected window is chosen because
the fidelity reaches its asymptotic floor by the last point of the window. The three parametric families are:
\begin{enumerate}
    \item \textbf{Single exponential} ($k=3$ parameters):
    \[
        F(t) = c + a\,e^{-t/\tau};
    \]
    \item \textbf{Bi-exponential} ($k=5$ parameters):
    \[
        F(t) = c + a_1\,e^{-t/\tau_1} + a_2\,e^{-t/\tau_2};
    \]
    \item \textbf{Stretched exponential} ($k=4$ parameters):
    \[
        F(t) = c + a\,e^{-(t/\tau)^{\beta}}.
    \]
\end{enumerate}
Model selection is performed using the Akaike Information Criterion
(AIC). In its general form, given a statistical model with $k$ free
parameters and maximized log-likelihood $\ln\hat{L}$, the AIC is defined
as~\cite{akaike1974aic}
\begin{equation}
    \mathrm{AIC} = 2k - 2\ln\hat{L}.
    \label{eq:aic_general}
\end{equation}
For nonlinear least-squares fitting, we assume the residuals
\[
    r_m = \bar{F}_m - F(t_m;\,\mathbf{p}), \quad
    \text{where } \mathbf{p} = \{c,\,a,\,\tau,\,a_1,\,a_2,\,\tau_1,\,\tau_2,\,\beta\},
\]
to be independent and identically distributed Gaussian with mean zero and
unknown variance $\sigma^2$. The probability of observing $\bar{F}_m$ given
the model is
\[
    P(\bar{F}_m \mid \mathbf{p},\sigma^2)
    = \frac{1}{\sqrt{2\pi\sigma^2}}\exp\!\left(-\frac{r_m^2}{2\sigma^2}\right),
\]
and the joint likelihood over all $n$ data points is
\[
    L(\mathbf{p},\sigma^2)
    = \prod_{m=1}^{n} \frac{1}{\sqrt{2\pi\sigma^2}}
      \exp\!\left(-\frac{r_m^2}{2\sigma^2}\right).
\]
Taking the natural logarithm and using $\sum_m r_m^2 = \mathrm{RSS}$
yields the log-likelihood
\begin{equation}
    \ln L
    = -\frac{n}{2}\ln(2\pi)
      - \frac{n}{2}\ln\sigma^2
      - \frac{\mathrm{RSS}}{2\sigma^2},
    \label{eq:loglik}
\end{equation}
where $\mathrm{RSS} = \sum_m r_m^2$ is the residual sum of squares.
Maximizing Eq.~\eqref{eq:loglik} over $\sigma^2$ by differentiating and
setting the derivative to zero,
\[
    \frac{d}{d\sigma^2}\ln L
    = -\frac{n}{2\sigma^2} + \frac{\mathrm{RSS}}{2(\sigma^2)^2} = 0,
\]
gives the maximum-likelihood estimate
\[
    \hat{\sigma}^2 = \frac{\mathrm{RSS}}{n}.
\]
Substituting $\hat{\sigma}^2$ back into Eq.~\eqref{eq:loglik},
\[
    \ln\hat{L}
    = -\frac{n}{2}\ln(2\pi)
      - \frac{n}{2}\ln\!\left(\frac{\mathrm{RSS}}{n}\right)
      - \frac{\mathrm{RSS}}{2\cdot\frac{\mathrm{RSS}}{n}},
\]
\[
  \Rightarrow  \ln\hat{L}
    = -\frac{n}{2}\ln(2\pi)
      - \frac{n}{2}\ln\!\left(\frac{\mathrm{RSS}}{n}\right)
      - \frac{n}{2}.
\]
Inserting into Eq.~\eqref{eq:aic_general},
\[
    \mathrm{AIC}
    = 2k - 2\ln\hat{L}
    = 2k
      + n\ln(2\pi)
      + n\ln\!\left(\frac{\mathrm{RSS}}{n}\right)
      + n.
\]
The terms $n\ln(2\pi) + n$ are identical for all candidate models since $n$
is fixed by the dataset; they cancel exactly in any pairwise AIC comparison
$\Delta\mathrm{AIC} = \mathrm{AIC}_A - \mathrm{AIC}_B$ and are therefore
discarded~\cite{burnham2002aic}, yielding the working formula
\begin{equation}
    \mathrm{AIC} = n\ln\!\left(\frac{\mathrm{RSS}}{n}\right) + 2k.
    \label{eq:aic}
\end{equation}
The model with the lowest AIC is preferred, as it achieves the best balance
between goodness of fit ($\mathrm{RSS}$) and model complexity ($k$): a
model with more parameters always reduces $\mathrm{RSS}$, but incurs a
penalty of $2k$ that grows linearly with complexity.

Across the experimental dataset, the stretched exponential consistently
achieves the lowest AIC, yielding the best-fit parameters
\[
    c = 0.333, \quad a = 0.672, \quad
    \tau = 10{,}527\,\text{s}, \quad \beta = 3.00.
\]
The value $\beta = 3$ produces a characteristically flat shoulder at early
times followed by a sharp decay: the fidelity remains above $0.9$ for
approximately the first $7{,}000$\,s before falling steeply. The
single-exponential model used throughout the main body of this work
(Assumption~\ref{assumption.initialfidelityexponential}) is therefore a
conservative approximation: it predicts faster decay than is observed at
intermediate times, yielding a shorter estimated permissible activation
period and hence a conservative bound on throughput.

\section{Proof of Theorem~\ref{theorem.equalactivation}}\label{appendix.proofs}
\begin{proof}
    From Assumption~\ref{assumption.endtoendfidelity} with $\mathcal{U} = 1$ we have
\begin{equation}\label{eq:appendixb_initial}
    F_{ete} (a_e) = \frac{1}{4} \left[ 1 + 3 \prod_{i=0}^{N-1} \left( \frac{4 \ F_i (a_e) - 1}{3} \right) \right].
\end{equation}
Let the quantum network require an end-to-end fidelity threshold of $F_{ete}^{th}.$ So
\begin{equation}
    F_{ete}^{th} \leq F_{ete} (a_e) = \frac{1}{4} \left[ 1 + 3\prod_{i=0}^{N-1} \left( \frac{4 \ F_i (a_e) - 1}{3} \right) \right]
\end{equation}
\begin{equation}
    F_{ete}^{th} \leq \frac{1}{4}\left[ 1 + 3\prod_{i=0}^{N-1} \left( \frac{4 \ F_i (a_e) - 1}{3} \right) \right]
\end{equation}
\begin{equation}
    \Rightarrow 3^N \left( \frac{4 F_{ete}^{th} -1}{3} \right) \leq \prod_{i=0}^{N-1} \left(4 F_i -1\right)
\end{equation}
\begin{equation}
    \Rightarrow \frac{4^N}{3^N} \left( \frac{3}{4 F_{ete}^{th} -1} \right)  \geq \frac{ 1}{ \prod_{i=0}^{N-1} \left( F_i - \frac{1}{4} \right)}.
\end{equation}
Multiplying by $\left( F_e^M - \frac{1}{4} \right)^N$ on both sides implies
\begin{equation}
     \frac{4^N}{3^N} \left( \frac{3}{4 F_{ete}^{th} -1} \right) \left( F_e^M - \frac{1}{4} \right)^N \geq \frac{ \left( F_e^M - \frac{1}{4} \right)^N}{ \prod_{i=0}^{N-1} \left( F_i - \frac{1}{4} \right)}.
\end{equation}
Taking logarithms on both sides and using the definition of $L_i = \ln{\left( \frac{ F_e^M - \frac{1}{4}}{F_i - \frac{1}{4}} \right)}$ and $A =  F_e^M - \frac{1}{4}$, we have
\begin{equation}
    \ln{\left[ \frac{4^{N-1}}{3^{N-1}} \frac{\left( F_e^M - \frac{1}{4}\right)^N}{F_{ete}^{th} - \frac{1}{4}} \right] } \geq \sum_{i=0}^{N-1} L_i,
\end{equation}
\begin{equation}\label{eq:sumLi}
    \boxed{\sum_{i=0}^{N-1} L_i \leq (N-1) \ln{\left( \frac{4}{3} \right)} + N \ln{A} - \ln{ \left( F_{ete}^{th} - \frac{1}{4} \right)}    }.
\end{equation}
In the above equation, the LHS denotes the contribution from all the links with an upper bound given by the RHS. Considering equal contribution from each link, we have 
\begin{equation}
    N L_e \leq (N-1) \ln{\left( \frac{4}{3} \right)} + N \ln{A} - \ln{ \left( F_{ete}^{th} - \frac{1}{4} \right)}. 
\end{equation}
The maximum possible value for $L_e$ is for the equality, hence
\begin{equation}\label{eq:omegaL}
    \boxed{\Omega_L = L_e = \left( \frac{N-1}{N} \right) \ln{ \left( \frac{4}{3} \right)} + \ln{(A)}-\frac{1}{N} \ln{\left( F_{ete}^{th} - \frac{1}{4}\right)}}.
\end{equation}
\end{proof}

\section{Optimal point at $\gamma+1$ layer of optimization is greater than or equal to at $\gamma$ layer}\label{appendix:optimal_point_proof}
\begin{proof} 
From Theorem~\ref{theorem.chain}, the recurrence relation between the optimal point is
\begin{equation}
    \Omega_{\pi}^{\gamma+1} = \frac{1}{k} \left[ 
 (j+k) \; \Omega_{\pi}^{\gamma} - \sum_j L_j^{th} \right],
\end{equation}
\begin{equation}\label{eq.recurrence_optimal_reduced}
    \Rightarrow k \; \Omega_{\pi}^{\gamma+1}=(j+k) \;\Omega_{\pi}^{\gamma} - \sum_j L_j^{th}.
\end{equation}
Now since $L_j^{th} \leq \Omega_{\pi}^{\gamma}$, let $L_j^{th} = \Omega_{\pi}^{\gamma} - \epsilon_j$ where $\epsilon_j \geq 0$ dictates individual closeness of initial fidelity thresholds of each link to the optimal point of the path $\pi$.

From Eq.~\eqref{eq.recurrence_optimal_reduced} we get,
\begin{equation}
    k \; \Omega_{\pi}^{\gamma+1} = (j+k) \; \Omega_{\pi}^{\gamma} - \sum_j \left( \Omega_{\pi}^{\gamma} - \epsilon_j \right)
\end{equation}
\begin{equation}
    \Rightarrow k \; \Omega_{\pi}^{\gamma+1} = (j+k) \; \Omega_{\pi}^{\gamma} - j \; \Omega_{\pi}^{\gamma} + \sum_j \epsilon_j
\end{equation}
\begin{equation}
    \Rightarrow \Omega_{\pi}^{\gamma+1} = \Omega_{\pi}^{\gamma} + \frac{1}{k} \left( \sum_j \epsilon_j \right)
\end{equation}
\begin{equation}
\boxed{
    \Omega_{\pi}^{\gamma+1} \geq \Omega_{\pi}^{\gamma} }.
\end{equation}
\end{proof}

\section{Recursive Threshold Allocation Algorithm}\label{appendix:allocation_algorithm}
\begin{remark}[Explanation of the Algorithmic Steps]
\label{remark.explanation}
Algorithm~\ref{alg:QLO} provides a constructive implementation of the allocation
rules derived in Theorem~\ref{theorem.chain} for a given quantum network path $\pi$
consisting of $N$ links. The goal is to determine the per-link allocation values
$\{L_i\}$ that maximise end-to-end throughput while respecting both the individual
link fidelity thresholds and the global end-to-end fidelity constraint.
The algorithm proceeds iteratively by fixing links whose allocations are unambiguous
under the current optimal point and updating the optimal point whenever the global
constraint becomes active.

\begin{enumerate}

\item \textbf{Initialisation.}
The algorithm sets the optimisation level $\gamma = 0$ and initialises the current
optimal point $\Omega$ to the base optimal point $\Omega_\pi$ of the path.
Two sets are maintained throughout:
(i)~the \emph{fixed set} $\mathcal{F}$, containing links whose allocations have been
permanently determined, and
(ii)~the \emph{remaining set} $\mathcal{R}$, containing links still subject to
optimisation.
Initially $\mathcal{F} = \emptyset$, $\mathcal{R} = \{1,\ldots,N\}$, and all
allocations are zero.

\item \textbf{Identification of sub-threshold links and Case~2 detection.}
At each iteration, the algorithm first partitions $\mathcal{R}$ into
\[
\mathcal{S} = \{ i \in \mathcal{R} \mid L_i^{th} \le \Omega \} \ \text{and} \
\mathcal{R} \setminus \mathcal{S} = \{ i \in \mathcal{R} \mid L_i^{th} > \Omega \}.
\]
If $\mathcal{S} = \emptyset$, every remaining link has its individual threshold
strictly above $\Omega$.
This is the \emph{pure Case~2} of Theorem~\ref{theorem.chain}: the end-to-end
budget is tight enough that no link can be operated at its threshold, and the
throughput-maximising assignment is to clamp every remaining link to the current
optimal point, i.e.\ $L_i = \Omega$ for all $i \in \mathcal{R}$.
The algorithm terminates immediately with this assignment.

When $\mathcal{S} \neq \emptyset$, the links in $\mathcal{S}$ can safely operate
at their threshold values without violating the end-to-end constraint
(Cases~1 and~3 of Theorem~\ref{theorem.chain}).
Each $i \in \mathcal{S}$ is fixed at $L_i = L_i^{th}$ and moved from
$\mathcal{R}$ to $\mathcal{F}$.
If this exhausts $\mathcal{R}$, the algorithm terminates (Case~1).

\item \textbf{Feasibility check of remaining links.}
If links remain in $\mathcal{R}$ after fixing $\mathcal{S}$, the algorithm
evaluates whether they can also be assigned their thresholds without exceeding
the global budget. It computes
\[
S_F = \sum_{i \in \mathcal{F}} L_i
\qquad \text{and} \qquad
S_R^{th} = \sum_{i \in \mathcal{R}} L_i^{th}.
\]
If $S_F + S_R^{th} \le \mathcal{L}$, the end-to-end constraint is satisfied even
when all remaining links operate at their thresholds.
The algorithm assigns $L_i = L_i^{th}$ for all $i \in \mathcal{R}$ and terminates
(Case~3a of Theorem~\ref{theorem.chain}).

\item \textbf{Optimal point update.}
If $S_F + S_R^{th} > \mathcal{L}$, assigning all remaining links their thresholds
would violate the global constraint (Case~3b).
The allocations in $\mathcal{F}$ are fixed, and with $j = |\mathcal{F}|$ and
$k = |\mathcal{R}|$, the optimal point is updated as
\[
\Omega \;\gets\; \frac{1}{k}\!\left( (j+k)\,\Omega_\pi - S_F \right),
\]
which coincides with the recursive optimal point $\Omega_\pi^\gamma$ of
Theorem~\ref{theorem.chain}.
The optimisation level is incremented ($\gamma \gets \gamma+1$) and the procedure
repeats on the reduced set $\mathcal{R}$ with the updated $\Omega$.
Since $\Omega$ is non-decreasing across levels
(Appendix~\ref{appendix:optimal_point_proof}), the updated $\Omega$ will pull
additional links into $\mathcal{S}$ at the next iteration, guaranteeing progress
and finite termination.

\item \textbf{Termination and optimality.}
The algorithm terminates when $\mathcal{R} = \emptyset$, either via the
Case~2 early exit, the Case~1 and 3a threshold assignments, or after one or more
Case~3b optimal-point updates.
The resulting allocation satisfies the individual threshold constraints
$L_i \le L_i^{th}$, the global constraint $\sum_i L_i \le \mathcal{L}$, and
achieves the throughput optimum characterised by Theorem~\ref{theorem.chain}.

\end{enumerate}
\end{remark}

\section{Two-link quantum linear chain}\label{sec:two_link_example}

\begin{figure}[!t]
\centering
\includegraphics[width=\columnwidth]{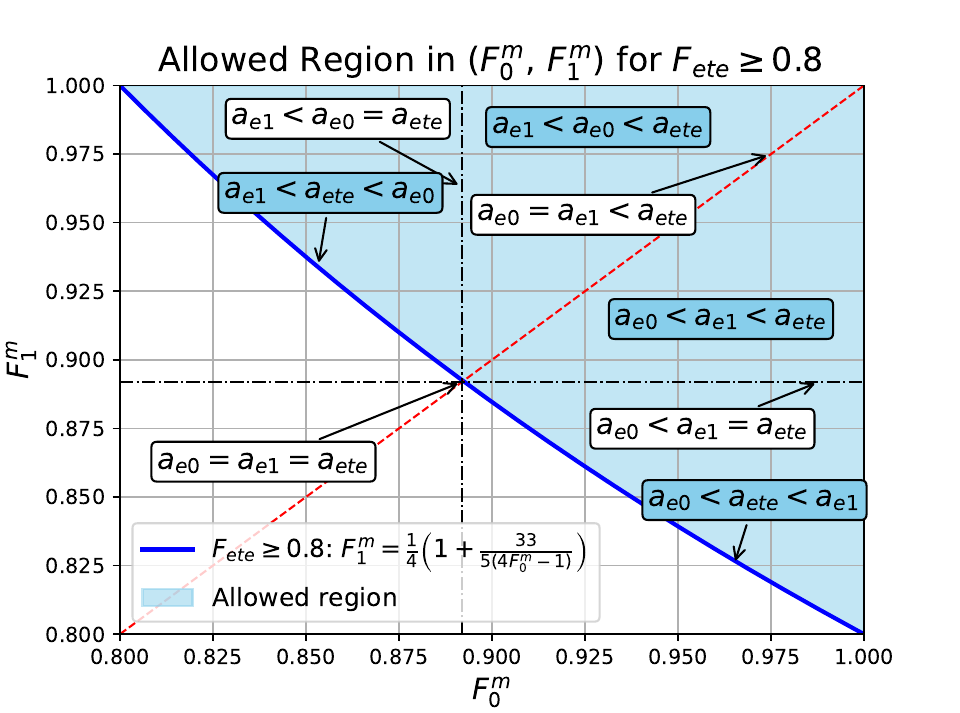}
\caption{Allowed region for a linear quantum network chain of two links with end-to-end fidelity threshold $F_{ete} \geq$ 0.8. The intersection point of black, red and blue lines is the optimal point $\Omega$, which marks the equality of the activation period with respect to the initial fidelity threshold on each link involved and end-to-end fidelity threshold.}
    \label{fig:F0_vs_F1}
\end{figure}

\begin{figure}[!t]
\centering
\includegraphics[width=\columnwidth]{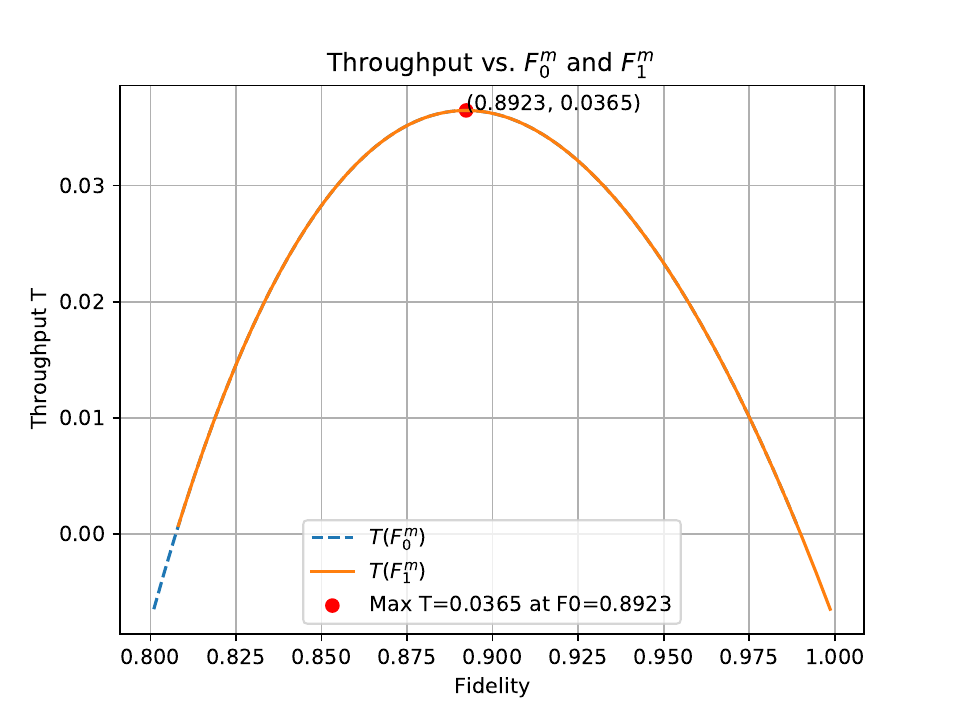}
\caption{Throughput variation at the boundary of the allowed region (dark blue line in Fig.~\ref{fig:F0_vs_F1}) for a two-link network with an end-to-end fidelity threshold of 0.8. Constants: $C = 1, c = 1, F_e^M = 0.99, \Gamma = 0.6$. The throughput is maximum at the optimal point $\Omega$ (red point).}
    \label{fig:maxT_F01}
\end{figure}

\begin{figure}[!t]
\centering
\includegraphics[width=\columnwidth]{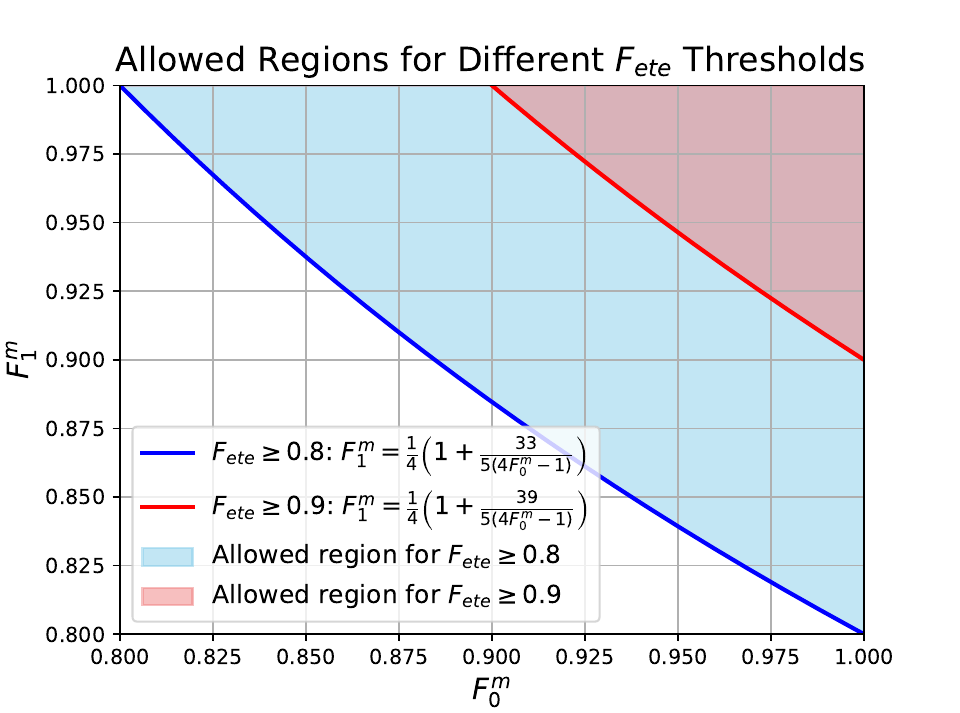}
\caption{Allowed region for a linear quantum network chain of two links with end-to-end fidelities $F_{ete} \geq$ 0.8 and $F_{ete} \geq$ 0.9.}
\label{fig:F0_vs_F1_0.8_0.9}
\end{figure}

Let's take an example of a two-link linear quantum chain and visualise the calibration-aware throughput optimisation problem with initial fidelity and end-to-end fidelity constraints.

Utilising Eq.~\eqref{eq:end_to_end_fidelity} for a two-link chain, we have the end-to-end fidelity of EPR pair delivery in terms of the initial fidelity of EPR pair generation at each of the links as:
\begin{equation}
    F_{ete} = \frac{1}{4} \left[ 1 + 3 \left( \frac{4 F_0^m - 1}{3} \right) \left( \frac{4 F_1^m - 1}{3} \right) \right].
\end{equation}
Assuming an end-to-end fidelity threshold for the EPR pair delivery across the two-link linear quantum chain to be 0.8, Fig.~\ref{fig:F0_vs_F1} depicts the allowed region for the selection of the initial fidelity of the generated EPR pair at each of the links, which would meet the required end-to-end fidelity threshold. The allowed region in Fig.~\ref{fig:F0_vs_F1} is divided into four distinct parts curated by $F_1^m=F_0^m$ (red dotted line) and $F_0^m=F_1^m=0.8923$ (black dotted lines). Point $\Omega_F = (F_0^m, F_1^m) =(0.8923, 0.8923)$, the central point on the graph (also known as Optimal point), is the point of maximum throughput as seen in Fig.~\ref{fig:maxT_F01}. Using the definitions of activation periods in Lemma~\ref{lemma.initialfidelityvariation} and Lemma~\ref{lemma.endtoendfidelityactivation} on this optimal point we have $a_{e0} = a_{e1} = a_{ete}$. Moving along the boundary of the allowed region introduces inequality among the activation periods with $a_{e1} < a_{ete} < a_{e0}$ along decreasing $F_0^m$ and $a_{e0} < a_{ete} < a_{e1}$ along increasing $F_0^m$. The lines separating parts in the allowed region work as a transition in the inequality of the activation period. For example, for a constant $F_0^m > 0.8923$, with increase in $F_1^m$, the inequality between the activation periods goes from $a_{e0} < a_{ete} < a_{e1}$ on the boundry to $a_{e0}<a_{e1}=a_{ete}$ at $F_1^m = 0.8923$. This is because of the inverse relation between activation period and fidelity. The increase in $F_1^m$ decreases $a_{e1}$ to make it equal to the $a_{ete}$. Further increasing $F_1^m$ at the same constant $F_0^m > 0.8923$ makes the activation period even smaller, which transitions to inequality of $a_{e0}< a_{e1} < a_{ete}$. Even further increasing $F_1^m$ leads to the transition line $F_1^m = F_0^m$ where the activation periods are related as $a_{e0}=a_{e1}<a_{ete}$. A slight increase now will transition the inequality to $a_{e1}< a_{e0}< a_{ete}$. The same observation can be done for any constant increasing $F_0^m$ with $F_1^m > 0.8923$. The size of the allowed region is dictated by the selection of the end-to-end fidelity threshold as shown in Fig.~\ref{fig:F0_vs_F1_0.8_0.9}. A higher end-to-end fidelity threshold reduces the area of the allowed region from which the initial fidelity of the EPR pair generated for each link can be selected.

\begin{figure*}[!t]
\centering
\includegraphics[width=2\columnwidth]{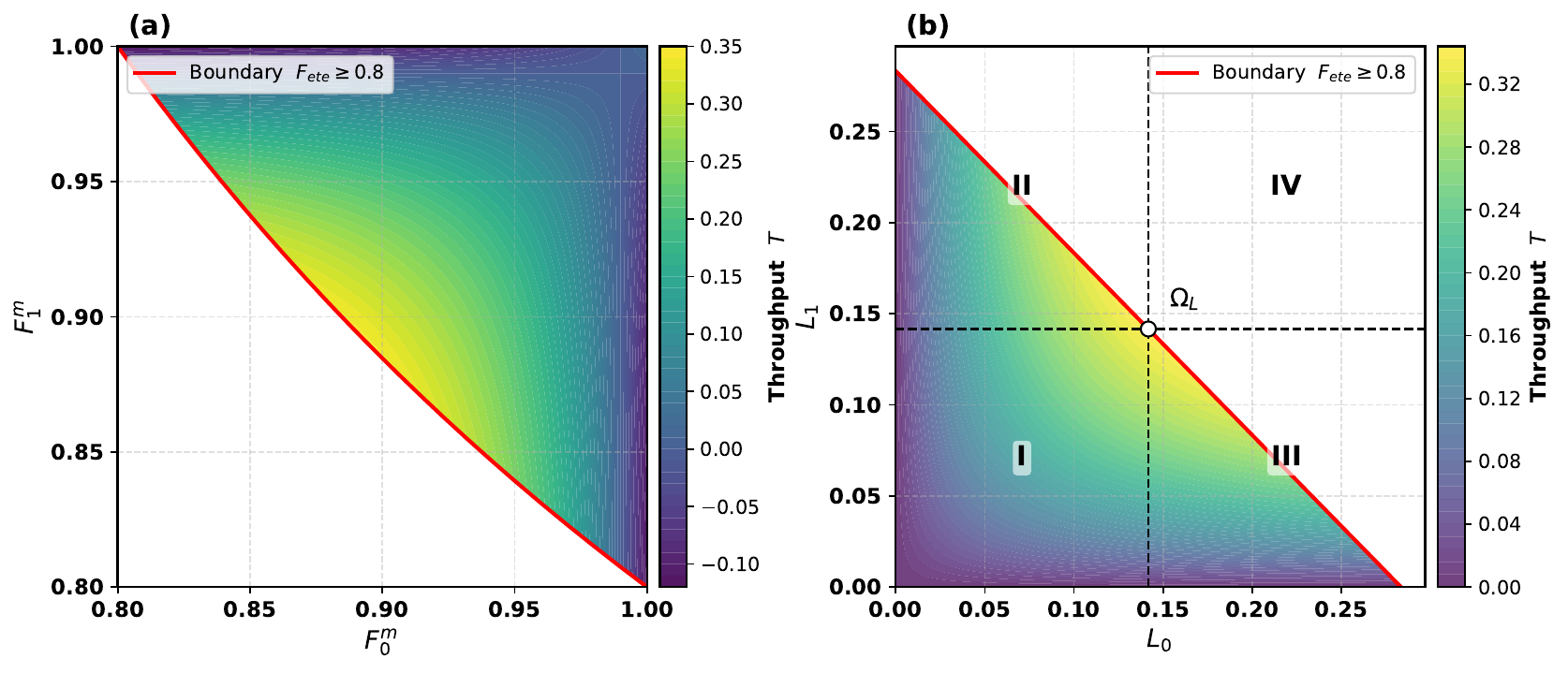}
\caption{Throughput heat maps of allowed regions for a two-link linear quantum network chain under end-to-end fidelity constraint visualised in (a) linear fidelity space, (b) Log-fidelity space.}
\label{fig:heatT_combined}
\end{figure*}
A more informative way of visualising allowed regions for a two-link network chain is done in Fig.~\ref{fig:heatT_combined}, which gives the heat map of the throughput achieved for each operating point in the allowed region. Fig.~\ref{fig:heatT_combined}a shows the linear fidelity variable space while Fig.~\ref{fig:heatT_combined}b shows the corresponding L-space. Similar to Fig.~\ref{fig:F0_vs_F1}, the central point on the boundary is the point of maximum throughput $\Omega$ or $\Omega_L$(marked in Fig.~\ref{fig:heatT_combined}b). Given that we now have the heat map for the throughput visually present, we can write the constraint of our throughput optimisation problem and proceed visually towards tackling the problem. 
Using Theorem~\ref{theorem.equalactivation}, for a two-link quantum linear chain, the allowed region in L-space is given by (see Eq.~\eqref{eq:sumLi} in Appendix~\ref{appendix.proofs}):
\begin{equation}
   \text{C1:} \ L_0 + L_1 \leq \ln{ \left( A^2 \frac{80}{33} \right) }.
\end{equation}
Remember that this equation is the manifestation of the end-to-end fidelity constraint in L-space.
Also, let the constraint on the initial fidelity of the generated EPR pair for each link as:
\begin{equation}
\begin{aligned}
    \text{C2:} \ L_0 &\leq L_0^{th}, \\
    \text{C3:} \ L_1 &\leq L_1^{th}.
\end{aligned}
\end{equation}
Now let us consider cases corresponding to each quadrant of the Fig.~\ref{fig:heatT_combined}b.
\begin{itemize}
    \item \textbf{Case $I$:} For $L_0^{th} \leq \Omega_L$ and $L_1^{th} \leq \Omega_L$, Constraint C1 is always satisfied, so according to Theorem~\ref{theorem.optimalthroughput}, the throughput is maximised while operating at the thresholds, i.e., choose $L_0 = L_0^{th}$ and $L_1 = L_1^{th}$.
    \item \textbf{Case $II$:} For $L_0^{th} < \Omega_L$ and $\Omega_L < L_1^{th}$, as seen in the Fig.~\ref{fig:heatT_combined}b, the area under this quadrant consists of allowed as well as not-allowed region (i.e., satisfy or violate Constraint C1). Hence, we have the following two sub-cases:
    \begin{enumerate}
        \item \textbf{Sub-case I:} If it lies in the allowed region, then the Constraint C1 is satisfied by definition. Then similar to case $I$ choose $L_0 = L_0^{th}$ and $L_1 = L_1^{th}.$
        \item \textbf{Sub-case II:} If it does not lie in the allowed region, Constraint C1 is violated. Then to maximize the throughput choose $L_0 = L_0^{th}$ while $L_1$ on the boundary of the allowed region at $L_0^{th}.$
    \end{enumerate}
    \item \textbf{Case $III$:} For $\Omega_L < L_0^{th}$ and $L_1^{th} < \Omega_L$, variable reverse case of case $II$.

    \item \textbf{Case $IV:$} For $\Omega_L < L_0^{th}$ and $\Omega_L < L_1^{th}$, this quadrant is entirely outside the allowed region, and it violates Constraint C1. To maximise the throughput, we simply choose $L_0 = L_1 = \Omega_L$, which is the point of highest throughput.
\end{itemize}
The above allocation equates to the general optimal allocation schedule sketched in Theorem~\ref{theorem.chain}.

\section{Minimum selection rule benchmark}\label{appendix.minimum_rule}
\begin{algorithm}[!t]
\caption{Minimum selection rule benchmark for shared-link calibration}
\label{alg:multi-path}
\begin{algorithmic}[1]
\REQUIRE Set of paths $P$; per-path QLO allocations
         $L_{\{\pi,e\}}$ for each $\pi \in P$, $e \in \pi$
\ENSURE Per-path link assignment $\{L_e^\pi : \pi \in P,\, e \in \pi\}$

\STATE \textit{// Pass 1: compute the running minimum per link across all paths}
\STATE Initialise $\mathrm{min\_val}[e] \gets +\infty$ for all $e$
\FORALL{$\pi \in P$}
  \FORALL{$e \in \pi$}
    \STATE $\mathrm{min\_val}[e] \gets \min\!\bigl(\mathrm{min\_val}[e],\; L_{\{\pi,e\}}\bigr)$
  \ENDFOR
\ENDFOR

\STATE \textit{// Pass 2: assign shared links their minimum; exclusive links are unchanged}
\FORALL{$\pi \in P$}
  \FORALL{$e \in \pi$}
    \IF{$|\Pi_e| \ge 2$}
      \STATE $L_e^\pi \gets \mathrm{min\_val}[e]$
    \ELSE
      \STATE $L_e^\pi \gets L_{\{\pi,e\}}$
    \ENDIF
  \ENDFOR
\ENDFOR

\RETURN $\{L_e^\pi : \pi \in P,\, e \in \pi\}$
\end{algorithmic}
\end{algorithm}

\begin{corollary}[Time complexity of Algorithm~\ref{alg:multi-path}]
\label{cor:multi-path-complexity}
Let $P$ denote the set of paths between the selected source-destination pairs and define the total
path length as
\[
    M \;=\; \sum_{\pi \in P} |\pi|,
\]
where $|\pi|$ is the hop count of path $\pi$.
Algorithm~\ref{alg:multi-path} runs in $O(M)$ time.
\end{corollary}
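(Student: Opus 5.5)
Both passes of Algorithm~\ref{alg:multi-path} have the same structure: a doubly nested loop over paths $\pi\in P$ and links $e\in\pi$, so each body executes exactly $M=\sum_{\pi\in P}|\pi|$ times. The proof therefore amounts to showing that every body costs $O(1)$, and then adding up the passes.

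\textbf{Data structures.} Store $\mathrm{min\_val}$ and a counter $|\Pi_e|$ in arrays or hash maps indexed by link identifiers. Only links that appear on some path need entries, and there are at most $M$ of them. This lets us initialise lazily instead of over all of $E$, so the initialisation step costs $O(M)$ and not $O(|E|)$. If one prefers a dense array over $E$, the bound becomes $O(M+|E|)$. On the grid instances every edge may be touched, so I would state the lazy-initialisation convention explicitly.

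\textbf{Pass 1.} The body is a single comparison and assignment, $\mathrm{min\_val}[e]\gets\min(\mathrm{min\_val}[e],L_{\{\pi,e\}})$, which is $O(1)$. The counts $|\Pi_e|$ can be accumulated in the same pass by incrementing a counter at each incidence $(\pi,e)$. Since each path is a simple path, a link appears at most once per path, so the counter equals $|\Pi_e|$ exactly.

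\textbf{Pass 2.} The test $|\Pi_e|\ge 2$ is then an $O(1)$ lookup, and the assignment that follows is also $O(1)$. Pass~1 and Pass~2 are each $O(M)$, the output has exactly $M$ entries, and the total is therefore $O(M)$.

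\textbf{Main obstacle.} The only subtle step is the membership test $|\Pi_e|\ge 2$. The algorithm as written does not show where $\Pi_e$ comes from. Computing it naively, by scanning all paths for each incidence, would cost $O(M\cdot|P|)$. The proof must therefore argue explicitly that $|\Pi_e|$ is precomputed in Pass~1, or in an equivalent $O(M)$ sweep.
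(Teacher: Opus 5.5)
Your proof is correct and is essentially the paper's argument: each pass visits every one of the $M$ (path, link) incidences exactly once and does $O(1)$ work per visit. Your extra bookkeeping makes rigorous two points the paper glosses over. You accumulate $|\Pi_e|$ during Pass~1, so the test in Pass~2 is an $O(1)$ lookup. You also initialise $\mathrm{min\_val}$ lazily, so setup does not cost $O(|E|)$; the paper simply asserts initialisation is at most $O(M)$ and treats the conditional as constant time.
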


\begin{proof}
Pass~1 iterates over every path $\pi \in P$ and, for each link
$e \in \pi$, performs a single comparison and conditional update of
$\mathrm{min\_val}[e]$, each costing $O(1)$.
The total number of (path, link) pairs visited is exactly $M$, so Pass~1
costs $O(M)$.
Pass~2 likewise iterates over every (path, link) pair once and performs a
single conditional write per pair, costing $O(M)$.
All other operations (initialisation, return) are at most $O(M)$.
Hence, the overall running time is $O(M)$.
\end{proof}
\begin{remark}[Explanation of the minimum selection rule]
\label{remark:multi-path-explanation}
Algorithm~\ref{alg:multi-path} assigns each shared link $e$ the minimum
QLO allocation across all paths, ensuring
no path is forced to operate beyond its individually feasible region.
Exclusive links (traversed by exactly one path) retain their QLO allocations
unchanged.

The algorithm proceeds in two passes over the path-link incidence structure.
In \emph{Pass~1}, it scans every (path, link) pair, $M$ pairs in total, and
maintains a running minimum $\mathrm{min\_val}[e]$ for each link $e$. For a
shared link ($|\Pi_e| \ge 2$), successive updates from different paths produce
the true minimum; for an exclusive link ($|\Pi_e| = 1$), the single update
simply records that path's QLO value. After this pass,
\[
    \mathrm{min\_val}[e] \;=\; \min_{\pi \in P:\, e \in \pi}
    L_{\{\pi, e\}}, \qquad \forall\, e.
\]

In \emph{Pass~2}, every path independently reads $\mathrm{min\_val}$ and
updates its own per-path allocations. For exclusive links, the written value
equals the QLO allocation (no effective change); for shared links, it may
differ, constituting a genuine update. Because each path in Pass~2 reads
only the globally precomputed $\mathrm{min\_val}$ and writes only its own
local allocations, Pass~2 is parallel across paths.

Together, the two passes resolve all shared-link conflicts in a single round
with no iterative coordination between paths. While this rule does not yield a
globally optimal solution for general quantum networks, it provides a simple,
deterministic, and computationally inexpensive benchmark against which more
sophisticated orchestration heuristics can be evaluated.
\end{remark}

\section{Number of Orchestration Cases}\label{appendix:orchestration_cases}

\begin{figure*}[!t]
\centering
\includegraphics[width=2\columnwidth]{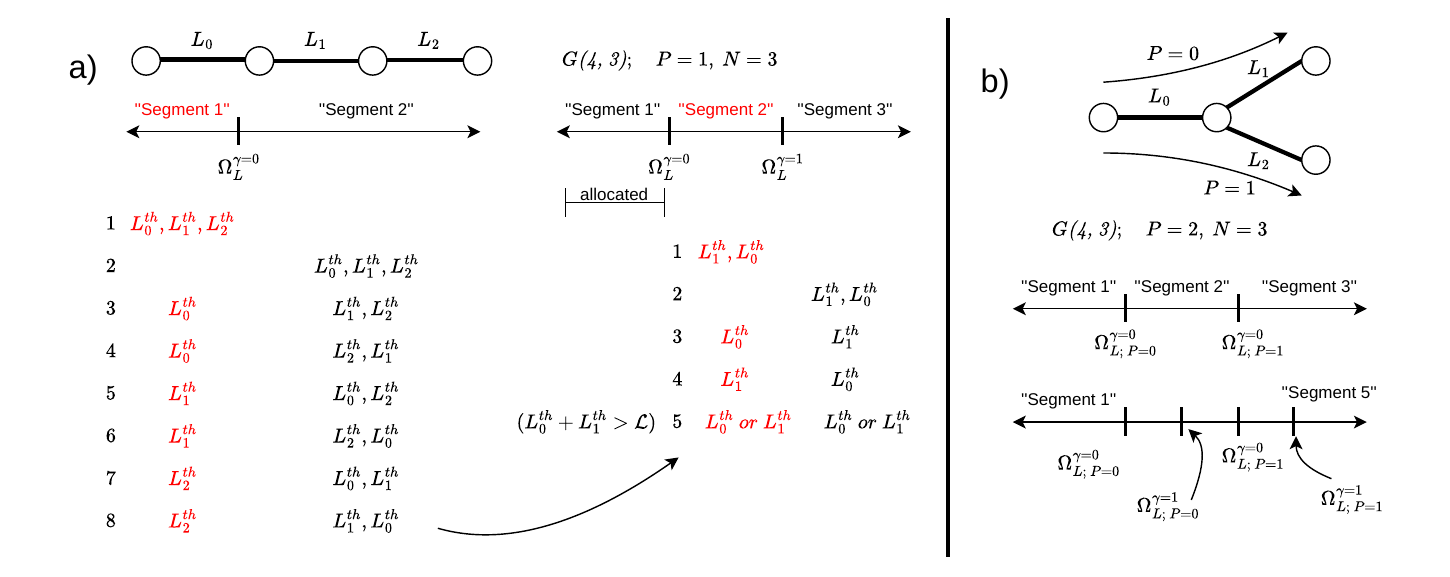}
\caption{Number of orchestration cases, as characterized in Corollary~\ref{cor:number-of-cases}, for \textbf{(a) a single path with three links} ($P=1, \ N=3$) and \textbf{(b) two paths with three links} including one shared link ($P=2, \ N=3$). For each network, the end-to-end fidelity constraint places the optimal point $\Omega_L$ of every path on the real axis, partitioning it into segments. Additional constraints on the initial entanglement fidelity for each link induce distinct orchestration cases, corresponding to different combinations of selecting among these segments. Optimisation at layer $\gamma+1$ further updates the optimal points on the same real axis, incorporating additional initial fidelity threshold(s) and thereby creating finer segmentations.}
\label{fig:orchestration_cases}
\end{figure*}

\begin{corollary}
\label{cor:number-of-cases}
Consider the quantum network from Theorem~\ref{theorem.chain}, now extended to $P$ overlapping 
source-to-destination paths, and let $N$ be the total number of links. 
At the \emph{primary} (i.e., $\gamma=0$) level of optimization in Theorem~\ref{theorem.chain}, 
the total number of distinct ``cases'' or configurations that can arise across these $P$ paths is
\[
    P! \;\bigl(P + 1\bigr)^N.
\]
Furthermore, for every subsequent level of optimization $\gamma \ge 1$, the new configurations 
are counted by 
\[
1 \;+\; Q_{\gamma}!\;\bigl(Q_{\gamma} + 1\bigr)^{k_{\gamma}},
\]
where $Q_{\gamma}$ is the number of paths still requiring link-allocation decisions at level~$\gamma$, 
and $k_{\gamma}$ is the total number of links whose initial fidelities remain unallocated from the previous level.
\end{corollary}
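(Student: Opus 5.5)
The plan is a combinatorial counting argument over the real axis on which the optimal points live, following the picture in Fig.~\ref{fig:orchestration_cases}.

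At level $\gamma=0$, each of the $P$ paths contributes one optimal point $\Omega_{\pi_1},\dots,\Omega_{\pi_P}$, computed from Eq.~\eqref{eq:omega_l}. I first count the relative orderings of these $P$ points on the L-axis. Treating them as distinct, which is the generic case, there are $P!$ orderings. Each fixed ordering cuts the axis into $P+1$ open segments.

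By Theorem~\ref{theorem.chain}, the allocation of a link depends on its threshold only through its position relative to the optimal points, i.e.\ $L_e^{th}\le\Omega_\pi$ or $L_e^{th}>\Omega_\pi$ for each relevant $\pi$. So the segment containing $L_e^{th}$ determines which case (1, 2 or 3) the link falls into with respect to every path. Each of the $N$ links independently picks one of the $P+1$ segments, giving $(P+1)^N$ threshold configurations per ordering.

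Multiplying the two counts gives $P!\,(P+1)^N$. For $P=1$ this reduces to $2^N$, which is exactly the quadrant count of Appendix~\ref{sec:two_link_example} for $N=2$ (four cases). I will state this as a sanity check. Boundary coincidences such as $L_e^{th}=\Omega_\pi$ are absorbed into a segment by the $\le$ convention of Algorithm~\ref{alg:QLO}, so they create no extra cases.

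For $\gamma\ge 1$, I apply the same argument to the subproblem left after Case~3b. By Appendix~\ref{appendix:optimal_point_proof}, each of the $Q_\gamma$ unresolved paths gets a new optimal point $\Omega^{\gamma}_\pi\ge\Omega^{\gamma-1}_\pi$. Only the $k_\gamma$ still-unallocated links have thresholds that matter, so repeating the ordering-times-segments count gives $Q_\gamma!\,(Q_\gamma+1)^{k_\gamma}$ refined configurations. The additive $1$ counts the terminating outcome in which the feasibility check $S_F+S_R^{th}\le\mathcal{L}$ succeeds: all remaining links are set to their thresholds and no further segmentation occurs.

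The main obstacle is justifying that the count should be additive across levels and not multiplicative. I also need to justify that the $P!$ orderings are all genuinely attainable, since the $\Omega_\pi$ depend on $N_\pi$ and $F^{th}_{ete,\pi}$, which can be chosen freely. I would handle the latter by exhibiting parameter choices realizing any ordering. For the former, I would frame the corollary explicitly as counting \emph{new} configurations introduced at each level, consistent with the wording of the statement.
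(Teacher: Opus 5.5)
Your proposal is correct and follows essentially the same argument as the paper: $P!$ orderings of the per-path points on the L-axis, $(P+1)^N$ assignments of the $N$ link thresholds to the resulting segments, and at each level $\gamma\ge 1$ the same count over the $Q_\gamma$ unresolved paths and $k_\gamma$ unallocated links, plus one terminating (Case~3a-type) configuration. Your additions are refinements the paper omits rather than a different route: you order the optimal points $\Omega_\pi$ instead of the raw thresholds $F^{th}_{ete,\pi}$, you include the $P=1$, $N=2$ four-quadrant sanity check, and you argue that every ordering is realisable.
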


\begin{proof}
We first prove the count for the primary level $\gamma = 0$. 

\begin{enumerate}
    \item[\textbf{(1)}] \textbf{Permutations of Path Thresholds.} 
    Since there are $P$ overlapping paths, each has its own end-to-end fidelity threshold. 
    We can order these $P$ thresholds in $P!$ ways (e.g., from largest to smallest or any other ordering). 

    \item[\textbf{(2)}] \textbf{Assigning $N$ Links to ``Segments.''} 
    Once a particular ordering of these $P$ thresholds is chosen, the real axis of possible 
    link-fidelity values is partitioned into $P+1$ segments:
    \begin{itemize}
        \item one below the smallest threshold, 
        \item one above the largest threshold, 
        \item and $P-1$ in-between segments (i.e., between successive thresholds in the sorted list).
    \end{itemize}
    Each of the $N$ links (with its initial fidelity threshold $L_i^{th}$) is then assigned to exactly one 
    of these $P+1$ segments. This yields $(P + 1)^N$ possible assignments.
\end{enumerate}

By the multiplication principle, the total number of distinct orchestration cases at level $\gamma=0$ is 
\[
   P! \;\times\; (P + 1)^N 
   \;=\; P!\;\bigl(P + 1\bigr)^N.
\]

For each \emph{subsequent} level of optimization $\gamma \ge 1$, Theorem~\ref{theorem.chain} 
(Case~3b) shows that a new “configuration” arises whenever the sum of thresholds 
(for those links that fell on both sides of $\Omega_{\pi}^{\gamma}$) still violates or just satisfies 
the end-to-end constraint. Effectively, we freeze the already-allocated links 
and focus on $Q_{\gamma}$ paths that remain active (i.e., have unallocated links), with $k_{\gamma}$ 
such links left to be assigned. These paths and links follow the same combinatorial logic 
as the primary level but with \emph{fewer} total paths/links, yielding $Q_{\gamma}!\,(Q_{\gamma} + 1)^{k_{\gamma}}$ 
configurations. Additionally, there is exactly one extra configuration 
corresponding to the situation where all thresholds are forced into compliance by the 
end-to-end budget (an immediate termination). Hence the total for level~$\gamma$ is
\[
   1 \;+\; Q_{\gamma}!\;\bigl(Q_{\gamma} + 1\bigr)^{k_{\gamma}}.
\]

This completes the count of orchestration cases across all levels $\gamma$.
\end{proof}
Fig.~\ref{fig:orchestration_cases} depicts the number of orchestration cases involving a three-link network.

\section{Throughput as a family of hyperboloids}\label{appendix:hyperboloids}
From Lemma~\ref{lemma.throughputN}, the throughput for a linear quantum chain network with heterogeneous initial fidelity generation is given by
    \begin{equation}\label{eq:throughputLgeneral}
        T = C \prod_{i=0}^{N-1} \left( \frac{L_i}{L_i + K} \right),
    \end{equation} where we transform the variables to L-space with $L_i = \ln{ \left( \frac{F_e^M - \frac{1}{4}}{F_i - \frac{1}{4}} \right)}$, $K=\Gamma c$.
    
Considering the two–link case, we obtain
\begin{equation}
    T = C \left( \frac{L_0}{L_0+K} \right)\left( \frac{L_1}{L_1+K} \right).
\end{equation}
Rearranging this expression, we can write
\begin{equation}\label{eq:hyperbola}
    \boxed{L_0L_1 \Bigl( 1 - \frac{T}{C} \Bigr) - \frac{T K}{C}(L_0+L_1) - \frac{K^2T}{C} = 0.}
\end{equation}
Thus, for fixed $T$, $C$, and $K$, the above equation defines a conic section in the $(L_0,L_1)$–plane. Its quadratic form in $L_0$ and $L_1$ has discriminant
\[
B^2-4AC=\left(1-\frac{T}{C}\right)^2,
\]
so that, letting
\[
\alpha \triangleq 1-\frac{T}{C},
\]
we have $B^2-4AC=\alpha^2$. In our case, with $0<T<1$ and $C=1$, it follows that $\alpha>0$. Hence, the throughput curves for the two–link case form a family of hyperbolas. By extension, the general $N$–link throughput defines a family of hyperboloids in $N$–dimensional space.

\subsection{Maximum throughput is the shortest distance from the optimal point $\Omega_L$}
\begin{proof} 
From Eq.~\eqref{eq:hyperbola} for the two–link throughput model, we have
\begin{equation}
    L_0L_1 \Bigl( 1 - \frac{T}{C} \Bigr) - \frac{T K}{C}(L_0+L_1) - \frac{K^2T}{C} = 0
\end{equation}
defines a hyperbola in the $(L_0,L_1)$–plane. In the symmetric (or balanced) case, we set
\[
L_0=L_1=L,
\]
so that the hyperbola reduces to
\begin{equation}
    L^2\Bigl( 1 - \frac{T}{C} \Bigr) - \frac{2TK}{C} \,L - \frac{K^2T}{C} = 0.
\end{equation}

From Eq.~\eqref{eq:omegaL}, the optimal point for the two-link case is as follows
\begin{equation}
\boxed{\Omega_L = L_e = \frac{1}{2}\ln\left(\frac{4}{3}\right) + \ln(A) - \frac{1}{2}\ln\left(F_{ete}^{th} - \frac{1}{4}\right),}
\end{equation}
Then the corresponding optimal operating point in $(L_0,L_1)$–space is 
\[
P = (\Omega_L,\Omega_L).
\]

The throughput at any symmetric operating point is
\begin{equation}
T = C \left(\frac{L}{L+K}\right)^2.
\end{equation}
In particular, when $L=\Omega_L$, the maximum throughput is
\begin{equation}
T^* = C \left(\frac{\Omega_L}{\Omega_L+K}\right)^2.
\end{equation}

The Euclidean distance between an arbitrary point $(L_0,L_1)$ on the hyperbola and the optimal point $P=(\Omega_L,\Omega_L)$ is
\begin{equation}
d = \sqrt{(L_0-\Omega_L)^2 + (L_1-\Omega_L)^2}.
\end{equation}
In the symmetric case ($L_0=L_1=L$), this becomes
\begin{equation}
d = \sqrt{2}\,|L-\Omega_L|.
\end{equation}
Clearly, the minimum distance is achieved when
\begin{equation}
\boxed{L = \Omega_L}.
\end{equation}

\end{proof}

}


\begin{thebibliography}{1}
\bibliographystyle{IEEEtran}

\bibitem{vk2025quantuminternet} Kumar, Vinay, et al. "Quantum internet: Technologies, protocols, and research challenges." International Journal of Networked and Distributed Computing 13.2 (2025): 22.

\bibitem{rohde2025quantuminternet} Rohde, Peter P., et al. "The Quantum Internet (Technical Version)." arXiv preprint arXiv:2501.12107 (2025).

\bibitem{vkthesis} Kumar, V., “Making Quantum Networks Work: Routing, Calibration,
and Programmable Quantum Repeaters.” (2026). Ph.D. dissertation,
University of Pisa, submitted.

\bibitem{meddeb2025quantuminternet} Meddeb, Aref. "Quantum internet building blocks state of research and development." Computer Networks 261 (2025): 111151.

\bibitem{lalindis25} Lal, Nijil, et al. "Indistinguishability of arbitrary photons for entanglement generation in scalable quantum networks." Optics Express 33.26 (2025): 54501-54512.

\bibitem{zukowski1993swapping} Zukowski, Marek, et al. "" Event-ready-detectors" Bell experiment via entanglement swapping." Physical review letters 71.26 (1993).

\bibitem{pompili2021multinode} Pompili, Matteo, et al. "Realization of a multinode quantum network of remote solid-state qubits." Science 372.6539 (2021): 259-264.

\bibitem{viktor2023trappedionentanglement} Krutyanskiy, Viktor, et al. "Entanglement of trapped-ion qubits separated by 230 meters." Physical Review Letters 130.5 (2023): 050803.

\bibitem{knaut2024nanophotonicentanglement} Knaut, Can M., et al. "Entanglement of nanophotonic quantum memory nodes in a telecom network." Nature 629.8012 (2024): 573-578.

\bibitem{craddock2024newyork} Craddock, Alexander N., et al. "Automated distribution of polarization-entangled photons using deployed New York City fibers." PRX Quantum 5.3 (2024): 030330.

\bibitem{hermans2022teleportation} Hermans, S. L. N., et al. "Qubit teleportation between non-neighbouring nodes in a quantum network." Nature 605.7911 (2022): 663-668.

\bibitem{vk2025endtoend} Kumar, Vinay, et al. "Routing in quantum networks with end‐to‐end knowledge." IET Quantum Communication 6.1 (2025): e70000.

\bibitem{vk2024mixed} Kumar, Vinay, et al. "Routing in Quantum Repeater Networks with Mixed Efficiency Figures." 2024 IEEE Future Networks World Forum (FNWF). IEEE, 2024.

\bibitem{pant2019routing} Pant, Mihir, et al. "Routing entanglement in the quantum internet." npj Quantum Information 5.1 (2019): 25.

\bibitem{inestabuffer25} Iñesta, Álvaro G., et al. "Entanglement buffering with multiple quantum memories." arXiv preprint arXiv:2502.20240 (2025).

\bibitem{durpurification99} Dür, Wolfgang, et al. "Quantum repeaters based on entanglement purification." Physical Review A 59.1 (1999): 169.

\bibitem{peranic2023polcomp} Peranić, Matej, et al. "A study of polarization compensation for quantum networks." EPJ quantum technology 10.1 (2023): 30.

\bibitem{dowling2023nonlocalpol} Dowling, Evan, et al. "Non-local polarization alignment and control in fibers using feedback from correlated measurements of entangled photons." Optics Express 31.2 (2023): 2316-2329.

\bibitem{zhou2025polcomp} Zhou, Xiao-yan, et al. "Efficient polarization-entangled state compensation in quantum entanglement distribution." Optics Express 33.11 (2025): 23204-23213.

\bibitem{chung2022ieqnet} Chung, Joaquin, et al. "Design and implementation of the Illinois Express quantum metropolitan area network." IEEE Transactions on Quantum Engineering 3 (2022): 1-20.

\bibitem{schon2024quantnet} Schon, Damian, et al. "The QUANT-NET testbed development and preliminary results." 2024 IEEE International Conference on Quantum Computing and Engineering (QCE). Vol. 1. IEEE, 2024.

\bibitem{islam2025arqnet} Islam, Md Shariful, et al. "Experimental Demonstration of Software-Orchestrated Quantum Network Applications over a Campus-Scale Testbed." arXiv preprint arXiv:2511.01247 (2025).

\bibitem{azuma2023rmp} Azuma, Koji, et al. "Quantum repeaters: From quantum networks to the quantum internet." Reviews of Modern Physics 95.4 (2023): 045006.

\bibitem{xiao2024psc} Xiao, Zirui, et al. "Purification scheduling control for throughput maximization in quantum networks." Communications Physics 7.1 (2024): 307.

\bibitem{zhang2025linkconfig} Zhang, Qiaolun, et al. "Link configuration for fidelity-constrained entanglement routing in quantum networks." IEEE INFOCOM 2025-IEEE Conference on Computer Communications. IEEE, 2025.

\bibitem{vardoyan2022qnum} Vardoyan, Gayane, and Stephanie Wehner. "Quantum network utility maximization." 2023 IEEE International Conference on Quantum Computing and Engineering (QCE). Vol. 1. IEEE, 2023.

\bibitem{panigrahy2025distributed} Panigrahy, Nitish K., et al. "A Framework for Distributed Resource Allocation in Quantum Networks." arXiv preprint arXiv:2510.09371 (2025).

\bibitem{wang2025fair} Wang, Zhaozhen, et al. "Quantum network optimization: From optimal routing to fair resource allocation." Proceedings of the ACM on Measurement and Analysis of Computing Systems 9.2 (2025): 1-26.

\bibitem{gauthier2024resourceallocation} Gauthier, Scarlett, Thirupathaiah Vasantam, and Gayane Vardoyan. "An on-demand resource allocation algorithm for a quantum network hub and its performance analysis." 2024 IEEE International Conference on Quantum Computing and Engineering (QCE). Vol. 1. IEEE, 2024.

\bibitem{ercetin2026fidelityage} Ercetin, Ozgur, and Zafer Gedik. "Fidelity-Age-Aware Scheduling in Quantum Repeater Networks." arXiv preprint arXiv:2602.09562 (2026).

\bibitem{akaike1974aic} Akaike, Hirotugu. "A new look at the statistical model identification." IEEE transactions on automatic control 19.6 (2003): 716-723.

\bibitem{huistokes2009} Rongqing Hui, Maurice O'Sullivan,
Chapter 2 - Basic Instrumentation for Optical Measurement,
Fiber Optic Measurement Techniques,
Academic Press,
2009,
Pages 129-258,
ISBN 9780123738653,
https://doi.org/10.1016/B978-0-12-373865-3.00002-1.

\bibitem{stokestojones}\url{https://physics.stackexchange.com/questions/238957/converting-stokes-parameters-to-jones-vector}

\bibitem{burnham2002aic} Burnham, Kenneth P., and David R. Anderson, eds. Model selection and multimodel inference: a practical information-theoretic approach. New York, NY: Springer New York, 2002.
\end{thebibliography}
\end{document}